\def\BibTeX{{\rm B\kern-.05em{\sc i\kern-.025em b}\kern-.08em
    T\kern-.1667em\lower.7ex\hbox{E}\kern-.125emX}}
\newtheorem{theorem}{Theorem}
\newtheorem{definition}{Definition}
\newtheorem{lemma}{Lemma}
\newtheorem{corollary}{Corollary}
\newtheorem{proposition}{Proposition}
\newtheorem{assumption}{Assumption}
{\theorembodyfont{\upshape}
	\newtheorem{remark}{Remark}
	\newtheorem{example}{Example}
	
}
\def\qed{\hfill \vrule height 5pt width 5pt depth 0pt \medskip}
\newcommand{\bite}{\begin{itemize}}
\newcommand{\eite}{\end{itemize}}
\newcommand{\benu}{\begin{enumerate}}
\newcommand{\eenu}{\end{enumerate}}
\let\saved@bibitem\@bibitem
\definecolor{darkpastelgreen}{rgb}{0.01, 0.75, 0.24}
\newcommand{\proof}{\noindent {\bf Proof. }}
\title{\LARGE \bf
Wisdom of Crowds Effects under Antagonistic Interactions and Correlated Opinions}
\author{Muhammad Ahsan Razaq$^{1}$ and Claudio Altafini$^{1}$
\thanks{Work supported in part by grants from the Swedish Research Council (grants n. 2020-03701 and 2024-04772 to C.A.).}
\thanks{$^{1}$ The authors are with the Division of Automatic Control, Department of Electrical Engineering, Linköping University, SE-58183 Linköping, Sweden
        {\tt\small(email: muhammad.ahsan.razaq@liu.se; claudio.altafini@liu.se)}}%
}
\newcommand\numberthis{\addtocounter{equation}{1}\tag{\theequation}}
\begin{document}

\maketitle
\thispagestyle{empty}
\pagestyle{empty}

\begin{abstract}
This paper investigates the wisdom of crowds of linear opinion dynamics models evolving on signed networks. Conditions are given under which models such as the DeGroot, Friedkin-Johnsen (FJ) and concatenated FJ models improve or undermine collective wisdom. The extension to dependent initial opinions is also presented, highlighting how the correlation structure influences the feasibility and geometry of the wisdom-improving regions.
\end{abstract}
\begin{IEEEkeywords}
    Opinion dynamics, signed graphs, consensus, wisdom of crowds.
\end{IEEEkeywords}
\section{Introduction}

The wisdom of crowds effect describes the situation in which a group’s collective judgment can surpass the accuracy and reliability of all individual assessments made in isolation \cite{surowiecki2005wisdom}. 
This effect is achieved through an aggregation mechanism, typically some form of averaging, which combines together the opinions of the individuals, normally assumed to be independent. 
Averaging can occur ``statically'', when a central coordinator can collect the opinions of all users, or in a distributed fashion, through an iterative process.
Opinion dynamics models are natural candidates for the latter task, because they often implement some form of averaging, and are inherently decentralized---individuals update their opinions based solely on their neighbors.
Among them, the simplest choices are linear opinion dynamics models, which provide an elementary, yet principled, framework for studying how individual beliefs evolve through social interactions \cite{Noorazar2020opinion,proskurnikov2017tutorial}. In this paper, we focus on three such models: the DeGroot model \cite{degroot1974reaching}, the Friedkin–Johnsen (FJ) model \cite{friedkin1990social}, and the concatenated FJ model \cite{wang2022consensus}. The DeGroot model describes opinion updates as weighted averaging over a social network, leading to a consensus. The FJ model introduces stubbornness, allowing agents to retain partial attachment to their initial opinions, which prevents consensus but nevertheless leads to a contraction in opinion space.  The concatenated FJ model generalizes the FJ model to repeated, concatenated discussion rounds, enabling opinions to gradually approach consensus over multiple discussions.

Studying the wisdom of crowd effect on an opinion dynamics model amounts to checking when the variance concentrates (i.e., decreases) around the mean value in passing through the dynamics.

Recent work has shown that the social influence exerted by an opinion dynamics model can both enhance and undermine the wisdom of crowds effect \cite{mannes2014wisdom,madirolas2015improving,almaatouq2020adaptive,lorenz2011social,becker2017network,golub2010naive,tian2023dynamics,tian2023social}. Network structure plays a key role: while some configurations with reduced diversity can undermine accuracy \cite{lorenz2011social}, others enable improvement even under constraints \cite{becker2017network}. Strategies such as weighting accurate individuals more heavily \cite{mannes2014wisdom} or increasing their resistance to change \cite{madirolas2015improving} have also been shown to boost collective performance. 
In an opinion dynamics model, these strategies modify the so-called social power of the individuals, a quantity related to eigenvector centrality \cite{friedkin1997social} that describes how strongly each individual influences the final outcome of the averaging process. For instance, the enhancing of collective wisdom through the adaptive adjustments of social power via feedback is studied in \cite{almaatouq2020adaptive}. Tian {\em et al.} \cite{tian2023social} characterize the feasible region of social power allocations that guarantee variance reduction (i.e., improvement in the wisdom), while \cite{tian2023dynamics} establish conditions under which DeGroot, FJ, and concatenated FJ models remain asymptotically ``wise,'' meaning that the group estimates converge to the true value as the population grows. Additionally, Golub {\em et al.} \cite{golub2010naive} identify the structural properties of interaction graphs that make the DeGroot model ``wise''.

So far, the wisdom of crowds effect has been analyzed only in cooperative networks, where all interaction weights are non-negative. However, real-world social systems often include antagonistic relationships, motivating the study of signed social networks.  Recent work has extended classical opinion dynamics models---such as DeGroot \cite{altafini2012consensus,Fontan2022}, FJ \cite{razaq2023propagation,shrinate2024absolute}, and concatenated FJ \cite{razaq2025signed}---to antagonistic networks.  
These papers typically capture negative influence through two distinct mechanisms: opposing and repelling dynamics \cite{shi2019dynamics}. Under opposing dynamics, the system’s behavior is strongly tied to the structural balance of the underlying graph \cite{altafini2012consensus}. In structurally balanced networks, the DeGroot model converges to a bipartite consensus, whereas in unbalanced networks all opinions collapse to zero in an asymptotically stable manner.

In contrast, for repelling dynamics, which is the model of choice for this paper, structural balance does not play any role.  Instead, what decides stability and asymptotic behavior is a Perron–Frobenius (PF) property, which requires that the largest eigenvalue of the signed interaction matrix is real, positive and strictly dominant over all other eigenvalues. 
The associated right eigenvector characterizes the endpoint of the dynamics, while the corresponding left eigenvector represents the social power of the individuals \cite{tian2023dynamics}. 
Both can have negative entries (and not in the same positions). Negative components in the social power vector indicate that the group perceives the influence of the corresponding agents as harmful and assigns negative weights to their contributions in the averaging process. This interpretation extends, with minor adjustments, to affine models such as the FJ model. 
The possibility of negative components in the right PF eigenvector leads to two distinct cases:
\benu
\item {\em Signed unipartite case}: The dominant right eigenvector of the signed interaction matrix is positive (i.e., the aforementioned PF property is ``unsigned''), which in the DeGroot model leads to consensus despite the presence of negative weights;
\item {\em Signed bipartite case}: The dominant right eigenvector has components of mixed signs (we call this signed PF property), meaning that in a DeGroot model we achieve bipartite consensus.
\eenu

For what concerns the wisdom of crowd effect, while the signed unipartite case behaves essentially like the unsigned counterpart, namely we have accuracy in mean and variance concentration (i.e., improvement of the wisdom) in a convex subregion of a hyperellipsoidal function of the variance of the initial conditions, the signed bipartite case instead differs from it, in the sense that accuracy in mean is missing, but in spite of this, the variance can nevertheless concentrate.
This means that by exchanging opinions in a ``signed bipartite graph'', the community will typically fail to reach the truth, but both groups will be nevertheless ``more convinced'' of their own value. 
This is a situation that happens in practice in polarized social communities.

In the paper, we further extend the analysis to situations where the initial opinions are dependent rather than independent, introducing covariances in the initial conditions. This generalization is only possible in the framework of signed graphs, as correlations may naturally lead to  social power vectors that are signed and hence must originate from  signed interaction graphs.
Passing from independent to correlated initial opinions significantly alters the geometry of the concentration region, as correlations reshape the conditions for improving or undermining wisdom. Our results show that, even under strong dependencies, wisdom of crowd can occur, and an optimal allocation of social power can still reduce the group variance, although at the cost of introducing negative weights in the interaction matrix. These findings provide a more realistic framework for understanding opinion aggregation in interconnected societies where independence of beliefs rarely holds.

A preliminary version of this paper will appear in the proceeding of CDC'25 \cite{razaq2025wisdom}. 
This journal version significantly expands upon the conference paper by introducing two major extensions: (i) the analysis of the signed bipartite case, which captures polarization effects and their impact on variance concentration, and (ii) the case of dependent initial opinions, which modifies the geometry of wisdom-improving regions and highlights scenarios where negative social power may become necessary for improving wisdom. 

The rest of the paper is organized as follows. Section \ref{prelimarymaterial} introduces the necessary preliminaries on signed graphs and the PF property. Section \ref{Problemformulation} formalizes the notions of improved and undermined wisdom of crowds. Section \ref{section:wisdomunsignedmodels} reviews the  wisdom of crowds problem for unsigned networks. Building on this, Sections \ref{section:wisdomsignedmodels} and \ref{section:wisdombipartitemodels} extend the analysis to signed networks, focusing on the unipartite and bipartite cases, respectively. Section \ref{Section:dependent} generalizes the framework to account for dependent initial opinions and Section \ref{section:geometricinterpretation} provides a geometric interpretation of the concentration regions. Finally, Section \ref{section:continuous_time} discusses continuous-time counterparts of the proposed models.

\section{Preliminary Material}
\label{prelimarymaterial}
\textit{Notations.} Scalars are denoted by lowercase letters, vectors by boldface lowercase letters, and matrices by capital letters.  The all-ones and all-zeros vector are represented by $\mathds{1}$ and \(\mathbf{0}\), respectively. For a vector \(\mathbf{x}\in\mathbb{R}^n\),  ${\rm diag}\left( \mathbf{x} \right) \in \mathbb{R}^{n \times n}$ denotes a diagonal matrix whose diagonal entries are given by \(\mathbf{x}\). \(H_{ij}\) denotes the \(ij\)th term of the matrix \(H\in \mathbb{R}^{n \times n}\).
The spectrum of the matrix \(H\in\mathbb{R}^{n\times n}\) is denoted by \(\Lambda\left(H\right)=\{\lambda_i\left(H\right)\},\) for \( i=1,\dots,n\).  The spectral radius of \(H\in\mathbb{R}^{n\times n}\), written as \(\rho(H)\), is the largest absolute eigenvalue: $\rho\left(H\right)=\max_{i=1,\dots,n}\left\lvert\lambda_i\left(H\right)\right\rvert$. If $\lvert\lambda_i(H)\rvert>\lvert\lambda_j(H)\rvert$ $\forall \,j\ne i$, then \(\lambda_i(H)\) is called a strictly dominant eigenvalue, and its associated left and right eigenvectors are referred to as the dominant left and right eigenvectors, respectively. 
The Hadamard product (element-wise product) of two matrices \( H_1 \in\mathbb{R}^{n\times n}\) and \( H_2 \in\mathbb{R}^{n\times n} \)  is denoted by \( H_1 \odot H_2 \).
A matrix \( H \in \mathbb{R}^{n \times n} \) is said to be positive (semi-)definite, denoted \( H \succ 0 \) (\( H \succeq 0 \)), if \( \mathbf{x}^\top H \mathbf{x} >  0 \) (\( \mathbf{x}^\top H \mathbf{x} \ge  0 \)) for all nonzero vectors \( \mathbf{x} \in \mathbb{R}^n \). Similarly, it is negative \mbox{(semi-)definite}, denoted \( H \prec 0 \) ( \( H \preceq 0 \)), if \(- H \succ 0 \) (\( -H \succeq 0 \)).
For a compact set $H \subset \mathbb{R}^n$, $\partial H$ and $\mathrm{int}(H)$ denote the boundary and interior of $H$, respectively.

\subsection{Signed Graphs}
A directed graph (digraph) is represented as a triplet \mbox{$\mathcal{G}=\left(\mathcal{V},\mathcal{E},A\right)$,} where $\mathcal{V}=\left\{v_1,v_2,\dots,v_n\right\}$ is a set of $n$ nodes, $\mathcal{E}$ is a set of  directed edges, and \(A\) is the adjacency matrix. An edge $\left(i,j\right) \in \mathcal{E}$ indicates a directed link from node $v_i$ to node $v_j$, and $A_{ij}=0$ whenever $\left(j,i\right) \notin \mathcal{E}$.
In signed graphs, the values of $A_{ij}$ can be positive or negative, representing friendly or antagonistic relationships, respectively.

For a signed graph, the weighted in-degree vector is defined as $\delta_{\rm in}=\left[\sum_{j=1, i \ne j}^n A_{ij}\right]_{n \times 1} \in \mathbb{R}^n$. The ``repelling Laplacian'' \cite{shi2019dynamics} is given by $L=D-A$, where $D={\rm diag}\left(\delta_{\rm in}\right)$. By construction $L\mathds{1}=\mathbf{0}$.

\subsection{Perron-Frobenious property for signed matrices}


The following definitions can be found in \cite{Fontan2022,razaq2023propagation}.
\begin{itemize}
    \item The matrix $H \in \mathbb{R}^{n \times n}$ satisfies the right (resp. left) Perron-Frobenius (PF) property (denoted $H \in \mathcal{PF}$ (resp. \(H^\top  \in \mathcal{PF}\))) if $\exists \,\lambda\in\Lambda( H )$ real and positive which is a simple and strictly dominant eigenvalue, and the dominant right (resp. left) eigenvector is positive.
\end{itemize}

\begin{itemize}
    \item The matrix $H \in \mathbb{R}^{n \times n}$ satisfies the Stochastic PF (SPF) property if $H\in\mathcal{PF}$, $H\mathds{1}=\mathds{1}$, and \(\rho(H)=1\).
\end{itemize}

\begin{itemize}
\item The matrix $H \in \mathbb{R}^{n \times n}$ satisfies the right (resp. left) signed PF property (denoted $H \in \mathcal{PF}_S$ (resp. $H^\top \in \mathcal{PF}_S$)), if $\exists \,\lambda\in\Lambda( H )$  a simple real positive and strictly dominant eigenvalue, and the dominant right (resp. left) eigenvector $\mathbf{v}$ is such that $\lvert \mathbf{v}\rvert>0$ \cite{Fontan2022,altafini2014predictable}.
\item A matrix $H \in \mathbb{R}^{n \times n}$ satisfies the signed Stochastic PF (SSPF) property if $H\in\mathcal{PF}_S$, \(\rho(H)=1\), and $ H \mathbf{v} = \mathbf{v}$, where the components of $ \mathbf{v} $ are s.t. $ | v_i |=1 $ $ \forall \, i$. 
An equivalent conditions is that $H\in\mathcal{PF}_S$, \(\rho(H)=1\) and $\Xi H\Xi \mathds{1}=\mathds{1}$, where $ \Xi ={\rm diag}\left(\mathbf{v}\right)$ is also called a gauge transformation.
\end{itemize}

\section{Problem Formulation}
\label{Problemformulation}
Consider a group of \(n\) individuals engaged in a discussion on a specific topic, during which they exchange information and update their opinions according to an interaction graph. Each individual \(i\) starts with an initial opinion, \(x_i(0)\),  drawn independently from a random distribution with mean $ \zeta$ and variance \(\mathrm{Var}[x_i(0)]=\sigma_i^2\) where $0<\sigma_i^2<\infty$. 
Here, \(\zeta\) represents the true underlying value of the topic. A smaller variance \(\sigma_i^2\) indicates that the $i$th individual is an expert with more reliable knowledge, whereas a larger variance reflects less expertise or naivety.
Let \(\mathbf{x}(0)\) denote the vector of $n$ initial opinions. 
The collective wisdom of the group at  $ t=0$ can be quantified by the group mean, i.e., the expected value of the average with respect to the stochasticity in the initial conditions, $ \mathbb{E}[\overline{x}(0)] =\zeta$, where $ \overline{x}(0) = \frac{1}{n} \sum_{i=1}^n x_i(0) $, and the group variance is $ {\rm Var}[\overline{x}(0) ]=\mathds{1}^\top  \Sigma\mathds{1} /{n^2} =  \frac{1}{n^2} \sum_{i=1}^n \sigma_i^2 $ where \(\Sigma={\rm diag}([\sigma_1^2,\,\dots,\,\sigma_n^2])\). 

The discussion process is modeled as a convergent linear dynamical model (more details given below), and we denote $ \mathbf{x}^\ast $ the asymptotic value reached by the opinions. 
For each of the dynamical models we consider below, it is possible to define a notion of social power, representing the influence each agent has on the outcome of the opinion process. Let $ \mathbf{y} = \begin{bmatrix} y_1& \ldots & y_n \end{bmatrix}^\top  $ denote the social power vector, normalized so that $  \mathds{1}^\top  \mathbf{y} =1$. 
In particular, for our models, social powers determine the weights by which the initial opinions get updated through the opinion dynamics processes. More specifically, our opinion dynamics models consist of averaging-like processes. At steady state, their endpoint is $\mathbf{x}^\ast$, from which we can compute the ``sample'' mean $ \overline{x}^\ast =\frac{1}{n}\sum_{i=1}^n x_i^\ast$, which sometimes will coincide with the consensus value achieved by the group, and some other times not (namely, when the model includes stubbornness). In all cases, this can be expressed as a function of $ \mathbf{y}$: $  \overline{x}^\ast = \frac{1}{n}\sum_{i=1}^n x_i^\ast = \mathds{1}^\top  \mathbf{x}^\ast/n =\mathbf{y}^\top \mathbf{x}(0)$. For this $ \overline{x}^\ast  $, we can then compute the group mean (w.r.t. the stochasticity in the initial conditions) $ \mathbb{E}[\overline{x}^\ast]  $ and the group variance $ {\rm Var}[\overline{x}^\ast ]= \sum_{i=1}^n y_i^2 \sigma_i^2 =  \mathbf{y}^\top  \Sigma  \mathbf{y} $. 

To quantify the wisdom of crowd of our dynamical models, we introduce the following definitions. 
\begin{definition}
An opinion dynamics model is said 
\bite
\item \emph{Mean accurate} if $ \mathbb{E}[\overline{x}^\ast] = \zeta$;
\item \emph{Concentrating} if \(\mathrm{Var}[\overline{x}^\ast]<\mathrm{Var}[\overline{x}(0)]\);
\item \emph{Dispersing} if  \(\mathrm{Var}[\overline{x}^\ast]>\mathrm{Var}[\overline{x}(0)]\);
\item \emph{Neutral} if  \(\mathrm{Var}[\overline{x}^\ast]=\mathrm{Var}[\overline{x}(0)]\). 
\eite
It is said to optimize concentration if \(\mathrm{Var}[\overline{x}^\ast]=\min\limits_{\mathbf{y}\in\mathbb{R}^n,\,\mathbf{y}^\top \mathds{1}=1} \mathbf{y}^\top \Sigma\mathbf{y}\).
\end{definition}
In the literature, a mean accurate model which is concentrating is also said to be ``improving the wisdom of crowd'', while a mean accurate model which is dispersing is said to be ``undermining the wisdom of crowd'', see e.g. \cite{tian2023dynamics,tian2023social}. A mean accurate model which is neutral is said to be ``preserving the wisdom of crowd''.

When $ {\bm \sigma} =\{ \sigma_1, \ldots, \sigma_n \}$ is given, we can intend the group variance $ \mathrm{Var}[\overline{x}^\ast] $ as a function of the social power $ \mathbf{y}  $, and denote it $ \mathcal{T}_{{ \bm \sigma}} (\mathbf{y}) = \sum_{i=1}^n y_i^2 \sigma_i^2 $. 
In a similar way, for the initial condition $ \mathbf{x}(0) $ (when no information on the dynamical influence process is available, and the best thing we can do is to put all equal social powers, $ \mathbf{y}_0 = \mathds{1}/n$), we can consider the function $ \mathcal{T}_{{ \bm \sigma}} (\mathds{1}/n) $. 
Passing from $ \overline{x}(0) $ to $\overline{x}^\ast $, concentration occurs if it is $ \mathcal{T}_{{ \bm \sigma}} (\mathbf{y}) < \mathcal{T}_{{ \bm \sigma}} (\mathds{1}/n) $, while dispersion occurs if $ \mathcal{T}_{{ \bm \sigma}} (\mathbf{y}) > \mathcal{T}_{{ \bm \sigma}} (\mathds{1}/n) $. 
Following \cite{tian2023social}, we can capture this condition compactly by introducing the hyperellipsoid $  \mathcal{T}_{{ \bm \sigma}} (\mathbf{y})/ \mathcal{T}_{{ \bm \sigma}} (\mathds{1}/n)= \frac{n^2\sum_{i=1}^n y_i^2 \sigma_i^2}{\sum_{j=1}^n \sigma_j^2}=1$.
The shape of the hyperellipsoid \( {n^2\sum_{i=1}^n  y_i^2 \sigma_i^2 }=\sum_{j=1}^n \sigma_j^2\) depends on the variance terms \(\sigma_i^2\). If all variances are equal, i.e., \(\sigma_i=\sigma_j\) \(\forall\, i,j\), then the equation reduces to \(\sum_{i=1}^n {n} y_i^2=1\) which represents a hypersphere centered at the origin with radius \({1}/{n}\). If the variances \(\sigma_i^2\) vary among individuals, the hyperellipsoid stretches more along the directions where the corresponding \(\sigma_i^2\)  values are smaller.
In other words, the principal axes of the ellipsoid are proportional to the accuracies $ {1}/{\sigma_i^2 }$.
Because of the normalization $ \mathds{1}^\top  \mathbf{y} =1$, when $ \mathbf{y}>0 $ the social power $ \mathbf{y} $ is living on the \(n\)-simplex \(\Delta\). 
The concentration region becomes then the interior of a convex region \(\Gamma_1\) defined as the intersection of \(\Delta\) and the  hyperellipsoid: $ \Gamma_1= \Delta \cap \Phi_1$ where \(\Phi_1=\left\{ {n^2\sum_{i=1}^n  y_i^2 \sigma_i^2}\leq \sum_{j=1}^n \sigma_j^2 \right\}\).

\section{Wisdom of crowd on unsigned networks}
\label{section:wisdomunsignedmodels}

In this section we consider three standard models for opinion dynamics. For all of them we assume that the agents collaborate, which is encoded as an interaction graph $ W\geq 0$.  

\noindent {\bf 1): DeGroot Model.}
The DeGroot model \cite{degroot1974reaching} is the standard model for achieving consensus through averaging. It can be written as
\begin{equation}
    \label{DeGrootModel}
    \mathbf{x}(k+1) =  W\mathbf{x}(k).
\end{equation}
where  \( \mathbf{x}(k) = [x_i(k)]_{n\times 1} \) is the opinion vector at time $k$. As mentioned above,  for the initial conditions we have \(\mathbb{E}[x_i(0)]=\zeta\) and \(\mathrm{Var}[x_i(0)]=\sigma_i^2\). The following assumption is made on the matrix \(W\).
\begin{assumption}
    \label{assumption_DeGroot}
    \(W\) is row-stochastic with \(\rho(W-\mathds{1}\mathbf{z}^\top)<1\) where \(W^\top\mathbf{z}=\mathbf{z}\).
\end{assumption}

\noindent{\bf 2): Friedkin-Johnsen Model.}
In the Friedkin-Johnsen (FJ) model \cite{friedkin1997social}, in addition to averaging, each individual \(i\) has some attachment to its initial opinion during the discussion, represented by a stubbornness parameter \( \theta_i \in (0,1) \). The resulting model is given by:
\begin{equation}
    \label{FJmodel}
     \mathbf{x}(k+1) = (I - \Theta) W\mathbf{x}(k) + \Theta\mathbf{x}(0),
\end{equation}
where \( \Theta = {\rm diag}([\theta_1,\dots,\theta_n]) \) is a diagonal matrix with stubbornness coefficients as diagonal entries. As before, we assume \(\mathbb{E}[x_i(0)]=\zeta\) and \(\mathrm{Var}[x_i(0)]=\sigma_i^2\). In addition, the stubbornness coefficients satisfy the following.
\begin{assumption}
\label{assumption_stubbornness}
    All individuals in the network are partially stubborn, i.e. \(0<\theta_i<1\).
\end{assumption}
\noindent{\bf 3): Concatenated Friedkin-Johnsen Model.}
Assume that the individuals are engaged in a series of discussions \(s=1,2,3,\dots\) on a specific topic. The opinion of each individual is \(x_i(s,k)\) where \(s\) indicates the discussion event and \(k\) the time instant in the $s$-th discussion.  The opinions in the discussion \(s\) are modeled using a FJ model, see \cite{bernardo2021achieving,wang2022consensus}:
\begin{equation}
    \label{concatenatedFJmodel_1}
     \mathbf{x}(s,k+1) = (I - \Theta) W\mathbf{x}(s,k) + \Theta\mathbf{x}(s,0).
\end{equation}
At the start of the next discussion, the individuals start with the final opinions of the previous discussion, hence the name concatenated FJ model:
\begin{equation}
    \label{concatenatedFJmodel_2}
     \mathbf{x}(s+1,0) = \mathbf{x}(s,\infty).
\end{equation}
Each individual starts with an opinion \(x_i(1,0)\) for the first discussion, with mean \(\mathbb{E}[x_i(1,0)]=\zeta\) and variance \(\mathrm{Var}[x_i(1,0)]=\sigma_i^2\).

\subsection{Asymptotic behavior}
All three models have been thoroughly studied in the literature. 
Their asymptotic properties are reviewed in the following lemma.

\begin{lemma} \cite{degroot1974reaching,friedkin1990social,wang2022consensus} The following results hold for the convergence of opinion dynamics models under their respective assumptions:
\benu
\item \textbf{DeGroot Model:} If Assumption~\ref{assumption_DeGroot} is satisfied, then the model~\eqref{DeGrootModel} converges to a consensus value \mbox{\(\mathbf{x}^\ast=\mathbf{z}^\top \mathbf{x}(0)\mathds{1},\)}
and the social power vector is \(\mathbf{y} = \mathbf{z}\), the dominant left eigenvector of \(W\), with \(\mathbf{y}^\top \mathds{1}=1\).
\item \textbf{FJ Model:} If Assumptions~\ref{assumption_DeGroot} and~\ref{assumption_stubbornness} are satisfied, then the FJ model~\eqref{FJmodel} converges to \( \mathbf{x}^\ast=P\mathbf{x}(0),\)
where 
\begin{equation}
    P=(I-(I-\Theta)W)^{-1}\Theta
\end{equation}
with \(P\mathds{1}=\mathds{1}\) and the social power vector is \(\mathbf{y}= \mathds{1}^\top  P/n\).
\item \textbf{Concatenated FJ Model:} If Assumptions~\ref{assumption_DeGroot} and~\ref{assumption_stubbornness} are satisfied, then the concatenated FJ model~\eqref{concatenatedFJmodel_1},~\eqref{concatenatedFJmodel_2} is stable in the $k$ scale and converges to
    \begin{align*}
        \mathbf{x}(s+1,\infty)&=P\mathbf{x}(s,\infty) \label{concatenatedfjmodel}\numberthis
    \end{align*}
Eq.~\eqref{concatenatedfjmodel} gives also the  opinion dynamics model in the discussion scale \(s\), and converges to consensus
    \begin{equation*}
        \label{concatenatedFJmodeleqconverges}
        \mathbf{x}^\ast=P^{\infty}\mathbf{x}(1,0)=\mathbf{p}^\top \mathbf{x}(1,0)\mathds{1},
    \end{equation*}
and the social power at the end of the series of discussions is $ \mathbf{y}=\mathbf{p}$, the dominant left eigenvector of \(P\) with \(\mathbf{y}^\top \mathds{1}=1\) and \(\mathbf{y}=(I-\Theta)^{-1}\Theta \mathbf{z}\), where $ \mathbf{z} $ is the dominant left eigenvector of $ W$. 
\eenu  
\end{lemma}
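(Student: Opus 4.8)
The three claims are classical, so the plan is to recall, for each model, the standard fixed-point-plus-geometric-convergence argument and then to read off the social-power vector as the coefficient of $\mathbf{x}(0)$ in $\overline{x}^\ast=\mathds{1}^\top\mathbf{x}^\ast/n$, checking the normalization $\mathbf{y}^\top\mathds{1}=1$ along the way. For the \textbf{DeGroot model}, I would first use $W\mathds{1}=\mathds{1}$, $W^\top\mathbf{z}=\mathbf{z}$ and the implicit normalization $\mathbf{z}^\top\mathds{1}=1$ to show that $(\mathds{1}\mathbf{z}^\top)^2=\mathds{1}\mathbf{z}^\top$ and $(\mathds{1}\mathbf{z}^\top)(W-\mathds{1}\mathbf{z}^\top)=(W-\mathds{1}\mathbf{z}^\top)(\mathds{1}\mathbf{z}^\top)=0$, which gives $W^k=\mathds{1}\mathbf{z}^\top+(W-\mathds{1}\mathbf{z}^\top)^k$; since $\rho(W-\mathds{1}\mathbf{z}^\top)<1$ by Assumption~\ref{assumption_DeGroot}, the second term vanishes, so $\mathbf{x}(k)\to\mathds{1}\mathbf{z}^\top\mathbf{x}(0)=\mathbf{z}^\top\mathbf{x}(0)\,\mathds{1}$, and $\overline{x}^\ast=\mathds{1}^\top\mathbf{x}^\ast/n=\mathbf{z}^\top\mathbf{x}(0)$ identifies $\mathbf{y}=\mathbf{z}$ with $\mathbf{y}^\top\mathds{1}=1$.

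For the \textbf{FJ model}, set $M=(I-\Theta)W$. Since $W\ge 0$ is row-stochastic and $0<\theta_i<1$ (Assumption~\ref{assumption_stubbornness}), each row of $M\ge 0$ sums to $1-\theta_i<1$, so $\|M\|_\infty=\max_i(1-\theta_i)<1$ and $I-M$ is invertible. Unrolling \eqref{FJmodel} as $\mathbf{x}(k)=M^k\mathbf{x}(0)+\big(\textstyle\sum_{j=0}^{k-1}M^j\big)\Theta\mathbf{x}(0)$ and letting $k\to\infty$ yields $\mathbf{x}^\ast=(I-M)^{-1}\Theta\mathbf{x}(0)=P\mathbf{x}(0)$. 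Row-stochasticity $P\mathds{1}=\mathds{1}$ follows by checking $(I-M)\mathds{1}=\mathds{1}-(I-\Theta)W\mathds{1}=\Theta\mathds{1}$, using $W\mathds{1}=\mathds{1}$. Then $\overline{x}^\ast=\mathds{1}^\top P\mathbf{x}(0)/n$ gives $\mathbf{y}=\mathds{1}^\top P/n$, and $\mathbf{y}^\top\mathds{1}=\mathds{1}^\top P\mathds{1}/n=1$.

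For the \textbf{concatenated FJ model}, each inner round $s$ is an FJ model in the $k$-scale started from $\mathbf{x}(s,0)$, so part~2 gives $\mathbf{x}(s,\infty)=P\mathbf{x}(s,0)$; the concatenation rule \eqref{concatenatedFJmodel_2} then converts this into the $s$-scale recursion \eqref{concatenatedfjmodel}, i.e. $\mathbf{x}(s+1,\infty)=P\mathbf{x}(s,\infty)$ and hence $\mathbf{x}(s,\infty)=P^{s}\mathbf{x}(1,0)$. Since $P=\sum_{j\ge 0}M^j\Theta\ge 0$ is row-stochastic and inherits the ergodicity of $W$ (here Assumption~\ref{assumption_DeGroot} is essential: $\rho(W-\mathds{1}\mathbf{z}^\top)<1$ forces $1$ to be a simple, strictly dominant eigenvalue of $P$, cf.~\cite{wang2022consensus,bernardo2021achieving}), we obtain $P^s\to\mathds{1}\mathbf{p}^\top$ with $\mathbf{p}$ the left Perron eigenvector normalized by $\mathbf{p}^\top\mathds{1}=1$, so $\mathbf{x}^\ast=\mathbf{p}^\top\mathbf{x}(1,0)\,\mathds{1}$ and $\mathbf{y}=\mathbf{p}$. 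To pin down $\mathbf{p}$ explicitly, put $\mathbf{w}^\top=\mathbf{p}^\top(I-M)^{-1}$; from $\mathbf{p}^\top P=\mathbf{p}^\top$, i.e. $\mathbf{w}^\top\Theta=\mathbf{p}^\top$, and $\mathbf{w}^\top(I-M)=\mathbf{p}^\top$, one gets $\mathbf{w}^\top(I-\Theta)=\mathbf{w}^\top(I-\Theta)W$, so $(I-\Theta)\mathbf{w}$ is the dominant left eigenvector of $W$; hence $(I-\Theta)\mathbf{w}\propto\mathbf{z}$ and $\mathbf{p}=\Theta\mathbf{w}\propto(I-\Theta)^{-1}\Theta\mathbf{z}$, which gives the stated formula after the normalization $\mathbf{p}^\top\mathds{1}=1$.

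I expect the only nonroutine point to be the third model: showing that $P$ is primitive/SIA so that $P^s$ converges to a rank-one limit, equivalently $\rho(P-\mathds{1}\mathbf{p}^\top)<1$. This does not follow from row-stochasticity alone; it must be extracted from the spectral hypothesis on $W$. I would either quote the corresponding convergence lemma of \cite{wang2022consensus}, or argue directly from the nonnegative series $P=\sum_{j\ge 0}((I-\Theta)W)^j\Theta$, whose zero/nonzero pattern dominates that of $W$ (because $\Theta\succ 0$), so $P$ is at least as strongly connected as $W$ and therefore ergodic whenever $W$ is.
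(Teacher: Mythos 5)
Your proposal is correct. Note that the paper itself gives no proof of this lemma: it is stated as a review of known results and justified purely by citation to \cite{degroot1974reaching,friedkin1990social,wang2022consensus}, so there is no ``paper proof'' to match step by step. Your self-contained derivation is the standard one and all three parts check out: the idempotent splitting $W^k=\mathds{1}\mathbf{z}^\top+(W-\mathds{1}\mathbf{z}^\top)^k$ for DeGroot (which does require the normalization $\mathbf{z}^\top\mathds{1}=1$, implicit in Assumption~\ref{assumption_DeGroot} for $\mathds{1}\mathbf{z}^\top$ to be a projector); the Neumann-series computation of $P=(I-M)^{-1}\Theta$ with $\|M\|_\infty<1$ and the verification $(I-M)\mathds{1}=\Theta\mathds{1}$ giving $P\mathds{1}=\mathds{1}$; and the eigenvector calculation $\mathbf{w}^\top(I-\Theta)=\mathbf{w}^\top(I-\Theta)W$ yielding $\mathbf{p}\propto(I-\Theta)^{-1}\Theta\mathbf{z}$. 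You also correctly isolated the only genuinely nonroutine step, namely that $P$ is SIA so that $P^s\to\mathds{1}\mathbf{p}^\top$; your sparsity argument works because the off-diagonal pattern of $P=\sum_{j\ge0}M^j\Theta$ is the reachability closure of that of $W$ (so the essential classes coincide with those of $W$, of which there is exactly one under Assumption~\ref{assumption_DeGroot}) while $\Theta\succ0$ gives $P$ a positive diagonal and hence aperiodicity. Falling back on the convergence lemma of \cite{wang2022consensus} for that step, as you suggest, is exactly what the paper implicitly does.
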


\subsection{Analysis of the wisdom of crowd}

When $ W\geq 0 $, for the three models discussed above the wisdom of crowd problem is investigated in \cite{tian2023dynamics,tian2023social}. We report the main results in next proposition.
In all 3 cases, $ \mathbf{y}$ denotes the social power of the equilibrium point.

\begin{proposition}
\label{proposition_unsigned}
Consider the following cases
\benu
\item The DeGroot model~\eqref{DeGrootModel} satisfies Assumption~\ref{assumption_DeGroot};
\item The FJ model~\eqref{FJmodel} satisfies Assumptions~\ref{assumption_DeGroot} and~\ref{assumption_stubbornness};
\item The concatenated FJ model~\eqref{concatenatedfjmodel} satisfies Assumptions~\ref{assumption_DeGroot} and~\ref{assumption_stubbornness}.
\eenu
In each of the three cases, the model is mean accurate, i.e., $ \mathbb{E}[\overline{x}^\ast] = \zeta$. Furthermore, if \(\mathbf{y}\) is the social power vector, the variance \( \mathrm{Var}[\overline{x}^\ast]=\mathbf{y}^\top \Sigma\mathbf{y}\)
\begin{itemize}
        \item concentrates around \(\zeta\) if \(\mathbf{y}\in \mathrm{int}(\Gamma_1)\).
        \item is neutral around \(\zeta\) if \(\mathbf{y}\in  \partial \Gamma_1\).
        \item disperses around \(\zeta\) if \(\mathbf{y}\notin \Gamma_1 \).
        \item is optimized  around \(\zeta\) if \(\mathbf{y}=\frac{\Sigma^{-1}\mathds{1}}{\mathds{1}^\top \Sigma^{-1}\mathds{1}}.\)
    \end{itemize}
\end{proposition}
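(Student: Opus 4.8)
The plan is to reduce all three models to the single scalar identity $\overline{x}^\ast=\mathbf{y}^\top\mathbf{x}(0)$ with $\mathds{1}^\top\mathbf{y}=1$ already recorded in Section~\ref{Problemformulation}, and then to extract the mean, the variance and the optimizer by elementary computation. First I would verify this identity and the normalization case by case from the preceding lemma: for the DeGroot model, $\mathbf{x}^\ast=\mathbf{z}^\top\mathbf{x}(0)\mathds{1}$ gives $\overline{x}^\ast=\mathds{1}^\top\mathbf{x}^\ast/n=\mathbf{z}^\top\mathbf{x}(0)$ with $\mathbf{y}=\mathbf{z}$; for the FJ model, $\mathbf{x}^\ast=P\mathbf{x}(0)$ with $P\mathds{1}=\mathds{1}$ gives $\overline{x}^\ast=(\mathds{1}^\top P/n)\mathbf{x}(0)$ and $\mathds{1}^\top\mathbf{y}=\mathds{1}^\top P\mathds{1}/n=1$; for the concatenated FJ model, $\mathbf{x}^\ast=\mathbf{p}^\top\mathbf{x}(1,0)\mathds{1}$ gives $\overline{x}^\ast=\mathbf{p}^\top\mathbf{x}(1,0)$ with $\mathbf{y}=\mathbf{p}$. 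Mean accuracy then follows uniformly by taking expectations and using $\mathbb{E}[\mathbf{x}(0)]=\zeta\mathds{1}$, so that $\mathbb{E}[\overline{x}^\ast]=\zeta\,\mathds{1}^\top\mathbf{y}=\zeta$.

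Next I would compute the variance using independence of the $x_i(0)$: $\mathrm{Var}[\overline{x}^\ast]=\mathrm{Var}[\mathbf{y}^\top\mathbf{x}(0)]=\sum_i y_i^2\sigma_i^2=\mathbf{y}^\top\Sigma\mathbf{y}=\mathcal{T}_{\bm{\sigma}}(\mathbf{y})$, while $\mathrm{Var}[\overline{x}(0)]=\mathcal{T}_{\bm{\sigma}}(\mathds{1}/n)=\tfrac{1}{n^2}\sum_j\sigma_j^2$. Hence concentration, neutrality and dispersion are, respectively, the strict inequality $\mathcal{T}_{\bm{\sigma}}(\mathbf{y})<\mathcal{T}_{\bm{\sigma}}(\mathds{1}/n)$, equality, and the reverse strict inequality, which unwind exactly to $\mathbf{y}\in\mathrm{int}(\Phi_1)$, $\mathbf{y}\in\partial\Phi_1$, $\mathbf{y}\notin\Phi_1$. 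To upgrade $\Phi_1$ to $\Gamma_1$ I would check that the social power always lies on the simplex $\Delta$: Assumption~\ref{assumption_DeGroot} forces $\mathbf{z}\ge0$ (limit of row-stochastic matrices), and Assumptions~\ref{assumption_DeGroot}--\ref{assumption_stubbornness} give $P=(I-(I-\Theta)W)^{-1}\Theta\ge0$ by the Neumann series, hence $\mathbf{y}=P^\top\mathds{1}/n\ge0$, and $\mathbf{y}=(I-\Theta)^{-1}\Theta\mathbf{z}\ge0$ in the concatenated case; combined with $\mathds{1}^\top\mathbf{y}=1$ this places $\mathbf{y}\in\Delta$, so the three conditions coincide with $\mathbf{y}\in\mathrm{int}(\Gamma_1)$, $\mathbf{y}\in\partial\Gamma_1$, $\mathbf{y}\notin\Gamma_1$, where interior and boundary are understood relative to the affine hull $\{\mathds{1}^\top\mathbf{y}=1\}$ of $\Delta$.

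Finally, for the optimal allocation I would solve $\min\{\mathbf{y}^\top\Sigma\mathbf{y}:\mathds{1}^\top\mathbf{y}=1\}$. Since $\Sigma\succ0$ this is strictly convex with a single affine constraint, so stationarity of the Lagrangian $\mathbf{y}^\top\Sigma\mathbf{y}-2\mu(\mathds{1}^\top\mathbf{y}-1)$ gives $\Sigma\mathbf{y}=\mu\mathds{1}$, i.e. $\mathbf{y}=\mu\Sigma^{-1}\mathds{1}$; the constraint fixes $\mu=1/(\mathds{1}^\top\Sigma^{-1}\mathds{1})$, yielding the unique minimizer $\mathbf{y}=\Sigma^{-1}\mathds{1}/(\mathds{1}^\top\Sigma^{-1}\mathds{1})$, which lies in $\Delta$ because $\Sigma$ is diagonal with positive entries. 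I do not expect a genuine obstacle: the only step needing care is the geometric bookkeeping in the variance argument, namely confirming that each model's social power vector is non-negative and correctly normalized so that the hyperellipsoid inequality matches membership in $\mathrm{int}(\Gamma_1)$, $\partial\Gamma_1$ or its complement, rather than being conflated with the relative boundary $\partial\Delta$ of the simplex; everything else is direct substitution from the lemma.
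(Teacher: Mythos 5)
Your proposal is correct and follows essentially the same route as the paper, which defers the proof of this proposition to the cited references and to the structurally identical proof of Theorem~\ref{theorem_ind_unipartite}: write $\overline{x}^\ast=\mathbf{y}^\top\mathbf{x}(0)$ with $\mathds{1}^\top\mathbf{y}=1$, take expectations for mean accuracy, compare $\mathbf{y}^\top\Sigma\mathbf{y}$ with $\mathds{1}^\top\Sigma\mathds{1}/n^2$ for the concentration trichotomy, and solve the equality-constrained quadratic program for the optimizer. Your additional verification that $\mathbf{y}\ge 0$ in all three models (so that the relevant region is indeed $\Gamma_1=\Delta\cap\Phi_1$ rather than the larger $\Psi_1\cap\Phi_1$) is a detail the paper leaves implicit, and you argue it correctly via the nonnegativity of $\lim_k W^k$, the Neumann series for $P$, and the positivity of $(I-\Theta)^{-1}\Theta$.
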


The proof of this proposition is available in \cite{tian2023dynamics,tian2023social} or immediately obtainable from it. For instance, mean accuracy of the FJ model follows from the fact that $ P \mathds{1} = \mathds{1}$ implies that the social powers sum to 1. See also the proof of Theorem~\ref{theorem_ind_unipartite} below.

\section{Wisdom of Crowds in Signed Unipartite Networks}
\label{section:wisdomsignedmodels}

Starting from this section, we deal with the linear opinion dynamics models with antagonistic interactions in the network and find conditions under which the models improve or undermine wisdom. Here, the interaction matrix is denoted \(W_s\) and may have negative terms, i.e., \(W_s\gtreqless0\),  where \(w_{s,ij}>0\) represents friendly interactions and \(w_{s,ij}<0\) represent antagonistic interactions.
In this section, we assume that the right dominant eigenvector of $ W_s $ is positive (the vector \(\mathds{1}\)) in all three models we explore, implying that unanimity of opinions is achieved in spite of the antagonistic interactions. 
The signed version of the three models discussed in the previous section are reviewed next.

\noindent {\bf 1): Signed DeGroot Model.}
We consider the ``repelling'' version of the signed DeGroot model \cite{Fontan2022}, which can be written as
\begin{equation}
    \label{SignedDeGrootModel}
    \mathbf{x}(k+1) =  W_s\mathbf{x}(k)
\end{equation}
where the signed interaction pattern \(W_s\) is  characterized by \(W_s\mathds{1}=\mathds{1}\). 
Unlike the unsigned DeGroot model, in which \(W\) is row-stochastic, the eigenvalues of \(W_s\) in the signed case may exceed \(1\), resulting in an unstable system. 
Convergence requires \(1\in\Lambda(W_s)\) to be a simple and strictly dominant eigenvalue of $ W_s$.
Together $ W_s \mathds{1} = \mathds{1} $ and strict dominance of \(1\in\Lambda(W_s)\) guarantee convergence to a consensus point.
Notice that, unlike in \cite{Fontan2022}, we require that the matrix \(W_s\) satisfies the right PF property but not necessarily the left PF property, implying that the dominant left eigenvector \(\mathbf{z}_s\) may contain negative entries, i.e., we admit negative social powers in the model.
These conditions are summarized in the following assumption on the matrix \(W_s\).
\begin{assumption}
    \label{assumption_SignedDeGroot}
    \(W_s\) satisfies the SPF property.
\end{assumption}

\noindent{\bf 2): Signed Friedkin-Johnsen Model.}
In addition to averaging and stubbornness, the signed FJ (SFJ) model includes antagonism in the network. The resulting model is given in \cite{razaq2025signed}:
\begin{equation}
    \label{SignedFJmodel}
     \mathbf{x}(k+1) = (I - \Theta) W_s\mathbf{x}(k) + \Theta\mathbf{x}(0).
\end{equation}
In this model, Assumption~\ref{assumption_SignedDeGroot} is not sufficient for stability, as shown via counterexamples in \cite{razaq2023propagation}. 
To guarantee stability, we make the following stronger assumption.
\begin{assumption}
    \label{assumption_signedfjmodel}
    \(W_s\) satisfies the SPF property and $ \theta_i$ are s.t. \(0\leq\theta_i<1\), \(\forall i\). Furthermore, \(\rho((I-\Theta)W_s)<1\).
\end{assumption}

\noindent{\bf 3): Concatenated Signed Friedkin-Johnsen Model.}
The concatenated SFJ model investigated in \cite{razaq2025signed} extends the concatenated FJ model to include antagonistic interactions. The opinions in the discussion \(s\) are modeled using a SFJ model:
\begin{equation}
    \label{concatenatedsignedFJmodel_1}
     \mathbf{x}(s,k+1) = (I - \Theta) W_s\mathbf{x}(s,k) + \Theta\mathbf{x}(s,0).
\end{equation}
As in the unsigned case, at the beginning of the next discussion, the individuals start with the final opinions of the previous discussion, see~\eqref{concatenatedFJmodel_2}.
In this model, since \(W_s\) has signed entries, the matrix 
\begin{equation}
P_s = (I-(I-\Theta)W_s)^{-1}\Theta
\label{eq:Ps}
\end{equation}
may have spectral radius greater than \(1\) even if \mbox{\(\rho((I-\Theta)W_s)<1\)}, as shown via counterexamples in \cite{razaq2023propagation}.  We need the following assumption for the stability of concatenated SFJ model, which relies on the matrix \(P_s\).
\begin{assumption}
\label{assumption_signedconcatenatedfjmodel}
    \(W_s\) satisfies the SPF property and $ \theta_i $ are s.t. \(0\leq\theta_i<1\), \(\forall i\). Furthermore, \(\rho((I-\Theta)W_s)<1\) and \(P_s\) satisfies the SPF property.
\end{assumption}

\subsection{Asymptotic behavior}
All three models have been thoroughly studied in the literature, see \cite{Fontan2022,razaq2025signed}. 
Their asymptotic properties are reviewed in the following lemma, which extends the results of \cite{Fontan2022,razaq2025signed} to include $ W_s $ matrices that satisfy the SPF property, i.e., that have the PF property only on the right side and not on the left.

\begin{lemma} \cite{Fontan2022,razaq2025signed}\label{lemma_mergedsigneddynamisc} The following results hold for the convergence of opinion dynamics models with antagonistic interactions under their respective assumptions:
\benu
\item \textbf{Signed DeGroot Model:} If Assumption~\ref{assumption_SignedDeGroot} is satisfied, then the model~\eqref{SignedDeGrootModel} converges to the consensus value \(        \mathbf{x}^\ast=\mathbf{z}_s^\top \mathbf{x}(0)\mathds{1},\)
and the social power vector is \(\mathbf{y} = \mathbf{z}_s\), the dominant left eigenvector of \(W_s\).
\item \textbf{Signed Friedkin-Johnsen Model:} If Assumption~\ref{assumption_signedfjmodel} is satisfied, then the SFJ model~\eqref{SignedFJmodel} converges to \(    \mathbf{x}^\ast=P_s\mathbf{x}(0),\)
where $ P_s $ is given in \eqref{eq:Ps}, \(P_s\mathds{1}=\mathds{1}\), and the social power vector is \(\mathbf{y}=   P_s^\top\mathds{1}/n\).
\item \textbf{Concatenated Signed Friedkin-Johnsen Model:} If Assumption~\ref{assumption_signedconcatenatedfjmodel} is satisfied, then the concatenated SFJ model~\eqref{concatenatedsignedFJmodel_1},~\eqref{concatenatedFJmodel_2} is stable in the $k$ scale and converges to
\begin{align*}
    \mathbf{x}(s+1,\infty)&=P_s\mathbf{x}(s,\infty).
    \numberthis\label{concatenatedsfjmodel}
\end{align*}

Eq.~\eqref{concatenatedsfjmodel} gives also the  opinion dynamics model in the discussion scale \(s\), and converges to consensus \(        \mathbf{x}^\ast=P_s^{\infty}\mathbf{x}(1,0)=\mathbf{p}_s^\top \mathbf{x}(1,0)\mathds{1}.
\)
The social power at the end of the series of discussions is $ \mathbf{y}=\mathbf{p}_s$, the dominant left eigenvector of \(P_s\), expressed as \(\mathbf{p}_s=(I-\Theta)^{-1}\Theta \mathbf{z}_s\), where $ \mathbf{z}_s $ is the dominant left eigenvector of $ W_s$. 
\eenu  
In all three cases, \(\mathbf{y}^\top \mathds{1}=1\), even though for some $i$ it could be $ y_i<0$.
\end{lemma}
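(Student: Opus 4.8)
The plan is to establish the three convergence statements one model at a time, reducing each to the spectral properties guaranteed by the relevant assumption, and then to verify the common normalization $\mathbf{y}^\top\mathds{1}=1$ at the end. For the \textbf{signed DeGroot model}, Assumption~\ref{assumption_SignedDeGroot} gives that $1$ is a simple, strictly dominant eigenvalue of $W_s$ with right eigenvector $\mathds{1}$ (since $W_s\mathds{1}=\mathds{1}$); by the spectral decomposition, $W_s^k\to \mathds{1}\mathbf{z}_s^\top$ as $k\to\infty$, where $\mathbf{z}_s$ is the dominant left eigenvector normalized so that $\mathbf{z}_s^\top\mathds{1}=1$. Hence $\mathbf{x}(k)=W_s^k\mathbf{x}(0)\to(\mathbf{z}_s^\top\mathbf{x}(0))\mathds{1}$, which identifies $\mathbf{x}^\ast$ and $\mathbf{y}=\mathbf{z}_s$. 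The only subtlety relative to \cite{Fontan2022} is that we do not assume the left PF property, so $\mathbf{z}_s$ may have negative entries; the limit argument is unaffected since it only uses simplicity and strict dominance of the eigenvalue $1$, not positivity of the eigenvector.

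For the \textbf{signed FJ model}, Assumption~\ref{assumption_signedfjmodel} gives $\rho((I-\Theta)W_s)<1$, so $I-(I-\Theta)W_s$ is invertible and the iteration~\eqref{SignedFJmodel} converges to the unique fixed point $\mathbf{x}^\ast=(I-(I-\Theta)W_s)^{-1}\Theta\,\mathbf{x}(0)=P_s\mathbf{x}(0)$. To see $P_s\mathds{1}=\mathds{1}$, I would use $W_s\mathds{1}=\mathds{1}$: then $(I-(I-\Theta)W_s)\mathds{1}=\mathds{1}-(I-\Theta)\mathds{1}=\Theta\mathds{1}$, so multiplying by $(I-(I-\Theta)W_s)^{-1}$ gives $\mathds{1}=(I-(I-\Theta)W_s)^{-1}\Theta\mathds{1}=P_s\mathds{1}$. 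The sample mean is then $\overline{x}^\ast=\mathds{1}^\top\mathbf{x}^\ast/n=(\mathds{1}^\top P_s/n)\mathbf{x}(0)$, giving $\mathbf{y}=P_s^\top\mathds{1}/n$.

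For the \textbf{concatenated SFJ model}, in the inner ($k$) timescale the dynamics~\eqref{concatenatedsignedFJmodel_1} is exactly the SFJ model, which by the previous paragraph converges (using $\rho((I-\Theta)W_s)<1$) to $\mathbf{x}(s,\infty)=P_s\mathbf{x}(s,0)$; combined with the concatenation rule~\eqref{concatenatedFJmodel_2}, $\mathbf{x}(s,0)=\mathbf{x}(s-1,\infty)$, this yields the outer recursion~\eqref{concatenatedsfjmodel}. Assumption~\ref{assumption_signedconcatenatedfjmodel} says $P_s$ satisfies the SPF property, and since $P_s\mathds{1}=\mathds{1}$ the dominant eigenvalue is $1$ with right eigenvector $\mathds{1}$; arguing as in the DeGroot case, $P_s^s\to\mathds{1}\mathbf{p}_s^\top$ with $\mathbf{p}_s$ the dominant left eigenvector normalized to $\mathbf{p}_s^\top\mathds{1}=1$, so $\mathbf{x}^\ast=(\mathbf{p}_s^\top\mathbf{x}(1,0))\mathds{1}$ and $\mathbf{y}=\mathbf{p}_s$. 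The identity $\mathbf{p}_s=(I-\Theta)^{-1}\Theta\mathbf{z}_s$ follows by checking that $(I-\Theta)^{-1}\Theta\mathbf{z}_s$ is a left eigenvector of $P_s$ for eigenvalue $1$: transposing $P_s$ and using $W_s^\top\mathbf{z}_s=\mathbf{z}_s$, one verifies $P_s^\top(I-\Theta)^{-1}\Theta\mathbf{z}_s=(I-\Theta)^{-1}\Theta\mathbf{z}_s$ after rearranging $(I-(I-\Theta)W_s)^\top$; by simplicity of the eigenvalue this vector is proportional to $\mathbf{p}_s$, and the normalization constant is absorbed. Finally, $\mathbf{y}^\top\mathds{1}=1$ holds in all three cases by construction of the normalization, while the possibility $y_i<0$ is exactly what distinguishes the SPF-only hypothesis from the full (left and right) PF property. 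I expect the main obstacle to be the eigenvector identity in case (3): one must be careful that the normalizations of $\mathbf{z}_s$ and $\mathbf{p}_s$ are mutually consistent, i.e., that $\mathds{1}^\top(I-\Theta)^{-1}\Theta\mathbf{z}_s$ is the right scalar, which requires tracking the $\mathbf{z}_s^\top\mathds{1}=1$ convention through the algebra rather than treating the eigenvectors as defined only up to scale.
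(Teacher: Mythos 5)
Your proof is correct and follows the standard spectral-projection argument; the paper itself states this lemma without proof, citing \cite{Fontan2022,razaq2025signed}, and your derivation is exactly the route those results (and the unsigned Lemma~1) rely on, the only genuine modification being that positivity of the dominant left eigenvector is never used, so dropping the left PF property costs nothing. Your closing caveat is well placed: the identity $\mathbf{p}_s=(I-\Theta)^{-1}\Theta \mathbf{z}_s$ holds only up to the normalizing scalar $\mathds{1}^\top (I-\Theta)^{-1}\Theta \mathbf{z}_s$, which is generally not $1$ when $\mathbf{z}_s^\top\mathds{1}=1$, so the stated formula must be read modulo that rescaling.
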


\subsection{Analysis of the wisdom of crowd}

In this subsection, conditions for improving or undermining the wisdom of crowd on signed graphs are investigated. %
Since the social power vector \(\mathbf{y}\) may have negative components, we need to redefine the concentration region.
In particular, instead of the simplex $ \Delta$, $ \mathbf{y} $ can now lie in the hyperplane \(\Psi_1=\{\mathbf{y}\in\mathbb{R}^n|\mathds{1}^\top \mathbf{y}=1\}\).
Therefore, instead of $ \Gamma_1 $, we have to consider the convex region $\Gamma_2 = \Psi_1 \cap \Phi_1$. 
The point \(\mathds{1}/n\) is always included in this intersection, \(\mathds{1}/n\in\Gamma_2\). 
Since $ \Delta \subset \Psi_1$, it is  \(\Gamma_1\subseteq\Gamma_2\), i.e., the convex region of social power vectors in which the signed opinion dynamic models improves the wisdom of crowd may be larger than that of the unsigned models. 

Similar to the previous section,  the initial conditions are always such that \(\mathbb{E}[x_i(0)]=\zeta\) and \(\mathrm{Var}[x_i(0)]=\sigma_i^2\). 
 We report the main results in the next Theorem.

\begin{theorem}
\label{theorem_ind_unipartite}
    Consider the following cases
\benu
\item The signed DeGroot model~\eqref{SignedDeGrootModel} satisfies Assumption~\ref{assumption_SignedDeGroot};
\item The SFJ model~\eqref{SignedFJmodel} satisfies Assumption~\ref{assumption_signedfjmodel};
\item The concatenated SFJ model~\eqref{concatenatedsfjmodel} satisfies Assumption~\ref{assumption_signedconcatenatedfjmodel}.
\eenu
In each of the three cases, the model is mean accurate, i.e, $ \mathbb{E}[\overline{x}^\ast] = \zeta$. Furthermore, if \(\mathbf{y}\) is the social power vector, the variance \( \mathrm{Var}[\overline{x}^\ast]=\mathbf{y}^\top \Sigma\mathbf{y}\)
\begin{itemize}
        \item concentrates around \(\zeta\) if \(\mathbf{y}\in\mathrm{\Gamma_2}\).
        \item is neutral around \(\zeta\) if \(\mathbf{y}\in\partial\Gamma_2\).
        \item disperses around \(\zeta\) if \(\mathbf{y}\notin \Gamma_2\). 
        \item is optimized  around \(\zeta\) if \(\mathbf{y}=\frac{\Sigma^{-1}\mathds{1}}{\mathds{1}^\top \Sigma^{-1}\mathds{1}} .\)
    \end{itemize}
\end{theorem}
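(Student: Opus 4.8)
The plan is to reduce each of the three models to a single scalar linear statistic of the initial opinions and then apply the same convex-geometry argument used for Proposition~\ref{proposition_unsigned}, the only new ingredient being the bookkeeping of the normalization $\mathds{1}^\top\mathbf{y}=1$ when $\mathbf{y}$ may have negative entries.

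First I would invoke Lemma~\ref{lemma_mergedsigneddynamisc} to write, in all three cases, $\overline{x}^\ast=\mathds{1}^\top\mathbf{x}^\ast/n=\mathbf{y}^\top\mathbf{x}(0)$, with $\mathbf{y}=\mathbf{z}_s$ for the signed DeGroot model, $\mathbf{y}=P_s^\top\mathds{1}/n$ for the SFJ model, and $\mathbf{y}=\mathbf{p}_s$ for the concatenated SFJ model. The point needing care here -- and the only place the signed setting differs from the unsigned one -- is the normalization $\mathds{1}^\top\mathbf{y}=1$: for the SFJ model it follows from $P_s\mathds{1}=\mathds{1}$, since then $\mathds{1}^\top\mathbf{y}=\mathds{1}^\top P_s^\top\mathds{1}/n=\mathds{1}^\top\mathds{1}/n=1$; for the signed DeGroot and concatenated SFJ models, $\mathds{1}$ is the right eigenvector and $\mathbf{z}_s$ (resp. $\mathbf{p}_s$) the left eigenvector associated with the simple, strictly dominant eigenvalue $1$ of $W_s$ (resp. $P_s$) under the SPF assumptions, so left and right eigenvectors are not orthogonal, $\mathds{1}^\top\mathbf{z}_s\neq0$ ($\mathds{1}^\top\mathbf{p}_s\neq0$), and they can be scaled so that the inner product equals $1$. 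Once this is in place, mean accuracy is immediate: $\mathbb{E}[\overline{x}^\ast]=\mathbf{y}^\top\mathbb{E}[\mathbf{x}(0)]=\zeta\,\mathds{1}^\top\mathbf{y}=\zeta$, regardless of the signs of the $y_i$.

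Next I would handle the variance. By independence of the $x_i(0)$, $\mathrm{Var}[\overline{x}^\ast]=\mathrm{Var}[\mathbf{y}^\top\mathbf{x}(0)]=\sum_i y_i^2\sigma_i^2=\mathbf{y}^\top\Sigma\mathbf{y}=\mathcal{T}_{\bm\sigma}(\mathbf{y})$, while $\mathrm{Var}[\overline{x}(0)]=\mathcal{T}_{\bm\sigma}(\mathds{1}/n)=\frac{1}{n^2}\sum_j\sigma_j^2$. Hence concentration $\mathcal{T}_{\bm\sigma}(\mathbf{y})<\mathcal{T}_{\bm\sigma}(\mathds{1}/n)$ is equivalent to $n^2\sum_i y_i^2\sigma_i^2<\sum_j\sigma_j^2$, i.e. $\mathbf{y}\in\mathrm{int}(\Phi_1)$; since $\mathds{1}^\top\mathbf{y}=1$ already places $\mathbf{y}\in\Psi_1$, this reads $\mathbf{y}\in\mathrm{int}(\Gamma_2)$, with equality on $\partial\Gamma_2$ (neutral) and the reverse inequality outside $\Gamma_2$ (dispersing). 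The difference from the unsigned case is only that $\mathbf{y}$ now ranges over the hyperplane $\Psi_1$ rather than the simplex $\Delta$, whence $\Gamma_1\subseteq\Gamma_2$.

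Finally, for the optimal allocation I would minimize the convex quadratic $\mathbf{y}^\top\Sigma\mathbf{y}$ over $\{\mathbf{y}\in\mathbb{R}^n:\mathds{1}^\top\mathbf{y}=1\}$; since $\Sigma\succ0$ the minimizer is unique, and the stationarity condition $2\Sigma\mathbf{y}=\lambda\mathds{1}$ together with the constraint gives $\mathbf{y}=\Sigma^{-1}\mathds{1}/(\mathds{1}^\top\Sigma^{-1}\mathds{1})$. I expect the only genuine obstacle to be the eigenvector bookkeeping described above -- confirming that the social power vector still sums to one in the signed setting; the remaining computations are exactly those already carried out for Proposition~\ref{proposition_unsigned}. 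I would also remark in passing that, because $\Sigma$ is positive diagonal, $\Sigma^{-1}\mathds{1}>0$, so the optimizer in fact lies in $\Delta\subseteq\Gamma_2$, just as in the unsigned case. \qed
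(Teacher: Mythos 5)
Your proposal is correct and follows essentially the same route as the paper's proof: reduce $\overline{x}^\ast$ to the linear statistic $\mathbf{y}^\top\mathbf{x}(0)$ with $\mathds{1}^\top\mathbf{y}=1$ (taken from Lemma~\ref{lemma_mergedsigneddynamisc}), compare $\mathbf{y}^\top\Sigma\mathbf{y}$ with $\mathds{1}^\top\Sigma\mathds{1}/n^2$ to obtain the $\Gamma_2$ trichotomy, and solve the equality-constrained quadratic program by Lagrange multipliers for the optimizer. Your added observations (the non-orthogonality of left and right dominant eigenvectors justifying the normalization, and the fact that the optimizer $\Sigma^{-1}\mathds{1}/(\mathds{1}^\top\Sigma^{-1}\mathds{1})$ is positive for diagonal $\Sigma$ and hence lies in $\Delta$) are correct but not needed beyond what the paper already does.
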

\proof For the three models, we notice that \(\overline{x}^\ast = \mathbf{y}^\top  \mathbf{x}(0),\) where \(\mathbf{y}^\top \mathds{1} = 1\). The expected value of \(\overline{x}^\ast\) is:  \(\mathbb{E}[\overline{x}^\ast] = \mathbb{E}\left[\mathbf{y}^\top  \mathbf{x}(0)\right]=\mathbf{y}^\top  \mathbb{E}\left[\mathbf{x}(0)\right]=\mathbf{y}^\top \mathds{1}  \zeta=\zeta.\) This means that the three models are mean accurate.

For all three models, the variance of the average of the initial opinions is given by \( \mathrm{Var}[\overline{x}(0)] = \mathds{1}^\top \Sigma\mathds{1}/n^2 =  \sum_{i=1}^n \sigma_i^2/n^2\), while the variance at the end of the discussion(s) is \(\mathrm{Var}[\overline{x}^\ast]=\mathbf{y}^\top  \Sigma \mathbf{y}=\sum_{i=1}^n y_i^2 \sigma_i^2\).
This implies that the variance concentrates around $ \zeta$ if  
\(
\sum_{i=1}^n y_i^2 \sigma_i^2 < \sum_{j=1}^n \sigma_j^2/n^2,
\)
which, combined with \(\mathbf{y} \in \Psi_1\) leads to $ \mathbf{y} \in \mathrm{int}(\Gamma_2)$ where the wisdom of the crowd improves. 
Conversely, the wisdom of the crowd is undermined if \(\mathbf{y} \notin \Gamma_2\), indicating that \(\mathrm{Var}[\overline{x}^\ast]\) disperses around \(\zeta\). 

To prove that the wisdom of the crowd is optimized to \(\frac{1}{\mathds{1}^\top \Sigma^{-1}\mathds{1}}\) when \(\mathbf{y} = \frac{\Sigma^{-1}\mathds{1}}{{\mathds{1}^\top \Sigma^{-1}\mathds{1}}}\), we need to solve the constrained optimization problem:
\[
\min_{\mathbf{y}} \quad \mathbf{y}^\top  \Sigma \mathbf{y} \quad \text{subject to} \quad \mathds{1}^\top  \mathbf{y} = 1.
\]
Define the Lagrangian incorporating the constraint: \(\mathcal{L}(\mathbf{y}, \lambda) = \mathbf{y}^\top  \Sigma \mathbf{y} + \lambda \left( 1 - \mathds{1}^\top  \mathbf{y} \right),\)
where \(\lambda\) is the Lagrange multiplier.
To find the minimum, we take the derivative of \(\mathcal{L}\) with respect to \(\mathbf{y}\) and \(\lambda\), and set them to zero:
\begin{itemize}
    \item Gradient with respect to \(\mathbf{y}\):
    \(
    \frac{\partial \mathcal{L}}{\partial \mathbf{y}} = 2\Sigma \mathbf{y} - \lambda \mathds{1} = 0.
    \)
    Rearranging gives:
     \(
\mathbf{y} = \frac{\lambda}{2} \Sigma^{-1} \mathds{1}.
\)

    \item Gradient with respect to \(\lambda\):
    \(
    \frac{\partial \mathcal{L}}{\partial \lambda} = 1 - \mathds{1}^\top  \mathbf{y} = 0.
    \)
    This enforces the constraint:
    \(
    \mathds{1}^\top  \mathbf{y} = 1.
    \)
\end{itemize}
Substituting \(\mathbf{y}\) into the constraint:
\(
\mathds{1}^\top  \left( \frac{\lambda}{2} \Sigma^{-1} \mathds{1} \right) = 1,
\)
which gives
\(
\lambda = \frac{2}{\mathds{1}^\top  \Sigma^{-1} \mathds{1}}.
\)
Substituting \(\lambda\) back into \(\mathbf{y}\) gives
\[
\mathbf{y} = \frac{\frac{2}{\mathds{1}^\top  \Sigma^{-1} \mathds{1}}}{2} \Sigma^{-1} \mathds{1}= \frac{\Sigma^{-1} \mathds{1}}{\mathds{1}^\top  \Sigma^{-1} \mathds{1}}.
\]
To verify that the variance is optimized,
substitute \(\mathbf{y} = \frac{\Sigma^{-1}\mathds{1}}{{\mathds{1}^\top \Sigma^{-1}\mathds{1}}}\) into \(\mathrm{Var}[\overline{x}^\ast]\). The variance becomes:
\[
\mathrm{Var}[\overline{x}^\ast] = \left( \frac{\Sigma^{-1} \mathds{1}}{\mathds{1}^\top  \Sigma^{-1} \mathds{1}} \right)^\top  \Sigma \left( \frac{\Sigma^{-1} \mathds{1}}{\mathds{1}^\top  \Sigma^{-1} \mathds{1}} \right)= (\mathds{1}^\top  \Sigma^{-1} \mathds{1})^{-1}.
\]
This completes the proof.
\qed

\begin{remark}
The set difference $ \Gamma_2 \setminus \Gamma_1 $ corresponds to some agents having negative social power, case that could happen for all three models under the conditions mentioned in Theorem~\ref{theorem_ind_unipartite}.
\end{remark}

\begin{example}
\label{example1}
    Consider a graph of 3 nodes with interaction matrix 
    \[W_s=\begin{bmatrix}
        0.3 & 0.5 & 0.2\\
        -0.5 & 0.9 &0.6\\
        0.9 &0.4 &-0.3
    \end{bmatrix}\]
    which satisfies the SPF property. The social power vector for the DeGroot model is \(\mathbf{y} =\mathbf{z}_s =[-0.117,\, 0.7766,\, 0.3404]^\top\). This indicates that the group as a whole assigns negative social power to individual \(1\). 
    In this case, if the variances satisfy \(\sigma_1^2 > 5.05\sigma_2^2 + 0.048\sigma_3^2\) (the corresponding concentration region $\Gamma_2$ is shown in Figure~\ref{figure_hyperplanehyperellipsoid} (green + blue areas)), the DeGroot model with \(W_s\) will improve the wisdom of the crowd. For instance, \({  \sigma}_i^2=\{6,\, 1,\, 1\}\) will result in the variance of the average of final opinions \(\mathrm{Var}[\overline{x}^\ast]=0.8\) being smaller than the variance of the average of initial opinions \(\mathrm{Var}[\overline{x}(0)]=0.89\), thus improving the wisdom of the crowd.  \qed
\end{example}
Theorem~\ref{theorem_ind_unipartite}  proves that the wisdom of the crowd can be improved even in the presence of antagonistic interactions in the network. This improvement is particularly evident when there are significant differences in opinion variances among individuals, as shown in Example~\ref{example1}. Figure~\ref{figure_hyperplanehyperellipsoid} shows the convex region \(\Gamma_2\) for \(\sigma_i^2=\{6 ,\,1 ,\,1\}\), which extends significantly beyond the \(3\)-simplex.

\begin{figure}
    \centering    \includegraphics[width=0.8\linewidth]{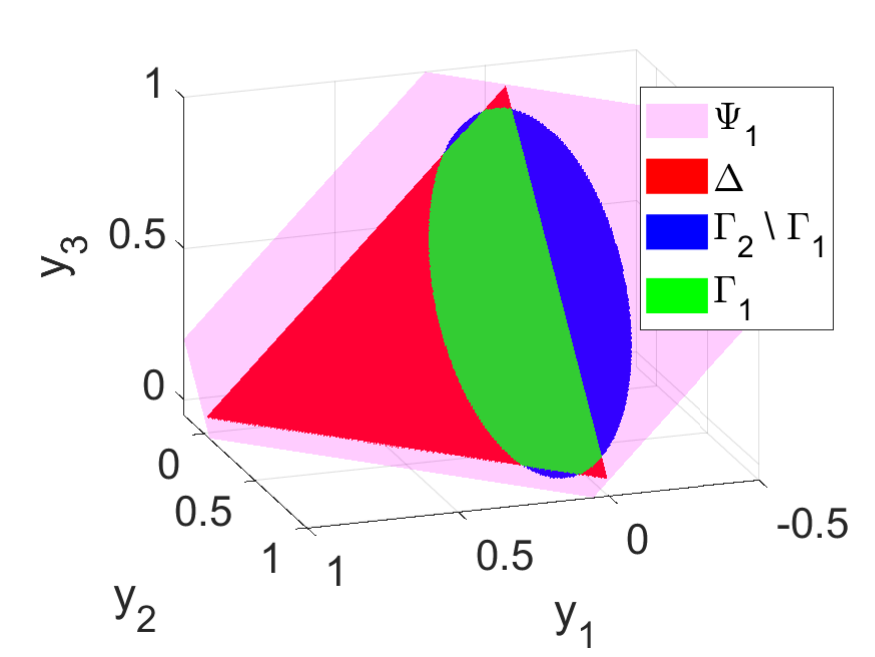}
\caption{Concentration regions $ \Gamma_1 $ and $ \Gamma_2 $ corresponding to an hyperellipsoid determined by the variances $ \sigma_i^2 = \{ 6, \, 1, \, 1\}$. The hyperplane \(\Psi_1\) is shown in pink, the \(3\)-simplex is shown in red, the region $ \Gamma_1 $ is in green while $ \Gamma_2 $ includes both the regions in blue and in green.
}
\label{figure_hyperplanehyperellipsoid}
\end{figure}

A negative social power, i.e., \(y_i<0\) for some $ i$, extends to social powers the idea that agents can be split into two opposite fronts by their antagonistic interaction pattern in $W$. Hence it reflects the group's perception that the individual \(i\)  is exerting a negative influence in the collective judgment, based on the values of the edges of $ W_s$.
When $ W_s $ satisfies the SPF property, the presence of negative entries in the social power vector does not compromise consensus. 

Notice further that \(y_i<0\) is not related to self-weight (and to self-appraisal mechanisms \cite{jia2013dynamics}) in the interaction matrix $ W_s$. 
As seen in Example~\ref{example1}, individual \(1\) has a positive self weight, while individual \(3\) has a negative self weight. However, from the group's perspective, individual \(1\) ends up being assigned a negative social power, whereas individual \(3\) receives a positive social power.

While negative social power reflects the group’s perception of an agent’s influence, negative self-weights in the interaction matrix have a different interpretation. They arise as a structural consequence of enforcing normalization in the presence of strong friendly ties. For instance, in Example~\ref{example1}, node 3 assigns large positive weights to its neighbors (0.9 and 0.4) and a negative weight to itself (\(-0.3\)). This does not imply that the agent ``knows its opinion is wrong,'' but rather that it relies heavily on external opinions and slightly counteracts its previous stance to maintain balance.

Lastly, a negative social power may also result in the consensus value \(\mathbf{x}^\ast\) outside the convex hull of initial opinions, i.e., \(\mathbf{x}^\ast\notin co(\mathbf{x}(0))\) as shown in the examples of \cite{razaq2023propagation,razaq2025signed}. 
Even with this property, the wisdom of the crowd may improve if the conditions of Theorem~\ref{theorem_ind_unipartite} are satisfied.

If also the dominant left eigenvector \(\mathbf{z}_s\) is positive (the matrix is called eventually positive in this case, see \cite{razaq2025signed}), the social powers of all individuals in the signed DeGroot model and in the concatenated SFJ model are positive (for any value of \(\Theta\)). In this case, the regions $\Gamma_1 $ and $ \Gamma_2$ are equal, and Theorem~\ref{theorem_ind_unipartite} is equivalent to  Proposition~\ref{proposition_unsigned}. However, for the SFJ model, the social power may still be negative even when considering an eventually positive matrix \(W_s\), as illustrated in Example~\ref{ESmatrixexample}.
\begin{example}
    \label{ESmatrixexample}
    Consider the eventually positive matrix
    \[W_s=\begin{bmatrix}
        0.94  &  0.76 &  -0.7\\
   -0.06 &   0.61&    0.45\\
    0.25   & 0.9  & -0.15
    \end{bmatrix}\]
    with the stubbornness matrix \(\Theta={\rm diag}(\begin{bmatrix}
        0.1& 0.5& 0.4
    \end{bmatrix})\). The social power vector for the DeGroot model is \(\mathbf{y}=\mathbf{z}_s=\begin{bmatrix}
        0.1220 &   0.6844 &   0.1935    \end{bmatrix}^\top \) and for the SFJ model is \(\mathbf{y}=P^\top \mathds{1}/3=\begin{bmatrix}
            0.1583 &   0.9315 &  -0.0898        \end{bmatrix}^\top, \) while for the concatenated SFJ model, it is \(\mathbf{y}=\mathbf{p}=\begin{bmatrix}
                0.0164  &   0.8276 &  0.1560
            \end{bmatrix}^\top . \)\qed
\end{example}
    In the concatenated SFJ model, we notice that the variance has a transient behavior, where it changes during the initial discussion and settles after several discussions, as shown in Example~\ref{example_fjconcatenatedfj}.
  \begin{example}
  \label{example_fjconcatenatedfj}
    Consider a graph of \(3\) nodes with a SPF interaction  matrix
    \[
        W_s=\begin{bmatrix}
            0.4 & 0.8 & -0.2\\
            0.9 & 0.1 & 0\\
            0.6 & 0.1 & 0.3
        \end{bmatrix}
    \]
If the stubbornness parameters are \(\Theta=\mathrm{diag}(\begin{bmatrix}
        0.8&0.5&0.8
    \end{bmatrix})\), we find that \(\rho((I-\Theta)W_s)<1\). The social power vector for the SFJ model is \(\mathbf{y}=P^\top \mathds{1}/3=\begin{bmatrix}
        0.5057& 0.2321& 0.2622
    \end{bmatrix}^\top ,\) and for the concatenated SFJ model it is \(\mathbf{y}=\mathbf{p}=\begin{bmatrix}
        1.0769& 0.2308&-0.3077
    \end{bmatrix}^\top. \)
    If the initial variances are \(\sigma_i^2=\{1,\, 4,\, 4\}\), the variance of the average of the initial opinions is \(\mathrm{Var}[\overline{{x}}(1,0)]=1\). After the first discussion, the variance decreases to \(\mathrm{Var}[\overline{{x}}(1,\infty)]=0.74\), indicating an improvement in wisdom. However, over the long run, the variance increases to \(\mathrm{Var}[\overline{x}^\ast]=1.75\), showing that while wisdom improves during an intermediate step (in a SFJ model), it is undermined in the long term (in a concatenated SFJ model), see Figure~\ref{figure_Variancepropagationexample2}. The propagation of variance of the average opinion \(\mathrm{Var}[\overline{x}(s,0)]\) is shown in red in Figure~\ref{figure_Variancepropagationexample2}. \qed
\end{example}

A similar transient behavior of the variance can also be observed in the DeGroot and FJ models, as well as within a discussion in the concatenated FJ model when analyzed on the single discussion time scale. In these cases, the variance of the aggregated opinion also has a transient behavior.

\begin{figure}
    \centering    \includegraphics[width=0.6\linewidth]{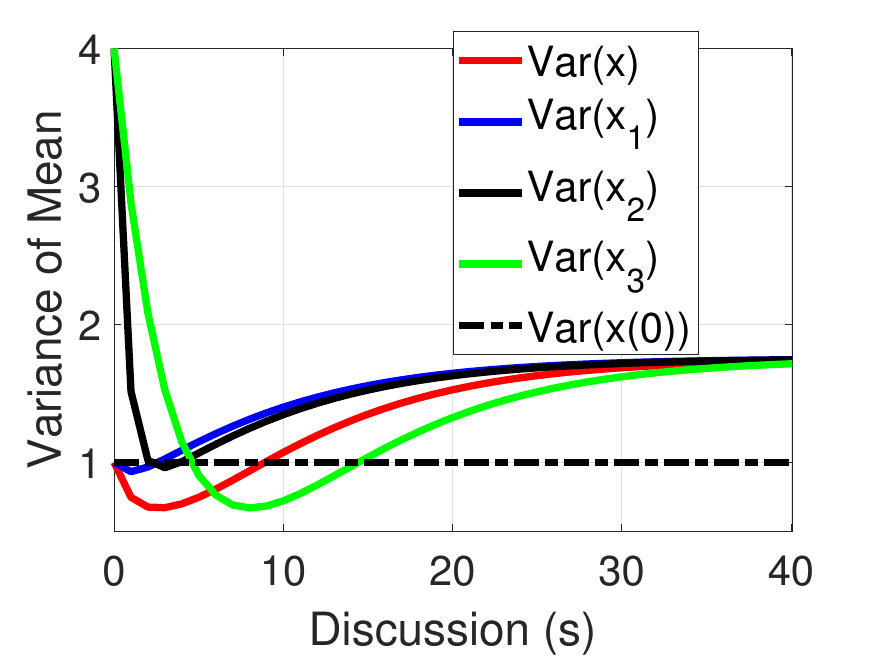}
\caption{Variance propagation for Example~\ref{example_fjconcatenatedfj}}
\label{figure_Variancepropagationexample2}
\end{figure}

\section{Wisdom of Crowd in Signed Bipartite Networks}
\label{section:wisdombipartitemodels}
The signed graphs investigated in the previous section are all characterized by a positive dominant right eigenvector.  
This limits the amount of antagonism that can be encoded in the model and renders the wisdom of crowd behavior similar to the unsigned graph case. 
To go beyond these situations and to treat graphs with an arbitrary amount of antagonism, consider now a signed digraph with a ``right-hand side bipartite structure'', i.e., such that the interaction matrix, call it $ W_b $, has a dominant right eigenvector $ \mathbf{v}$ of mixed signature, that is, $W_b$ satisfies the SSPF property.
Then, we know from the definition of SSPF that there exists a gauge transformation matrix $\Xi={\rm diag} \left(\mathbf{v}\right)$ such that  $W_s=\Xi W_b\Xi$ satisfies the SPF property.
Consider the associated models
\benu
\item Signed DeGroot model:
\begin{equation}
\mathbf{x}\left(k+1\right)=W_b\mathbf{x}\left(k\right)  \label{bipartiteDeGroot} 
\end{equation}
\item SFJ: 
\begin{equation}
\mathbf{x}\left(k+1\right)=\left(I-\Theta\right)W_b\mathbf{x}(k)+\Theta \mathbf{x}(0)  \label{bipartiteSFJ}
\end{equation}
\item Concatenated SFJ:
\begin{equation}
\mathbf{x}(s,k+1)=\left(I-\Theta\right)W_b\mathbf{x}(s,k)+\Theta \mathbf{x}(s,0)\label{bipartiteConcatSFJ1}
\end{equation}
with Eq.~\eqref{concatenatedFJmodel_2} for the concatenation of discussions.
\eenu
It follows by construction that these systems are gauge transformations of the models \eqref{SignedDeGrootModel}, \eqref{SignedFJmodel}, and \eqref{concatenatedsignedFJmodel_1} we considered in the previous section, obtained by simply applying a change of basis $\mathbf{r}(t)=\Xi\, \mathbf{x}(t)$. Consider the following assumptions for the bipartite structure of the network. 

\begin{assumption}
\label{assumption_signedDeGrootbipartite}
    \(W_b\) satisfies the SSPF property.
\end{assumption}
 \begin{assumption}
\label{assumption_signedfjmodelbipartite}
    \(W_b\) satisfies the SSPF property and \(\theta_i\) s.t. \(0\leq\theta_i<1\), \(\forall i\). Furthermore, \(\rho((I-\Theta)W_b)<1\).
\end{assumption}

    \begin{assumption}
\label{assumption_signedconcatenatedfjmodelbipartite}
    \(W_b\) satisfies the SSPF property and \(\theta_i\) s.t. \(0\leq\theta_i<1\), \(\forall i\). Furthermore, \(\rho((I-\Theta)W_b)<1\) and 
    \begin{equation}
    \label{eq:Pb}
        P_b=(I-(I-\Theta)W_b)^{-1}\Theta     
    \end{equation}
    satisfies the SSPF property.
\end{assumption}

\subsection{Asymptotic Behavior}
The peculiarity of the signed bipartite case is that at the equilibrium $ \mathbf{x}^\ast$ both $ \mathbf{v} $ and $ \mathbf{y}$, the right and left dominant eigenvectors of the interaction matrix of relevance,  typically have components of mixed signs (and not necessarily the same). 
For consensus-seeking models like the signed DeGroot and concatenated SFJ models, this implies that bipartite consensus is achieved, as stated in the next lemma (given without proof, as it can be easily derived from Lemma~\ref{lemma_mergedsigneddynamisc} through a gauge transformation).

\begin{lemma} \cite{Fontan2022,razaq2025signed} 
\label{lemma:SPF-asympt} The following results hold for the convergence of bipartite opinion dynamics models under their respective assumptions:
\benu
    \item \textbf{Signed DeGroot Model:} If Assumption~\ref{assumption_signedDeGrootbipartite} is satisfied, then the model~\eqref{bipartiteDeGroot} converges to the consensus value
    \(
       \mathbf{x}^\ast=\mathbf{z}_b^\top \mathbf{x}(0)\mathbf{v},
    \)
where $ \mathbf{v} $ is s.t. $ | v_i| =1$ $ \forall \, i$, and the social power vector is \(\mathbf{y}=\mathbf{z}_b\), the dominant left eigenvector of \(W_b\) with \(\mathbf{y}^\top \mathbf{v}=1\).
    \item \textbf{SFJ Model:} If Assumption~\ref{assumption_signedfjmodelbipartite} is satisfied, then the SFJ model~\eqref{bipartiteSFJ} converges to \(    \mathbf{x}^\ast=P_b\mathbf{x}(0),\)
 where \(P_b\) is defined in Eq.~\eqref{eq:Pb} with \(P_b\mathbf{v}=\mathbf{v}\) and $ \mathbf{v} $ is s.t. $ | v_i| =1$ $ \forall \, i$, and the social power vector is \(\mathbf{y}=P_b^\top \mathds{1}/n\) with \(\mathbf{y}^\top \mathbf{v}=\mathds{1}^\top \mathbf{v}/n\).
    \item \textbf{Concatenated SFJ Model:} If Assumption~\ref{assumption_signedconcatenatedfjmodelbipartite} is satisfied, then the concatenated SFJ model~\eqref{bipartiteConcatSFJ1},~\eqref{concatenatedFJmodel_2} is stable in the $k$ scale and converges to
    \begin{equation}
        \mathbf{x}(s+1,\infty)=P_b\mathbf{x}(s,\infty)\label{bipartiteConcatSFJ}.
    \end{equation}
Eq.~\eqref{bipartiteConcatSFJ} gives also the  opinion dynamics model in the discussion scale \(s\), and converges to consensus
    \(        \mathbf{x}^\ast=P_b^\infty\mathbf{x}(1,0)=\mathbf{p}_b^\top \mathbf{x}(1,0)\mathbf{v}.\)
    The social power vector at the end of the series of discussions is  \(\mathbf{y}=\mathbf{p}_b\), the dominant left eigenvector of \(P_b\) which can be expressed also as \(\mathbf{y}=(I-\Theta)^{-1}\Theta\mathbf{z}_b\) where \(\mathbf{z}_b\) is the left eigenvector of \(W_b\), with \(\mathbf{y}^\top \mathbf{v}=1\), (and $ P_b \mathbf{v} = \mathbf{v}$, of components $ |v_i |=1$ $ \forall \, i$).
\eenu 
\end{lemma}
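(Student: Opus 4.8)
The plan is to derive the statement from Lemma~\ref{lemma_mergedsigneddynamisc} by a change of basis, exactly as announced: since $W_b$ satisfies the SSPF property, the gauge matrix $\Xi=\mathrm{diag}(\mathbf{v})$ (with $|v_i|=1$, hence $\Xi^2=I$, and diagonal, hence commuting with $I-\Theta$) conjugates $W_b$ into $W_s=\Xi W_b\Xi$, which satisfies the SPF property. First I would set $\mathbf{r}(k)=\Xi\,\mathbf{x}(k)$ in each of the three bipartite recursions~\eqref{bipartiteDeGroot},~\eqref{bipartiteSFJ},~\eqref{bipartiteConcatSFJ1} and use $\Xi^2=I$ and $\Xi\Theta=\Theta\Xi$ to check that $\mathbf{r}$ obeys precisely the unipartite models~\eqref{SignedDeGrootModel},~\eqref{SignedFJmodel},~\eqref{concatenatedsignedFJmodel_1} driven by $W_s$.

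Second, I would verify that the bipartite Assumptions~\ref{assumption_signedDeGrootbipartite},~\ref{assumption_signedfjmodelbipartite},~\ref{assumption_signedconcatenatedfjmodelbipartite} imply the unipartite Assumptions~\ref{assumption_SignedDeGroot},~\ref{assumption_signedfjmodel},~\ref{assumption_signedconcatenatedfjmodel} for the transformed systems. This uses that a gauge transformation is a similarity, hence preserves the spectrum --- so $\rho((I-\Theta)W_s)=\rho((I-\Theta)W_b)$ and the SSPF property of $W_b$ (resp.\ $P_b$) becomes the SPF property of $W_s$ (resp.\ $P_s$) --- together with the identity $\Xi P_b\Xi=(I-(I-\Theta)\Xi W_b\Xi)^{-1}\Theta=P_s$, which says that $P_b$ from~\eqref{eq:Pb} and $P_s$ from~\eqref{eq:Ps} are conjugate under the \emph{same} $\Xi$.

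Third, I would apply Lemma~\ref{lemma_mergedsigneddynamisc} to obtain $\mathbf{r}^\ast=\mathbf{z}_s^\top\mathbf{r}(0)\mathds{1}$, $\mathbf{r}^\ast=P_s\mathbf{r}(0)$ and $\mathbf{r}^\ast=\mathbf{p}_s^\top\mathbf{r}(1,0)\mathds{1}$ for the three cases, then transform back via $\mathbf{x}^\ast=\Xi\mathbf{r}^\ast$, using $\Xi\mathds{1}=\mathbf{v}$ and $\Xi\mathbf{v}=\mathds{1}$. This yields $\mathbf{x}^\ast=(\Xi\mathbf{z}_s)^\top\mathbf{x}(0)\,\mathbf{v}$, $\mathbf{x}^\ast=(\Xi P_s\Xi)\mathbf{x}(0)=P_b\mathbf{x}(0)$, and $\mathbf{x}^\ast=(\Xi\mathbf{p}_s)^\top\mathbf{x}(1,0)\,\mathbf{v}$; setting $\mathbf{z}_b=\Xi\mathbf{z}_s$ and $\mathbf{p}_b=\Xi\mathbf{p}_s$ and carrying the normalizations of Lemma~\ref{lemma_mergedsigneddynamisc} ($\mathbf{z}_s^\top\mathds{1}=1$, $P_s\mathds{1}=\mathds{1}$, $\mathbf{p}_s=(I-\Theta)^{-1}\Theta\mathbf{z}_s$) through the conjugation gives $W_b^\top\mathbf{z}_b=\mathbf{z}_b$, $\mathbf{y}^\top\mathbf{v}=1$, $P_b\mathbf{v}=\mathbf{v}$, $\mathbf{p}_b=(I-\Theta)^{-1}\Theta\mathbf{z}_b$, and for the FJ model the stated $\mathbf{y}^\top\mathbf{v}=\mathds{1}^\top\mathbf{v}/n$. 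For the concatenated model one additionally reasons on the discussion scale as in the unipartite proof: within discussion $s$ the $k$-dynamics converge to $P_b\mathbf{x}(s,0)$, so with~\eqref{concatenatedFJmodel_2} one gets $\mathbf{x}(s+1,\infty)=P_b\mathbf{x}(s,\infty)$, whence $\mathbf{x}(s,\infty)=P_b^{\,s-1}\mathbf{x}(1,\infty)\to\mathbf{v}\mathbf{p}_b^\top\mathbf{x}(1,0)$ by the SSPF property of $P_b$.

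There is no genuine obstacle here --- the content is a bookkeeping translation under a fixed change of basis. The one place to be careful is in Steps~2--3: checking that gauge conjugation commutes with matrix inversion and with the diagonal $\Theta$, so that $P_b$, $\mathbf{z}_b$, $\mathbf{p}_b$ are genuinely the $\Xi$-images of $P_s$, $\mathbf{z}_s$, $\mathbf{p}_s$, and that every normalization against $\mathds{1}$ in Lemma~\ref{lemma_mergedsigneddynamisc} turns into the stated normalization against $\mathbf{v}$. Once the dictionary $\mathds{1}\leftrightarrow\mathbf{v}$, $W_s\leftrightarrow W_b$, $P_s\leftrightarrow P_b$ is in place, the lemma follows immediately.
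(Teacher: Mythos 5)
Your proposal is correct and follows exactly the route the paper itself indicates: the lemma is stated without proof precisely because it is derived from Lemma~\ref{lemma_mergedsigneddynamisc} via the gauge transformation $\mathbf{r}=\Xi\mathbf{x}$ with $\Xi=\mathrm{diag}(\mathbf{v})$, and your bookkeeping (conjugation of $W_b$, $P_b$, the eigenvectors, and the normalizations $\mathds{1}\leftrightarrow\mathbf{v}$) is the intended argument and checks out.
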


\subsection{Analysis of the wisdom of crowd}
As in previous sections, we assume that the individuals' initial opinion \(x_i(0)\) have mean \(\mathbb{E}[x_i(0)]=\zeta\) and variance \(\mathrm{Var}[x_i(0)]=\sigma_i^2\). Here, in the bipartite case, the average at the equilibrium point can be computed in different ways, either treating all agents equally $ \overline{x}^\ast =\mathds{1}^\top \mathbf{x}^\ast/n$ (we call it population-based average), or respecting the group bipartition $ \overline{x}^\ast =\mathbf{v}^\top \mathbf{x}^\ast/n$ (below denoted bipartition-based average).
Both cases are discussed in detail below.

\subsubsection{Population-based Average}
We aim to compute $ \mathbb{E}[\overline{x}^\ast]$  and $ \mathrm{Var}[\overline{x}^\ast]$, when $ \overline{x}^\ast =\mathds{1}^\top \mathbf{x}^\ast/n$.
The following Theorem provides sufficient conditions on wisdom for the DeGroot and concatenated SFJ model.
\begin{theorem}
\label{theorem_ind_bipartite_degroot}
   Consider the following cases
\benu
\item The signed DeGroot model~\eqref{bipartiteDeGroot} satisfies Assumption~\ref{assumption_signedDeGrootbipartite};
\item The concatenated SFJ model~\eqref{bipartiteConcatSFJ} satisfies Assumption~\ref{assumption_signedconcatenatedfjmodelbipartite}.
\eenu
In each of the two cases, the model is not mean accurate, unless \(\mathbf{y}^\top \mathds{1}=n(\mathds{1}^\top \mathbf{v})^{-1}\), where \(\mathbf{y}\) is the social power vector: the population-based average is $ \mathbb{E}[\overline{x}^\ast] =\mathds{1}^\top \mathbf{v}\mathbf{y}^\top \mathds{1} \zeta/n$ with \(\mathbf{y}^\top \mathds{1} \ne 1\) and \( |\mathds{1}^\top \mathbf{v}/n|<1\). 
Furthermore, the variance \( \mathrm{Var}[\overline{x}^\ast]=(\mathds{1}^\top \mathbf{v}/n)^{2}\mathbf{y}^\top \Sigma\mathbf{y}\)
\begin{itemize}
        \item concentrates around \(\mathds{1}^\top \mathbf{v}\mathbf{y}^\top \mathds{1} \zeta/n\) if \(\mathbf{y}\in \mathrm{int} (\Gamma_3)\)
        \item is neutral around \(\mathds{1}^\top \mathbf{v}\mathbf{y}^\top \mathds{1} \zeta/n\) if \(\mathbf{y}\in \partial\Gamma_3\)
        \item disperses around \(\mathds{1}^\top \mathbf{v}\mathbf{y}^\top \mathds{1} \zeta/n\) if \(\mathbf{y}\notin \Gamma_3\)
        
        \item is optimized  around \(\mathds{1}^\top \mathbf{v}\mathbf{y}^\top \mathds{1} \zeta/n\) if \(\mathbf{y}=\frac{ \Sigma^{-1}\mathbf{v}}{\mathbf{v}^\top \Sigma^{-1}\mathbf{v}}\) 
    \end{itemize}
    where $ \Gamma_3 = \Psi_2 \cap \Phi_2$, with $\Psi_2=\{\mathbf{y}\in\mathbb{R}^n|\mathbf{y}^\top \mathbf{v}=1\}$ and  $ \Phi_2 =\left\{\sum_{i=1}^n y_i^2\sigma_i^2 \leq  (\mathds{1}^\top \mathbf{v})^{-2} \sum_{i=1}^n\sigma_i^2\right\}$.
\end{theorem}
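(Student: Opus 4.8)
The plan is to reduce everything to the endpoint formula $\mathbf{x}^\ast = \mathbf{z}_b^\top\mathbf{x}(0)\,\mathbf{v}$ (resp. $\mathbf{x}^\ast = \mathbf{p}_b^\top\mathbf{x}(1,0)\,\mathbf{v}$) supplied by Lemma~\ref{lemma:SPF-asympt}, and then simply take expectation and variance of the scalar $\overline{x}^\ast = \mathds{1}^\top\mathbf{x}^\ast/n$. Writing $\mathbf{y}$ for the left dominant eigenvector ($\mathbf{z}_b$ or $\mathbf{p}_b$, normalized by $\mathbf{y}^\top\mathbf{v}=1$), we get $\overline{x}^\ast = (\mathds{1}^\top\mathbf{v}/n)\,\mathbf{y}^\top\mathbf{x}(0)$. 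Taking $\mathbb{E}$ and using $\mathbb{E}[\mathbf{x}(0)] = \zeta\mathds{1}$ gives $\mathbb{E}[\overline{x}^\ast] = (\mathds{1}^\top\mathbf{v}/n)(\mathbf{y}^\top\mathds{1})\,\zeta$, so the model is mean accurate exactly when $(\mathds{1}^\top\mathbf{v})(\mathbf{y}^\top\mathds{1}) = n$, i.e. $\mathbf{y}^\top\mathds{1} = n(\mathds{1}^\top\mathbf{v})^{-1}$; and since $\mathbf{v}$ has mixed signs with $|v_i|=1$, $|\mathds{1}^\top\mathbf{v}/n|<1$ strictly (barring the degenerate all-equal-sign case excluded by SSPF). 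This handles the first two displayed claims.

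Next I would compute the variance. By independence of the initial opinions, $\mathrm{Var}[\overline{x}^\ast] = (\mathds{1}^\top\mathbf{v}/n)^2\,\mathrm{Var}[\mathbf{y}^\top\mathbf{x}(0)] = (\mathds{1}^\top\mathbf{v}/n)^2\,\mathbf{y}^\top\Sigma\mathbf{y} = (\mathds{1}^\top\mathbf{v}/n)^2\sum_i y_i^2\sigma_i^2$. The baseline is $\mathrm{Var}[\overline{x}(0)] = \sum_j\sigma_j^2/n^2$. Hence concentration ($\mathrm{Var}[\overline{x}^\ast] < \mathrm{Var}[\overline{x}(0)]$) holds iff $(\mathds{1}^\top\mathbf{v})^2\sum_i y_i^2\sigma_i^2 < \sum_j\sigma_j^2$, i.e. $\sum_i y_i^2\sigma_i^2 < (\mathds{1}^\top\mathbf{v})^{-2}\sum_j\sigma_j^2$, which is precisely $\mathbf{y}\in\mathrm{int}(\Phi_2)$; combined with the normalization constraint $\mathbf{y}\in\Psi_2=\{\mathbf{y}^\top\mathbf{v}=1\}$, this is $\mathbf{y}\in\mathrm{int}(\Gamma_3)$. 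The neutral and dispersing cases follow by replacing $<$ with $=$ and with $\notin$, respectively. Note this requires care with the definition of ``concentrating'': here the reference variance in $\Phi_2$ is still the original $\sum_j\sigma_j^2/n^2$, just rescaled by $(\mathds{1}^\top\mathbf{v})^{-2}$ after the $(\mathds{1}^\top\mathbf{v}/n)^2$ prefactor is absorbed, so one must be explicit that the comparison is against $\mathrm{Var}[\overline{x}(0)]$ and not against some bipartition-adjusted baseline.

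For the optimization claim, I would solve $\min_{\mathbf{y}}\ \mathbf{y}^\top\Sigma\mathbf{y}$ subject to $\mathbf{y}^\top\mathbf{v}=1$ (the prefactor $(\mathds{1}^\top\mathbf{v}/n)^2$ is a positive constant and does not affect the minimizer), exactly as in the proof of Theorem~\ref{theorem_ind_unipartite} but with $\mathds{1}$ replaced by $\mathbf{v}$: the Lagrangian $\mathcal{L} = \mathbf{y}^\top\Sigma\mathbf{y} + \lambda(1-\mathbf{y}^\top\mathbf{v})$ gives $2\Sigma\mathbf{y} = \lambda\mathbf{v}$, so $\mathbf{y} = (\lambda/2)\Sigma^{-1}\mathbf{v}$, and substituting into $\mathbf{y}^\top\mathbf{v}=1$ yields $\mathbf{y} = \Sigma^{-1}\mathbf{v}/(\mathbf{v}^\top\Sigma^{-1}\mathbf{v})$, with optimal value $(\mathds{1}^\top\mathbf{v}/n)^2/(\mathbf{v}^\top\Sigma^{-1}\mathbf{v})$.

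The main obstacle I anticipate is not any single computation but rather the \emph{interpretive/consistency} issue: one must verify that the $\mathbf{y}$ appearing in the theorem statement (the left dominant eigenvector, with its own intrinsic value) is the same object that enters both $\mathbb{E}[\overline{x}^\ast]$ and $\mathrm{Var}[\overline{x}^\ast]$ under a single consistent normalization, and that the normalization $\mathbf{y}^\top\mathbf{v}=1$ from Lemma~\ref{lemma:SPF-asympt} is compatible with, but generically different from, $\mathbf{y}^\top\mathds{1}=1$ — this is exactly why mean accuracy fails. A secondary subtlety is that for the concatenated SFJ model one should confirm (via Lemma~\ref{lemma:SPF-asympt}, item 3) that the discussion-scale limit again produces an endpoint of the form $\mathbf{p}_b^\top\mathbf{x}(1,0)\,\mathbf{v}$ with $\mathbf{p}_b^\top\mathbf{v}=1$, so that the identical computation applies verbatim with $\mathbf{y}=\mathbf{p}_b$; no transient analysis in the $s$-scale is needed since $P_b$ satisfies SSPF and $P_b^\infty = \mathbf{v}\mathbf{p}_b^\top$.
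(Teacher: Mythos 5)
Your proposal is correct and follows essentially the same route as the paper's proof: both reduce to $\overline{x}^\ast = (\mathds{1}^\top \mathbf{v}/n)\,\mathbf{y}^\top \mathbf{x}(0)$ (the paper derives this via $W_b^\infty = \mathbf{v}\mathbf{y}^\top$, you via the endpoint formula of Lemma~\ref{lemma:SPF-asympt}, which is the same thing), then take expectation and variance, compare against $\mathds{1}^\top\Sigma\mathds{1}/n^2$ to obtain $\Gamma_3$, and solve the constrained quadratic program with the Lagrangian $\mathbf{y}^\top\Sigma\mathbf{y}+\lambda(1-\mathbf{y}^\top\mathbf{v})$. Your added remarks on the normalization $\mathbf{y}^\top\mathbf{v}=1$ versus $\mathbf{y}^\top\mathds{1}=1$ and on the reduction of the concatenated SFJ case via $P_b^\infty=\mathbf{v}\mathbf{p}_b^\top$ match the paper's (briefer) treatment of those points.
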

\proof To show failure in mean accuracy, consider for instance the signed DeGroot case.  The right eigenvector $ \mathbf{v} $ associated to the simple and strictly dominant eigenvalue \(1\in\Lambda(W_b)\) has components of mixed sign. In this case, \(W_b^\infty\) exists and we have \(W_b^\infty=\mathbf{v}\mathbf{y}^\top \), where the left eigenvector $ \mathbf{y}$ (denoted $ \mathbf{z}_b $ in the statement of Lemma~\ref{lemma:SPF-asympt}) has also components of mixed signs (and not necessarily the same sign pattern as $ \mathbf{v}$).  
Since, \(W_b^\infty\mathbf{v}=\mathbf{v}\) we have \(\mathbf{v}\mathbf{y}^\top \mathbf{v}=\mathbf{v}\) which results in \(\mathbf{y}^\top \mathbf{v}=1\) or \(\sum_{i=1}^n y_iv_i=1\).
If \(\mathbf{v}\) is not strictly positive, then this implies $ \mathds{1}^\top  \mathbf{y} \neq 1 $. 
As \(\mathbf{x}^\ast=W_b^\infty \mathbf{x}(0)\), it is \(\mathbf{x}^\ast=\mathbf{v}\mathbf{y}^\top  \mathbf{x}(0)\) and \(\overline{x}^\ast=\mathds{1}^\top \mathbf{v}\mathbf{y}^\top \mathbf{x}(0)/n\) with \(|\mathds{1}^\top \mathbf{v}|<n\).
Finally, \(\mathbb{E}[\mathbf{x}(0)]=\zeta \mathds{1}\) implies \(\mathbb{E}[\overline{x}^\ast]=\mathds{1}^\top \mathbf{v}\mathbf{y}^\top \mathds{1}\zeta/n\ne\zeta\), i.e., the mean is not accurate, if \(n (\mathds{1}^\top \mathbf{v})^{-1}\ne\mathbf{y}^\top \mathds{1}\).
The argument is similar for the concatenated SFJ model. 

Concerning the variance, note that the social power vector $\mathbf{y}$ lies in the hyperplane \(\Psi_2\). 
At equilibrium, the population-based average becomes
\(
\overline{x}^\ast = \frac{\mathds{1}^\top \mathbf{v}}{n}\,\mathbf{y}^\top \mathbf{x}(0),
\)
so its variance is
\[
\mathrm{Var}[\overline{x}^\ast] = \left(\frac{\mathds{1}^\top \mathbf{v}}{n}\right)^2 \sum_{i=1}^n y_i^2 \sigma_i^2.
\]
The wisdom will improve if
\[
\left(\frac{\mathds{1}^\top \mathbf{v}}{n}\right)^2 \sum_{i=1}^n y_i^2 \sigma_i^2 < \frac{1}{n^2}\sum_{i=1}^n \sigma_i^2,
\]
which is equivalent to
\(
\sum_{i=1}^n y_i^2\sigma_i^2 <  (\mathds{1}^\top \mathbf{v})^{-2} \sum_{i=1}^n\sigma_i^2,\)
that is the interior of hyperellipsoid \(\Phi_2\).
Then, the concentration region is the interior of the convex set
\(
\Gamma_3 = \Psi_2 \cap \Phi_2.
\)
Thus, $\mathbf{y} \in \mathrm{int}(\Gamma_3)$ implies variance concentration, while $\mathbf{y} \notin \Gamma_3$ implies dispersion.

Finally, the optimization problem $\min\limits_{\mathbf{y}^\top \mathbf{v}=1}\mathbf{y}^\top \Sigma \mathbf{y}$ is a strictly convex quadratic program with linear constraint; the unique minimizer is obtained by the Lagrangian
\(
\mathcal{L}(\mathbf{y},\lambda)=\mathbf{y}^\top\Sigma \mathbf{y}+\lambda(1-\mathbf{y}^\top \mathbf{v})
\),
yielding $2\Sigma \mathbf{y}-\lambda \mathbf{v}=0$. Hence the optimal $ \mathbf{y}$ is 
\(
\mathbf{y}^\ast =\frac{\lambda}{2} \Sigma^{-1}\mathbf{v}
\).
Imposing $\mathbf{y}^{\ast\top}\mathbf{v}=1$ gives $\lambda=2/( \mathbf{v}^\top \Sigma^{-1}\mathbf{v})$ and therefore
\[
\mathbf{y}^\ast \;=\; \frac{1}{\mathbf{v}^\top \Sigma^{-1}\mathbf{v}}\,\Sigma^{-1}\mathbf{v},
\qquad
\mathbf{y}^{\ast\top}\Sigma \mathbf{y}^\ast \;=\; \frac{1}{\mathbf{v}^\top \Sigma^{-1}\mathbf{v}}.
\]
Lastly, multiplying by the prefactor $(\mathds{1}^\top \mathbf{v}/n)^2$ gives the optimal population-mean variance for the two models. \qed

\begin{remark}
\label{remark_bipartition}
Notice that as the two sides of the opinion bipartition approach each other in size (i.e., the number of individuals having opposite opinions becomes more similar), the quantity \(\mathds{1}^\top \mathbf{v}/n\) decreases. This scaling factor appears in the expression for the variance at equilibrium:
\(
\mathrm{Var}[\overline{x}^\ast] = \left({\mathds{1}^\top \mathbf{v}}/{n}\right)^2 \mathbf{y}^\top \Sigma \mathbf{y}.
\)
Although \(\mathbf{y}^\top \Sigma \mathbf{y}\) remains unchanged under sign flips in \(\mathbf{y}\), the term \((\mathds{1}^\top \mathbf{v}/n)^2\) becomes smaller as the bipartition becomes more symmetric. Consequently, the size of the hyperellipsoid \(\Phi_2\) increases and the convex region \(\Gamma_3\) expands, leading to a decrease in the final variance, i.e, to a greater concentration effect. Moreover, when the sizes become equal i.e., \(\mathds{1}^\top \mathbf{v}=0\), both mean and variance vanish: \( \mathbb{E}[\overline{x}^\ast] =0\) and \(\mathrm{Var}[\overline{x}^\ast]=0\). 
\end{remark}

The following Theorem provides sufficient conditions on wisdom for the SFJ model.

\begin{theorem}
\label{theorem_ind_bipartite_SFJ}
If Assumption~\ref{assumption_signedfjmodelbipartite} is satisfied, then the SFJ model~\eqref{bipartiteSFJ} is not mean accurate: if \(\mathbf{y}\) is the social power vector, the population-based average is $ \mathbb{E}[\overline{x}^\ast] =\mathbf{y}^\top \mathds{1} \zeta$, with \(\mathbf{y}^\top \mathds{1} \ne 1\).
Furthermore, the variance \( \mathrm{Var}[\overline{x}^\ast]=\mathbf{y}^\top \Sigma\mathbf{y}\)
\begin{itemize}
        \item concentrates around \(\mathbf{y}^\top \mathds{1} \zeta\) if \(\mathbf{y}\in \mathrm{int}(\Gamma_4)\);
        \item is neutral around \(\mathbf{y}^\top \mathds{1} \zeta\) if \(\mathbf{y}\in \partial\Gamma_4\);
        \item disperses around \(\mathbf{y}^\top \mathds{1} \zeta\) if \(\mathbf{y}\notin \Gamma_4\);
        \item is optimized  around \(\mathbf{y}^\top \mathds{1} \zeta\) if \(\mathbf{y}=\frac{\mathds{1}^\top \mathbf{v}}{n}\frac{\Sigma^{-1}\mathbf{v}}{\mathbf{v}^\top \Sigma^{-1}\mathbf{v}} \), provided that \(\mathds{1}^\top \mathbf{v}\ne 0\),
    \end{itemize}
    where $ \Gamma_4 = \Psi_3 \cap \Phi_1$, with \(\Psi_3=\{\mathbf{y}\in\mathbb{R}^n|\mathbf{y}^\top \mathbf{v}=\mathds{1}^\top \mathbf{v}/n)\}\).
\end{theorem}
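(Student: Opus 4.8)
The plan is to reduce the whole statement to the convergence result of Lemma~\ref{lemma:SPF-asympt}, case~2, and then reuse, with the correct linear constraint on the social power, the variance/ellipsoid bookkeeping and the constrained quadratic minimization already carried out in Theorem~\ref{theorem_ind_unipartite} and Theorem~\ref{theorem_ind_bipartite_degroot}. First I would invoke Lemma~\ref{lemma:SPF-asympt}: under Assumption~\ref{assumption_signedfjmodelbipartite} the model~\eqref{bipartiteSFJ} converges to $\mathbf{x}^\ast = P_b\mathbf{x}(0)$ with $P_b\mathbf{v}=\mathbf{v}$ and social power $\mathbf{y} = P_b^\top\mathds{1}/n$. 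Hence the population-based average is $\overline{x}^\ast = \mathds{1}^\top\mathbf{x}^\ast/n = (P_b^\top\mathds{1}/n)^\top\mathbf{x}(0) = \mathbf{y}^\top\mathbf{x}(0)$, and $\mathbf{y}$ automatically satisfies $\mathbf{y}^\top\mathbf{v} = \mathds{1}^\top(P_b\mathbf{v})/n = \mathds{1}^\top\mathbf{v}/n$, i.e. $\mathbf{y}\in\Psi_3$; note also that $\mathds{1}/n\in\Psi_3$, so $\Psi_3$ (hence $\Gamma_4$) is nonempty and contains the no-influence allocation.

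For the mean, taking expectations with $\mathbb{E}[\mathbf{x}(0)]=\zeta\mathds{1}$ gives $\mathbb{E}[\overline{x}^\ast] = \mathbf{y}^\top\mathds{1}\,\zeta$; mean accuracy would require the extra coincidence $\mathbf{y}^\top\mathds{1}=\mathds{1}^\top P_b\mathds{1}/n=1$, i.e. $P_b\mathds{1}=\mathds{1}$, which the bipartite structure does not deliver (it only yields $P_b\mathbf{v}=\mathbf{v}$ with $\mathbf{v}\ne\mathds{1}$ of mixed signature), so the model is generically not mean accurate. For the variance, independence of the $x_i(0)$ gives $\mathrm{Var}[\overline{x}^\ast] = \mathbf{y}^\top\Sigma\mathbf{y}=\sum_i y_i^2\sigma_i^2$, to be compared with $\mathrm{Var}[\overline{x}(0)]=\sum_j\sigma_j^2/n^2$. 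Concentration, neutrality, and dispersion correspond respectively to $n^2\sum_i y_i^2\sigma_i^2$ being strictly less than, equal to, or strictly greater than $\sum_j\sigma_j^2$, i.e. to $\mathbf{y}\in\mathrm{int}(\Phi_1)$, $\mathbf{y}\in\partial\Phi_1$, $\mathbf{y}\notin\Phi_1$; intersecting with $\mathbf{y}\in\Psi_3$ gives exactly the three cases with $\Gamma_4=\Psi_3\cap\Phi_1$. It is worth flagging here that, unlike in Theorem~\ref{theorem_ind_bipartite_degroot}, no prefactor $(\mathds{1}^\top\mathbf{v}/n)^2$ appears (because $P_b$ is not rank one), which is precisely why the relevant hyperellipsoid is $\Phi_1$ and not $\Phi_2$.

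For the optimal allocation I would solve $\min_{\mathbf{y}}\mathbf{y}^\top\Sigma\mathbf{y}$ subject to the linear constraint $\mathbf{y}^\top\mathbf{v}=\mathds{1}^\top\mathbf{v}/n$ defining $\Psi_3$; since $\Sigma\succ0$ this is a strictly convex program with a unique minimizer. The Lagrangian $\mathcal{L}(\mathbf{y},\lambda)=\mathbf{y}^\top\Sigma\mathbf{y}+\lambda(\mathds{1}^\top\mathbf{v}/n-\mathbf{y}^\top\mathbf{v})$ yields $2\Sigma\mathbf{y}=\lambda\mathbf{v}$, hence $\mathbf{y}=\tfrac{\lambda}{2}\Sigma^{-1}\mathbf{v}$; imposing the constraint fixes $\lambda=2\,\mathds{1}^\top\mathbf{v}/(n\,\mathbf{v}^\top\Sigma^{-1}\mathbf{v})$, which is well defined exactly when $\mathds{1}^\top\mathbf{v}\ne0$ (otherwise the constraint forces $\mathbf{y}=\mathbf{0}$ and the variance is trivially zero). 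This gives $\mathbf{y}^\ast=\tfrac{\mathds{1}^\top\mathbf{v}}{n}\,\tfrac{\Sigma^{-1}\mathbf{v}}{\mathbf{v}^\top\Sigma^{-1}\mathbf{v}}$ with optimal value $\mathbf{y}^{\ast\top}\Sigma\mathbf{y}^\ast=(\mathds{1}^\top\mathbf{v}/n)^2/(\mathbf{v}^\top\Sigma^{-1}\mathbf{v})$; a quick check confirms $\mathbf{y}^{\ast\top}\mathbf{v}=\mathds{1}^\top\mathbf{v}/n$, so $\mathbf{y}^\ast\in\Psi_3$ as required, and since the optimum is at most the value at $\mathds{1}/n$ it indeed lies in $\Gamma_4$.

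The computations are essentially routine; the one point that needs care — and the main (minor) obstacle — is keeping straight which normalization of $\mathbf{y}$ is in force. Here the constraint inherited from $P_b\mathbf{v}=\mathbf{v}$ is $\mathbf{y}^\top\mathbf{v}=\mathds{1}^\top\mathbf{v}/n$, which is neither $\mathbf{y}^\top\mathbf{v}=1$ (bipartite DeGroot) nor $\mathbf{y}^\top\mathds{1}=1$ (unipartite cases); tracking this correctly is what pins down $\Psi_3$, the shape $\Gamma_4=\Psi_3\cap\Phi_1$, the precise prefactor $\mathds{1}^\top\mathbf{v}/n$ in the optimal $\mathbf{y}^\ast$, and the genuine failure of mean accuracy.
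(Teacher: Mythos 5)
Your proposal is correct and follows essentially the same route as the paper: reduce to the convergence result for $\mathbf{x}^\ast = P_b\mathbf{x}(0)$, identify the constraint $\mathbf{y}^\top\mathbf{v}=\mathds{1}^\top\mathbf{v}/n$ defining $\Psi_3$, reuse the variance comparison against $\Phi_1$, and solve the constrained quadratic program for the optimizer (the paper only sketches this, deferring to Theorems~\ref{theorem_ind_unipartite} and~\ref{theorem_ind_bipartite_degroot}, and phrases the non-mean-accuracy via the gauge transformation $\mathds{1}^\top \Xi P_s \Xi\mathds{1}/n\ne 1$ rather than your $P_b\mathbf{v}=\mathbf{v}$ vs.\ $P_b\mathds{1}=\mathds{1}$ contrast, but the genericity argument is the same). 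Your write-up is in fact more detailed than the paper's, and correctly pins down the absence of the $(\mathds{1}^\top\mathbf{v}/n)^2$ prefactor.
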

The proof follows a similar approach to that of Theorem~\ref{theorem_ind_unipartite} and~\ref{theorem_ind_bipartite_degroot}.
For instance, to prove that the bipartite SFJ model is not mean accurate, we notice  that using a gauge transformation, \[\mathbf{y}^\top \mathds{1}  = \frac{\mathds{1}^\top P_b\mathds{1}}{n}= \frac{\mathds{1}^\top \Xi P_s \Xi\mathds{1}}{n}\ne 1.\] This inequality always holds unless \(\Xi=I\), \(\frac{\mathds{1}^\top  P_s \mathds{1}}{n}= 1\) which is the case for the unipartite SFJ model treated in Theorem~\ref{theorem_ind_unipartite}. The optimal social power vector is scaled by the factor \(\mathds{1}^\top \mathbf{v}/n\) so that it satisfies the affine constraint \(\mathbf{y}^\top\mathbf{v}=\mathds{1}^\top \mathbf{v}/n\), thereby ensuring that \(\mathbf{y}\)  lies in the hyperplane \(\Psi_3\).

\begin{example}
  \label{example_signedbipartitedegroot}
    Consider a graph of \(3\) nodes with the SSPF interaction matrix 
    \[W_b=\begin{bmatrix}
        0.3&-0.6&-0.1\\-0.3&0.8&-0.1\\-0.2&0.9&-0.1
    \end{bmatrix}.\]
    For this \(W_b\), the social power vector for the signed DeGroot model is \(\mathbf y^\top =\mathbf{z}_b^\top =[0.3039  ,\,  -0.7353   ,\, 0.0392]\) with the dominant right eigenvector \(\mathbf{v}^\top =[1 ,\,-1,\, -1]\). If the initial opinions of the individuals \(\mathbf{x}(0)\) are selected from a random distribution with mean \(\zeta=5\) and variances \(\sigma_i^2=\{4,\,1,\,8\}\), the variance of the average opinion at the start of the discussion is \(\mathrm{Var}[\overline{x}(0)]=1.44\). 
    
Through the DeGroot dynamics, the population-based average of the individuals concentrates around the false truth \(\mathbb{E}[\overline{x}^\ast]=\mathds{1}^\top \mathbf{v}\mathbf{y}^\top \mathds{1}\zeta/n=0.6536\), with variance \(\mathrm{Var}[\overline{x}^\ast]=0.1025\). If instead we consider the unipartite network, i.e., \(W_s=\Xi W_b\Xi\) with \(\Xi={\rm diag}(\mathbf{v})\), then the opinions concentrate around the true value \(\mathbb{E}[\overline{x}^\ast]=5\) with the variance \(\mathrm{Var}[\overline{x}^\ast]=0.9224\). 

    Now, consider the SFJ case. For stubbornness values \(\Theta={\rm diag}([0.2,\,0.4,\,0.6])\), the social power vector for the SFJ model is \(\mathbf{y}^\top =\mathds{1}^\top  P_b/3=[0.0550 ,\,   0.2286 ,\,   0.1598]\).
    Following the SFJ model, the population average of the individuals concentrate around the false truth \(\mathbb{E}[\overline{x}^\ast]=\mathbf{y}^\top \mathds{1}\zeta=2.2170\), with variance \(\mathrm{Var}[\overline{x}^\ast]=0.2686\). If instead we consider the unipartite SFJ model, i.e., \(W_s=\Xi W_b\Xi\) with same stubbornness, then the opinions concentrate around the true value \(\mathbb{E}[\overline{x}^\ast]=5\) with the variance \(\mathrm{Var}[\overline{x}^\ast]=0.7901\).

    Lastly, considering the concatenated SFJ case with the same stubbornness values, the social power vector for the concatenated SFJ case is \(\mathbf{y} =\mathbf{p} =[0.1498  ,\, -0.9662  ,\,  0.1159]^\top\).
    For the concatenated SFJ model, the population average of the individuals concentrate around the false truth \(\mathbb{E}[\overline{x}^\ast]=\mathds{1}^\top \mathbf{v}\mathbf{y}^\top \mathds{1}\zeta/n=1.1675\), with variance \(\mathrm{Var}[\overline{x}^\ast]=0.1256\). If instead we consider the unipartite concatenated SFJ model, i.e., \(W_s=\Xi W_b\Xi\) with the same stubbornness, then the opinions concentrate around the true value \(\mathbb{E}[\overline{x}^\ast]=5\) with the variance \(\mathrm{Var}[\overline{x}^\ast]=1.1308\). The variance propagation through the discussions for the bipartite case is shown in Figure~\ref{fig:variancepropagation_bipartite}.
    \qed
\end{example}
\begin{figure}
    \centering
    \includegraphics[width=0.6\linewidth]{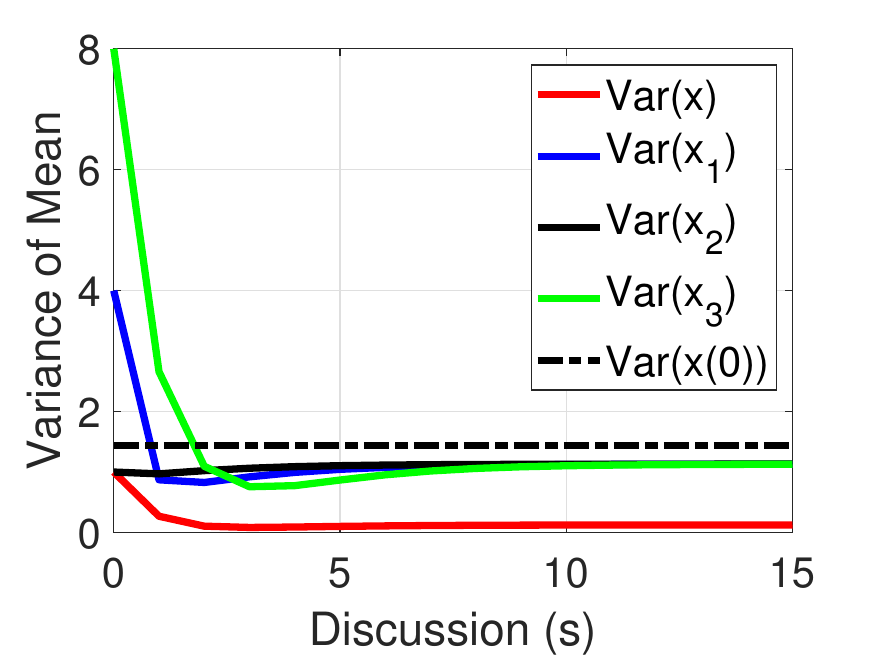}
    \caption{Variance propagation (population-based average case) for Example~\ref{example_signedbipartitedegroot}}
    \label{fig:variancepropagation_bipartite}
\end{figure}

In the bipartite concatenated SFJ case, both the mean and variance change throughout the discussions. In contrast, in the unipartite case, the mean consistently remains equal to the true value, while the variance fluctuates during the discussions. Additionally, in the unipartite case, the collective variance at the end of the discussions matches the individual variances as seen in Figure~\ref{figure_Variancepropagationexample2}. 
On the contrary, in the bipartite case, the collective variance differs from the individual variances, converging to a smaller value  as seen in Figure~\ref{fig:variancepropagation_bipartite}.

\subsubsection{Bipartition-Based Average}

In a signed $ \mathbf{v}$, the sign pattern of the  component \( v_i = \pm 1 \) determine the bipartition of the individuals into two distinct groups (visible in the bipartite consensus in the DeGroot and concatenated FJ models). 
When computing a group average, rather than simply averaging over all individuals, we can define the mean of individuals' opinions \(\overline{x}^\ast\) based on this bipartition as
\(
\overline{x}^\ast = \mathbf{v}^\top \mathbf{x}^\ast/n
\).

In this bipartition-based average case, the definition of the social power vector changes for the SFJ model. Specifically,  the social power vector becomes \(\mathbf{y} = {P_b^\top \mathbf{v}}/{n}.\)
The change reflects the fact that the averaging operator now weights opinions according to group affiliation rather than uniformly across all individuals.
Using the bipartition-based mean \(
\overline{x}^\ast = \mathbf{v}^\top \mathbf{x}^\ast/n\), we get the following Theorem.
\begin{theorem}
\label{theorem_ind_bipartite_group}
   Consider the following cases
\benu
\item The signed DeGroot model~\eqref{bipartiteDeGroot} satisfies Assumption~\ref{assumption_signedDeGrootbipartite};
\item The SFJ model~\eqref{bipartiteSFJ} satisfies Assumption~\ref{assumption_signedfjmodelbipartite}.
\item The concatenated SFJ model~\eqref{bipartiteConcatSFJ} satisfies Assumption~\ref{assumption_signedconcatenatedfjmodelbipartite}.
\eenu
In each of the three cases, the model is not mean accurate: if \(\mathbf{y}\) is the social power vector, the bipartition-based mean is $ \mathbb{E}[\overline{x}^\ast] =\mathbf{y}^\top \mathds{1} \zeta\ne \zeta$ with  \( \mathbf{y}^\top \mathbf{v}=1\) and \(\mathbf{v}\ne\mathds{1}\). 
Furthermore, the variance \( \mathrm{Var}[\overline{x}^\ast]=\mathbf{y}^\top \Sigma\mathbf{y}\)
\begin{itemize}
        \item concentrates around \(\mathbf{y}^\top \mathds{1} \zeta\) if \(\mathbf{y}\in \mathrm{int}(\Gamma_5)\)
            \item is neutral around \(\mathbf{y}^\top \mathds{1} \zeta\) if \(\mathbf{y}\in \partial\Gamma_5\)
        \item disperses around \(\mathbf{y}^\top \mathds{1} \zeta\) if \(\mathbf{y}\notin \Gamma_5\)
        \item is optimized  around \(\mathbf{y}^\top \mathds{1} \zeta\) if \(\mathbf{y}=\frac{\Sigma^{-1}\mathbf{v}}{\mathbf{v}^\top \Sigma^{-1}\mathbf{v}} \) 
    \end{itemize}
    where $ \Gamma_5 = \Psi_2 \cap \Phi_1$.
\end{theorem}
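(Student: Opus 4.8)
The plan is to handle the three models in parallel, exactly as in the proofs of Theorem~\ref{theorem_ind_unipartite} and Theorem~\ref{theorem_ind_bipartite_degroot}, exploiting the fact that the bipartition-based average is again a linear functional of the initial opinions, $\overline{x}^\ast = \mathbf{y}^\top\mathbf{x}(0)$, for a suitable social power vector $\mathbf{y}$. For the signed DeGroot model, Lemma~\ref{lemma:SPF-asympt} gives $\mathbf{x}^\ast=\mathbf{z}_b^\top\mathbf{x}(0)\,\mathbf{v}$, so $\overline{x}^\ast=\mathbf{v}^\top\mathbf{x}^\ast/n=(\mathbf{v}^\top\mathbf{v}/n)\,\mathbf{z}_b^\top\mathbf{x}(0)$; since $|v_i|=1$ for all $i$ we have $\mathbf{v}^\top\mathbf{v}=n$, hence $\overline{x}^\ast=\mathbf{z}_b^\top\mathbf{x}(0)$ with $\mathbf{y}=\mathbf{z}_b$ and no scaling prefactor. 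The same identity $\mathbf{v}^\top\mathbf{v}=n$ gives $\overline{x}^\ast=\mathbf{p}_b^\top\mathbf{x}(1,0)$ with $\mathbf{y}=\mathbf{p}_b$ for the concatenated SFJ model, while for the SFJ model $\mathbf{x}^\ast=P_b\mathbf{x}(0)$ yields $\overline{x}^\ast=\mathbf{v}^\top P_b\mathbf{x}(0)/n=(P_b^\top\mathbf{v}/n)^\top\mathbf{x}(0)$, which is precisely why the social power is redefined as $\mathbf{y}=P_b^\top\mathbf{v}/n$ in this case. In all three cases the normalization $\mathbf{y}^\top\mathbf{v}=1$ follows from Lemma~\ref{lemma:SPF-asympt} ($\mathbf{z}_b^\top\mathbf{v}=1$, $\mathbf{p}_b^\top\mathbf{v}=1$, and $\mathbf{v}^\top P_b\mathbf{v}/n=\mathbf{v}^\top\mathbf{v}/n=1$ using $P_b\mathbf{v}=\mathbf{v}$), so $\mathbf{y}$ lives on the hyperplane $\Psi_2$.

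Next I would establish the failure of mean accuracy. Taking expectations, $\mathbb{E}[\overline{x}^\ast]=\mathbf{y}^\top\mathbb{E}[\mathbf{x}(0)]=(\mathbf{y}^\top\mathds{1})\zeta$, and since $\mathbf{y}^\top\mathbf{v}=1$ with $\mathbf{v}\ne\mathds{1}$, generically $\mathbf{y}^\top\mathds{1}\ne1$. For the signed DeGroot and concatenated SFJ cases this is immediate because $\mathbf{y}$ is literally an eigenvector. For the SFJ case I would route through the gauge transformation $P_b=\Xi P_s\Xi$, $\Xi={\rm diag}(\mathbf{v})$, writing $\mathbf{y}^\top\mathds{1}=\mathbf{v}^\top P_b\mathds{1}/n=\mathds{1}^\top P_s\mathbf{v}/n$ (using $\mathbf{v}^\top\Xi=\mathds{1}^\top$ and $\Xi\mathds{1}=\mathbf{v}$ from $v_i^2=1$), which equals $(\mathds{1}^\top P_s/n)\mathbf{v}$, i.e.\ $\mathbf{y}_s^\top\mathbf{v}$ where $\mathbf{y}_s=P_s^\top\mathds{1}/n$ is the unipartite SFJ social power with $\mathbf{y}_s^\top\mathds{1}=1$ (since $P_s\mathds{1}=\mathds{1}$); because $\mathbf{v}\ne\mathds{1}$ this differs from $1$ in general, coinciding with $1$ only in the degenerate case $\Xi=I$, which is precisely the unipartite setting of Theorem~\ref{theorem_ind_unipartite}.

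For the variance, $\mathrm{Var}[\overline{x}^\ast]=\mathrm{Var}[\mathbf{y}^\top\mathbf{x}(0)]=\mathbf{y}^\top\Sigma\mathbf{y}=\sum_i y_i^2\sigma_i^2$, with \emph{no} prefactor, in contrast with Theorem~\ref{theorem_ind_bipartite_degroot}, precisely because $\mathbf{v}^\top\mathbf{v}=n$ rather than $\mathds{1}^\top\mathbf{v}$. Comparing with $\mathrm{Var}[\overline{x}(0)]=\sum_j\sigma_j^2/n^2$, concentration, neutrality and dispersion therefore correspond to $\mathbf{y}$ lying in the interior of, on the boundary of, or outside the \emph{same} hyperellipsoid $\Phi_1=\{\sum_i y_i^2\sigma_i^2\le\sum_j\sigma_j^2/n^2\}$ as in the unsigned case; intersecting with the constraint $\mathbf{y}\in\Psi_2$ gives the convex region $\Gamma_5=\Psi_2\cap\Phi_1$. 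For the optimal allocation I would solve the strictly convex quadratic program $\min\{\mathbf{y}^\top\Sigma\mathbf{y}:\mathbf{y}^\top\mathbf{v}=1\}$ via the Lagrangian $\mathcal{L}(\mathbf{y},\lambda)=\mathbf{y}^\top\Sigma\mathbf{y}+\lambda(1-\mathbf{y}^\top\mathbf{v})$, obtaining $2\Sigma\mathbf{y}=\lambda\mathbf{v}$, hence $\mathbf{y}=(\lambda/2)\Sigma^{-1}\mathbf{v}$; imposing $\mathbf{y}^\top\mathbf{v}=1$ fixes $\lambda=2/(\mathbf{v}^\top\Sigma^{-1}\mathbf{v})$, so $\mathbf{y}^\ast=\Sigma^{-1}\mathbf{v}/(\mathbf{v}^\top\Sigma^{-1}\mathbf{v})$ with optimal value $1/(\mathbf{v}^\top\Sigma^{-1}\mathbf{v})$.

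The step I expect to be the main obstacle is the bookkeeping for mean accuracy in the SFJ model: unlike the DeGroot and concatenated SFJ cases, the SFJ social power $P_b^\top\mathbf{v}/n$ is not an eigenvector of anything and $P_b\mathds{1}\ne\mathds{1}$ in the bipartite regime, so one must carefully pass through the gauge transformation to relate it to the unipartite quantity $\mathbf{y}_s$ and to pin down the exact condition ($\Xi=I$) under which accuracy is recovered. Everything else — the eigenvector normalizations, the identification of $\Phi_1$ as the relevant hyperellipsoid, and the Lagrangian computation — is routine and mirrors the earlier proofs almost verbatim.
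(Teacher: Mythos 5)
Your proposal is correct and follows essentially the same route as the paper's proof: using $\mathbf{v}^\top\mathbf{v}=n$ to reduce the bipartition-based average to $\overline{x}^\ast=\mathbf{y}^\top\mathbf{x}(0)$ with $\mathbf{y}^\top\mathbf{v}=1$, comparing $\mathbf{y}^\top\Sigma\mathbf{y}$ against $\mathds{1}^\top\Sigma\mathds{1}/n^2$ to obtain $\Gamma_5=\Psi_2\cap\Phi_1$, and solving the constrained quadratic program by the same Lagrangian. Your gauge-transformation bookkeeping for the SFJ mean-accuracy step ($\mathbf{y}^\top\mathds{1}=\mathbf{y}_s^\top\mathbf{v}$) is a slightly more explicit version of the argument the paper gives only in the discussion following Theorem~\ref{theorem_ind_bipartite_SFJ}, and matches the paper's (generic) conclusion.
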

\begin{proof}
For the signed DeGroot and concatenated SFJ models, the equilibrium satisfies
\(
\mathbf{x}^\ast = \mathbf{v}\,\mathbf{y}^\top \mathbf{x}(0),
\)
with  $\mathbf{y}^\top \mathbf{v} = 1$. Thus,
\[
\overline{x}^\ast = \frac{\mathbf{v}^\top \mathbf{x}^\ast}{n} = \frac{\mathbf{v}^\top \mathbf{v}}{n}\,\mathbf{y}^\top \mathbf{x}(0) = \mathbf{y}^\top \mathbf{x}(0),
\]
since $\mathbf{v}^\top \mathbf{v} = n$. For the SFJ model, we have \(\mathbf{x}^\ast=P_b\mathbf{x}(0)\) which leads to $\overline{x}^\ast = \mathbf{v}^\top P_b \mathbf{x}(0)/n$. This also implies \(\overline{x}^\ast =  \mathbf{y}^\top \mathbf{x}(0)\) with \(\mathbf{y}=P_b^\top\mathbf{v}/n\). Taking expectations with $\mathbb{E}[\mathbf{x}(0)] = \zeta \mathds{1}$ gives
\(
\mathbb{E}[\overline{x}^\ast] = \mathbf{y}^\top \mathds{1}\,\zeta,
\)
which equals $\zeta$ only if $\mathbf{y}^\top \mathds{1} = 1$. In the bipartite case, this condition does not  hold because $\mathbf{y}$ is constrained by $\mathbf{y}^\top \mathbf{v} = 1$, so the model is not mean accurate.

For the variance, we have
\(
\mathrm{Var}[\overline{x}^\ast] =\sum_{i=1}^n y_i^2 \sigma_i^2.
\)
So, the condition for concentration is
\(
\sum_{i=1}^n y_i^2 \sigma_i^2 < \frac{1}{n^2}\sum_{i=1}^n  \sigma_i^2,
\)
under the constraint $\mathbf{y}^\top \mathbf{v} = 1$. This defines the interior of the convex region $\Gamma_5$.
Finally, computing the \(\mathbf{y}^\ast\) follows the same approach of the proof of Theorem~\ref{theorem_ind_unipartite}.\qed
\end{proof}

\begin{example}[Example~\ref{example_signedbipartitedegroot} continued]
    \label{example_bipartite_two}
If we consider \(\overline{x}^\ast=\mathbf{v}^\top \mathbf{x}^\ast/n\) with the dominant right eigenvector \(\mathbf{v}^\top =[1 ,\,-1,\, -1]\), then  the social power vector for the DeGroot model is \(\mathbf y=\mathbf{z}_b =[0.3039  ,\,  -0.7353   ,\, 0.0392]^\top \). Also, in the bipartition-based mean case, the opinions reach a false truth \(\mathbb{E}[\overline{x}^\ast]=1.961\) with variance \(\mathrm{Var}[\overline{x}^\ast]=0.9224\). 
Notice that in this case the variance is equal to the one obtained in the unipartite case when \(W_s=\Xi W_b\Xi\) whose social power is \(\mathbf{y}=\mathbf{z}_s =[0.3039  ,\,  0.7353   ,\, -0.0392]^\top\). The latter however is mean accurate.   \qed
\end{example}

\section{Wisdom of Crowd for Dependent Opinions}
\label{Section:dependent}

So far we have only considered the case of independent initial opinions.
In this section, we extend the results obtained in previous sections to the more general situation in which the initial opinions of the individuals are interdependent. Specifically, the initial opinion \(x_i(0)\) of each individual \(i\) is drawn from a distribution with mean \(\mathbb{E}[x_i(0)]=\zeta\) and variance \(\mathrm{Var}[x_i(0)]=\sigma_i^2\), while also incorporating covariances such that \(\mathrm{Cov}[x_i(0)x_j(0)]=\rho_{ij}\sigma_{i}\sigma_{j}\)  where \(-1<\rho_{ij}<1\) is the correlation coefficient. 
Denoting $ \Sigma $ the covariance matrix, of entries \(\Sigma_{ii}=\sigma_i^2\) and \(\Sigma_{ji}=\Sigma_{ij}=\rho_{ij}\sigma_{i}\sigma_{j}\), it is well known that \(\Sigma\) must be positive semi-definite (p.s.d), i.e., \(\Sigma\succeq 0\).

Observe that when \(\Sigma\succeq 0\), the variance of the average of the initial opinion becomes 
\[\mathrm{Var}[\overline{x}(0)]=\mathds{1}^\top \Sigma\mathds{1}/n^2=\frac{1}{n^2}\left(\sum\limits_{i=1}^n\sigma_i^2+2\sum\limits_{i<j}\rho_{ij}\sigma_{i}\sigma_{j}\right).\] 

Let us first illustrate a special situation happening when $ \Sigma $ is not full rank.
When \(0\in\Lambda(\Sigma)\), it implies that an individual's opinion is a linear combination of the opinions of the others. In this case, for our models there exists appropriate interaction matrices leading to a social power vector \(\mathbf{y}\) for which the opinions converge to the true value \(\zeta\) with probability \(1\)  (see Example~\ref{example_varianceimpossible}). 
Such $ \mathbf{y}$ is obviously the optimal solution for the wisdom of crowd problem (thereafter denoted $ \mathbf{y}^\ast$). 

\begin{lemma}
\label{lemma_zero_variance}
Let $\Sigma \succeq 0$, then \(\exists \mathbf{y}^\ast\in\Psi_1\) such that \(\mathbf{y}^\ast\Sigma\mathbf{y}^\ast=0\) \textit{iff} $\Psi_1 \cap \ker(\Sigma) \neq \emptyset$.
\end{lemma}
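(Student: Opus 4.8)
The plan is to prove the biconditional by unpacking both sides in terms of the geometry of the positive semi-definite quadratic form $\mathbf{y}^\top \Sigma \mathbf{y}$. The key observation is that since $\Sigma \succeq 0$, the associated quadratic form is non-negative everywhere, so for any vector $\mathbf{y}$ we have $\mathbf{y}^\top \Sigma \mathbf{y} = 0$ if and only if $\mathbf{y} \in \ker(\Sigma)$. This is the standard fact that a psd matrix $\Sigma$ admits a factorization $\Sigma = B^\top B$ (e.g.\ via the symmetric square root $B = \Sigma^{1/2}$), whence $\mathbf{y}^\top \Sigma \mathbf{y} = \|B\mathbf{y}\|^2 = 0 \iff B\mathbf{y} = 0 \iff \Sigma \mathbf{y} = 0$, where the last equivalence uses that $\ker(B) = \ker(\Sigma)$ for $B = \Sigma^{1/2}$. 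I would state this as the one substantive ingredient.

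With that in hand, both directions are immediate. For the ``if'' direction: suppose $\Psi_1 \cap \ker(\Sigma) \neq \emptyset$ and pick any $\mathbf{y}^\ast$ in the intersection. Then $\mathbf{y}^\ast \in \Psi_1$ (so it is a legitimate social power vector, satisfying $\mathds{1}^\top \mathbf{y}^\ast = 1$) and $\mathbf{y}^\ast \in \ker(\Sigma)$, so by the observation above $\mathbf{y}^{\ast\top} \Sigma \mathbf{y}^\ast = 0$, which is exactly what is claimed. For the ``only if'' direction: suppose there exists $\mathbf{y}^\ast \in \Psi_1$ with $\mathbf{y}^{\ast\top}\Sigma\mathbf{y}^\ast = 0$; by the observation this forces $\mathbf{y}^\ast \in \ker(\Sigma)$, and since also $\mathbf{y}^\ast \in \Psi_1$ we conclude $\Psi_1 \cap \ker(\Sigma) \neq \emptyset$, witnessed by $\mathbf{y}^\ast$ itself.

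I do not anticipate a real obstacle here: the statement is essentially a restatement of the kernel characterization of the null set of a psd quadratic form, combined with the definition of $\Psi_1$. The only point worth being careful about is the direction ``$\mathbf{y}^\top \Sigma \mathbf{y} = 0 \Rightarrow \Sigma \mathbf{y} = 0$,'' which genuinely uses positive semi-definiteness (it is false for indefinite $\Sigma$); I would make sure to invoke the factorization $\Sigma = \Sigma^{1/2}\Sigma^{1/2}$ explicitly rather than leave it implicit. A minor remark I might append: this lemma guarantees existence of a zero-variance (hence optimal) social power vector precisely when the hyperplane $\Psi_1$ meets the degeneracy subspace $\ker(\Sigma)$, which is generic once $\mathrm{rank}(\Sigma) < n$ unless $\ker(\Sigma)$ happens to lie entirely in the parallel hyperplane $\{\mathds{1}^\top \mathbf{y} = 0\}$ — this connects the algebraic condition to the intuition given before the lemma statement, and motivates Example~\ref{example_varianceimpossible}.
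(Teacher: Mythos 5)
Your proof is correct and follows essentially the same route as the paper's: both directions are handled by the observation that for $\Sigma \succeq 0$ the quadratic form vanishes exactly on $\ker(\Sigma)$, with the intersection point serving as the witness in each direction. You merely make the key psd fact more explicit (via the factorization $\Sigma = \Sigma^{1/2}\Sigma^{1/2}$) than the paper, which simply asserts it.
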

\begin{proof}
($\Rightarrow$) If $\Psi_1 \cap \ker(\Sigma) \neq \emptyset$, choose $\mathbf{y}^\ast \in \Psi_1 \cap \ker(\Sigma)$. Then $\mathbf{y}^{\ast\top}\Sigma\mathbf{y}^\ast = 0$.  
($\Leftarrow$) Conversely, if there exists $\mathbf{y}^\ast \in \Psi_1$ with $\mathbf{y}^{\ast\top}\Sigma\mathbf{y}^\ast = 0$, then since $\Sigma \succeq 0$, the quadratic form is zero only when $\mathbf{y}^\ast \in \ker(\Sigma)$. Hence, $\mathbf{y}^\ast \in \Psi_1 \cap \ker(\Sigma)$.\qed
\end{proof}

If $\Psi_1 \cap \ker(\Sigma) = \emptyset$, then every feasible $\mathbf{y}$ lies outside the nullspace, and since $\Sigma\succeq 0$, the minimum variance is strictly positive as shown in Example~\ref{example_varianceimpossible}. 

\begin{example}
    \label{example_varianceimpossible}
    Consider three individuals with initial opinions having mean \(\mathbb{E}[x_i(0)]=\zeta\) and a covariance matrix
 \[\Sigma_1=\begin{bmatrix}
        1 &0.5 &1.5\\
0.5& 2 &2.5\\
1.5& 2.5 &4
    \end{bmatrix}.\] 
    Since \(0\in\Lambda(\Sigma_1)\), the individuals' opinions are linearly dependent. Specifically, they obey to the constraint \(x_3(0)= x_1(0)+x_2(0)\). With the social power vector \(\mathbf{y}=[1,\, 1,\, -1]^\top \), the aggregated opinion \(\overline{x}^\ast=\mathbf{y}^\top \mathbf{x}(0)\) has a mean \(\zeta\) and zero variance. 
    Now consider a second example with
    \[
    \Sigma_2 = \begin{bmatrix}
    1 & 1 & 0 \\
    1 & 1 & 0 \\
    0 & 0 & 1
    \end{bmatrix}.
    \]
    This matrix is also singular, with \( 0 \in \Lambda(\Sigma_2) \), but  \( \mathrm{ker}(\Sigma_2)\cap \Psi_1=\emptyset \). Therefore, no social power vector \( \mathbf{y} \in \Psi_1 \) can achieve zero variance, and the minimum achievable variance is strictly positive, i.e., \(\mathrm{Var}[\overline{x}^\ast]=0.5\) with  \(\mathbf{y}^\ast=[0.25,\,0.25,\,0.5]^\top\).\qed
\end{example}

    When \( 0 \in \Lambda(\Sigma) \), if the social power vector \( \mathbf{y} \) is set to the normalized eigenvector corresponding to the zero eigenvalue, then  \( \mathbf{y}^\top  \Sigma \mathbf{y} = 0 \). This is illustrated in Example~\ref{example_varianceimpossible}, where \( [1,\, 1,\, -1]^\top  \) is the eigenvector associated with \( 0 \in \Lambda(\Sigma_1) \).
This situation is equivalent to the independent case where one of the individuals has zero variance. In fact, by applying an appropriate linear transformation that diagonalizes the covariance matrix, linear combinations of opinions become the new set of independent variables. In this transformed basis, the presence of a zero eigenvalue corresponds to one variable having zero variance, meaning it is perfectly known.
Notice that in general exploiting this special situation requires to have signed interaction matrices, as the social power vector $ \mathbf{y}$ will typically be a signed vector, as Example~\ref{example_varianceimpossible} shows.
From now on, we ignore this trivial case and consider only covariance matrices that are positive definite (p.d), i.e. \(\Sigma\succ 0\).

\subsection{Unipartite Signed Opinion Dynamics models}
Assume that we are in the signed unipartite case, i.e., $ W_s $ has the SPF property and the right dominant eigenvector $ \mathds{1}$. 
When the initial opinions are correlated, the variance of the equilibrium average \(\overline{x}^\ast\) is given by
\[
\mathrm{Var}[\overline{x}^\ast] = \mathbf{y}^\top \Sigma \mathbf{y} = \sum_{i=1}^n y_i^2 \sigma_i^2 + 2 \sum_{i<j} y_i y_j \rho_{ij} \sigma_i \sigma_j.
\]
The wisdom of crowds improves if \(\mathrm{Var}[\overline{x}^\ast] < \mathrm{Var}[\overline{x}(0)] \)
which occurs in the interior of a hyperellipsoid
\[
\Phi_3 = \left\{ n^2 \frac{\sum_{i=1}^n y_i^2 \sigma_i^2 + 2 \sum_{i<j} y_i y_j \rho_{ij} \sigma_i \sigma_j}{\sum_{i=1}^n \sigma_i^2 + 2 \sum_{i<j} \rho_{ij} \sigma_i \sigma_j} \leq 1 \right\}.
\]
Since $\mathbf{y}$ must satisfy $\mathds{1}^\top \mathbf{y} = 1$, the convex region of interest is
\(
\Gamma_6 = \Psi_1 \cap \Phi_3.
\)

The following proposition summarizes the wisdom of crowd properties in the unipartite dependent opinion case. Its proof is analogous to that of Theorem~\ref{theorem_ind_unipartite} and it is therefore omitted.
\begin{proposition}
When $ \Sigma $ is a pd covariance matrix, in the signed unipartite case all statements of Theorem~\ref{theorem_ind_unipartite} still hold provided that the concentration region \(\Gamma_2\)  is replaced by \(\Gamma_6\).

\end{proposition}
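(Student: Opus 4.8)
The plan is to retrace the proof of Theorem~\ref{theorem_ind_unipartite} step by step, flagging at each point whether diagonality of $\Sigma$ was ever used, and noticing that only the optimization step genuinely needs the positive-definiteness hypothesis. For \emph{mean accuracy}: in each of the three signed unipartite models Lemma~\ref{lemma_mergedsigneddynamisc} gives $\overline{x}^\ast = \mathbf{y}^\top \mathbf{x}(0)$ with $\mathbf{y}^\top \mathds{1}=1$, so $\mathbb{E}[\overline{x}^\ast] = \mathbf{y}^\top \mathbb{E}[\mathbf{x}(0)] = \mathbf{y}^\top \mathds{1}\,\zeta = \zeta$; this uses only linearity of the expectation and is untouched by correlations. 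For the \emph{variance and concentration region}: the identity $\mathrm{Var}[\mathbf{y}^\top \mathbf{x}(0)] = \mathbf{y}^\top \Sigma\mathbf{y}$ holds for an arbitrary covariance matrix, so $\mathrm{Var}[\overline{x}^\ast] = \mathbf{y}^\top \Sigma\mathbf{y}$ and $\mathrm{Var}[\overline{x}(0)] = \mathds{1}^\top \Sigma\mathds{1}/n^2$ now simply carry the extra cross terms $2\sum_{i<j}\rho_{ij}\sigma_i\sigma_j$. Concentration is then equivalent to $\mathbf{y}^\top \Sigma\mathbf{y} < \mathds{1}^\top \Sigma\mathds{1}/n^2$, i.e.\ $\mathbf{y}\in\mathrm{int}(\Phi_3)$; intersecting with $\Psi_1$ gives $\mathbf{y}\in\mathrm{int}(\Gamma_6)$, and analogously $\mathbf{y}\in\partial\Gamma_6$ yields neutrality and $\mathbf{y}\notin\Gamma_6$ dispersion. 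Because $\Sigma\succ 0$, $\Phi_3$ is a genuine bounded ellipsoid and $\Gamma_6$ a compact convex set with $\mathds{1}/n$ on its boundary, so the geometric picture is the exact analogue of the $\Gamma_2$ case.

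For the \emph{optimized} bullet, the problem $\min_{\mathds{1}^\top \mathbf{y}=1}\mathbf{y}^\top \Sigma\mathbf{y}$ is a strictly convex quadratic program precisely because $\Sigma\succ 0$, so its unique minimizer is found from the Lagrangian $\mathcal{L}(\mathbf{y},\lambda)=\mathbf{y}^\top \Sigma\mathbf{y}+\lambda(1-\mathds{1}^\top \mathbf{y})$: stationarity gives $2\Sigma\mathbf{y}=\lambda\mathds{1}$, hence $\mathbf{y}^\ast=\tfrac{\lambda}{2}\Sigma^{-1}\mathds{1}$, and enforcing the constraint fixes $\lambda=2/(\mathds{1}^\top \Sigma^{-1}\mathds{1})$, yielding $\mathbf{y}^\ast=\Sigma^{-1}\mathds{1}/(\mathds{1}^\top \Sigma^{-1}\mathds{1})$ with optimal value $(\mathds{1}^\top \Sigma^{-1}\mathds{1})^{-1}$. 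These are formally the same expressions as in Theorem~\ref{theorem_ind_unipartite}, now read with $\Sigma$ the full covariance matrix, so the proof is essentially a verbatim transcription with the new $\Sigma$ substituted throughout.

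The main point requiring care — and the reason the proposition is restricted to $\Sigma\succ 0$ — is not a genuine obstacle but a geometric caveat: positive definiteness is exactly what keeps $\Phi_3$ bounded, the QP strictly convex, and $\Sigma^{-1}$ well defined. If $\Sigma$ were merely psd these properties fail, $\Phi_3$ degenerates, and the constrained minimum may drop to zero; that regime is the one already treated separately by Lemma~\ref{lemma_zero_variance} and Example~\ref{example_varianceimpossible}, and is excluded here by hypothesis. It is worth adding the remark that attaining $\mathbf{y}^\ast$, which is typically a signed vector, generally forces the realizing $W_s$ to carry negative entries; this is compatible with the SPF assumption and does not affect any step of the argument.
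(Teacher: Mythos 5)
Your proposal is correct and follows essentially the same route as the paper, which omits the proof precisely because it is the argument of Theorem~\ref{theorem_ind_unipartite} repeated verbatim with the full covariance matrix $\Sigma$ substituted for the diagonal one; your observation that positive definiteness is what keeps $\Phi_3$ bounded, $\Sigma^{-1}$ well defined, and the quadratic program strictly convex correctly identifies the only place the hypothesis $\Sigma\succ 0$ is actually needed.
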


For independent opinions (diagonal $\Sigma$), the optimal social power vector is proportional to the opinion precision:
\(
y_i^\ast \propto \frac{1}{\sigma_i^2},
\)
and favors agents with lower variance.

However, when the opinions are correlated ($\Sigma$ has off-diagonal terms), the  optimal social power vector depends on the eigenstructure of $\Sigma$. If $\Sigma = U Q U^\top$, where $U=[\mathbf{u}_1,\dots,\mathbf{u}_n]$ and $Q=\mathrm{diag}([q_1,\dots,q_n])$, then
\[
\mathbf{y}^\ast = \sum_{i=1}^n \frac{(c_i/q_i)}{\sum_{j=1}^n (c_j^2/q_j)} \mathbf{u}_i,\qquad c_i = \mathbf{u}_i^\top \mathds{1}.
\]
Thus, the weights are inversely proportional to the eigenvalues \(q_i\) (variances along principal directions) and scaled by the alignment of each eigenvector with the uniform direction $\mathds{1}$. Each component can be written as
\[
y_k^\ast = \sum_{i=1}^n \frac{c_i/q_i}{\sum_{j=1}^n c_j^2/q_j} u_{ik}.
\]

When $\Sigma\succ 0$ is diagonal (independent opinion), the optimal social power vector $\mathbf{y}^\ast$ that minimizes the variance of the aggregated opinion, $\mathrm{Var}[\overline{x}^\ast]$, is always positive because $\Sigma^{-1}\mathds{1}$ has strictly positive components. 

However, when the initial opinions are strongly positively correlated, the precision matrix \(\Sigma^{-1}\)  typically has negative off-diagonal entries, which can make some row sums negative and lead to negative components in the optimal social power vector \(\mathbf{y}^\ast\). This phenomenon, illustrated in Example~\ref{example_variances}, shows that under dependent opinions, the optimal social power allocation may require negative weights to improve the wisdom of crowd.
\begin{example}
\label{example_variances}
  Consider two individuals whose initial opinions are dependent, represented by the covariance matrix \(\Sigma=\begin{bmatrix}
        2 & 10\\10 &60
    \end{bmatrix}\).  At the start of the discussion, the group variance is $\mathrm{Var}[\overline{x}(0)]=20.5$ which exceeds the minimum variance of an individual, \(\sigma_{\min}^2=2\). This indicates that there is no wisdom of the crowd initially.  To optimize the variance of the weighted average, the optimal social power vector is
 \(\mathbf{y}=\begin{bmatrix}
        1.1905  & -0.1905
    \end{bmatrix}\) which reduces the variance to
 \(\mathrm{Var}[\overline{x}^\ast]=0.4762\) thus achieving an improved wisdom of the crowd. The interaction matrix that attains this minimum variance is \(W_s=\begin{bmatrix}
        1.25 &   -0.25\\
    1.56 &   -0.56
    \end{bmatrix}\). In contrast, if we consider an interaction pattern where \(W>0\) or use eventually positive \(W_s\), then the social power vector \(\mathbf{y}\) is constrained to lie within the 2-simplex. In this case, the variance-minimizing social power vector is \(\mathbf{y}=\begin{bmatrix}
        1  & 0
    \end{bmatrix}\) resulting in a variance of
 $\mathrm{Var}[\overline{x}^\ast]=2$ which corresponds to the minimum variance of an individual, \(\sigma_{\min}^2\). \qed
\end{example}

Next, we provide a necessary and sufficient condition for the optimal social power vector \(\mathbf{y}^\ast\) to be positive.

\begin{proposition}
   Let \(\Sigma\succ 0\), then the optimal social power vector \(\mathbf{y}^\ast\) is positive \textit{iff} \(\Sigma^{-1}\mathds{1}>\mathbf{0}\). Sufficient conditions for $\mathbf{y}^\ast > \mathbf{0}$  are
    \begin{itemize}
        \item $\Sigma$ is an M-matrix (then $\mathbf{y}^\ast > \mathbf{0}$).
        \item $\Sigma \mathds{1} = \kappa \mathds{1}$ for some $\kappa > 0$ (then $\mathbf{y}^\ast = \mathds{1}/n$).
    \end{itemize}
\end{proposition}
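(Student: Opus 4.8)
The plan is to reduce everything to the closed-form minimizer already established earlier. By the Lagrangian computation in the proof of Theorem~\ref{theorem_ind_unipartite}---which uses only $\Sigma\succ 0$ and the affine constraint $\mathds{1}^\top\mathbf{y}=1$, and hence carries over verbatim to a general positive definite covariance matrix---the unique optimizer is $\mathbf{y}^\ast = \Sigma^{-1}\mathds{1}/(\mathds{1}^\top\Sigma^{-1}\mathds{1})$. First I would note that $\Sigma\succ 0$ forces $\Sigma^{-1}\succ 0$, so the normalizing scalar $\mathds{1}^\top\Sigma^{-1}\mathds{1}$ is strictly positive. Consequently $\mathbf{y}^\ast$ and $\Sigma^{-1}\mathds{1}$ have the same componentwise sign pattern, which gives the equivalence $\mathbf{y}^\ast>\mathbf{0}\iff\Sigma^{-1}\mathds{1}>\mathbf{0}$ immediately, and reduces each sufficient condition to verifying $\Sigma^{-1}\mathds{1}>\mathbf{0}$.

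For the M-matrix case I would invoke the standard characterization that a nonsingular (here symmetric, positive definite) M-matrix has an entrywise nonnegative inverse, so $\Sigma^{-1}\ge 0$ and hence $\Sigma^{-1}\mathds{1}\ge\mathbf{0}$. To upgrade to strict positivity I would use nonsingularity of $\Sigma^{-1}$: no row of $\Sigma^{-1}$ can be identically zero, so each row contains at least one positive entry, and therefore each component of $\Sigma^{-1}\mathds{1}$---being a row sum of a nonnegative matrix---is strictly positive. Combined with the first step this yields $\mathbf{y}^\ast>\mathbf{0}$.

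For the second sufficient condition, $\Sigma\mathds{1}=\kappa\mathds{1}$ with $\kappa>0$ says $\mathds{1}$ is an eigenvector of $\Sigma$ with positive eigenvalue $\kappa$, so $\Sigma^{-1}\mathds{1}=\kappa^{-1}\mathds{1}>\mathbf{0}$; substituting into the closed form, $\mathbf{y}^\ast = (\kappa^{-1}\mathds{1})/(\kappa^{-1}n)=\mathds{1}/n$, which is both positive and the uniform allocation.

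The only real obstacle---modest in scope---is the strict positivity step in the M-matrix case: nonnegativity of $\Sigma^{-1}$ is a textbook fact, but one must rule out a vanishing row sum, and the correct argument is that this follows from invertibility of $\Sigma^{-1}$ rather than from any strict positivity of its individual entries (which would require additional irreducibility assumptions that are not needed here).
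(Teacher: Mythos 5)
Your proposal is correct and follows essentially the same route as the paper: reduce positivity of $\mathbf{y}^\ast$ to positivity of $\Sigma^{-1}\mathds{1}$ via the positive normalizing scalar $\mathds{1}^\top\Sigma^{-1}\mathds{1}>0$, use nonnegativity of the inverse of a nonsingular M-matrix, and compute directly for $\Sigma\mathds{1}=\kappa\mathds{1}$. Your explicit justification that no row of the nonnegative matrix $\Sigma^{-1}$ can vanish (by invertibility), hence all row sums are strictly positive, is a point the paper leaves implicit, and is a worthwhile clarification rather than a departure.
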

\begin{proof}
    Since \(\Sigma\succ 0\), we have \(\Sigma^{-1}\succ 0\) and \(\mathds{1}^\top\Sigma^{-1}\mathds{1}>0\). Then, for \(\mathbf{y}^\ast\) to be positive, we require that the linear system \(\Sigma \mathbf{a} = \mathds{1}\) admits a positive solution, i.e., $\mathbf{a}=\Sigma^{-1}\mathds{1} > \mathbf{0}$. An inverse M-matrix is a nonnegative matrix, and therefore has positive row sums.
    \qed
\end{proof}

For independent opinions, the inequality \(\mathrm{Var}[\overline{x}(0)] < \sigma_{\min}^2\) holds whenever  \(\sum_{i=1}^n\sigma_{i}^2 < n^2 \sigma_{\min}^2.\) Thus, for sufficiently large $n$, the condition is satisfied, and the wisdom of crowds effect emerges. 
In contrast, for dependent opinions, this property does not generally hold. For the inequality \(\mathrm{Var}[\overline{x}(0)] < \sigma_{\min}^2\) to be satisfied, it is necessary that \( 2\sum_{i< j}\rho_{ij}\sigma_i\sigma_j < n^2\sigma_{\min}^2 - \sum_{i=1}^n \sigma_i^2,\) which may fail even as $n \to \infty$ when correlations $\rho_{ij} > 0$ are present. In fact, for positive correlations, \(\mathrm{Var}[\overline{x}(0)] \) is {typically} bigger for dependent opinions than for independent ones, as seen in Example~\ref{example_absentindependent}.

\begin{example}
    \label{example_absentindependent}
    Consider \(\Sigma=\begin{bmatrix}
        2 & 1\\1 & 1
    \end{bmatrix}\). In this case, \mbox{\(\mathrm{Var}[\overline{x}(0)]=1.25\)} which is bigger than \(\sigma_{\min}^2=1\). When \(\rho_{12}=0\), then \(\mathrm{Var}[\overline{x}(0)]=0.75\). \qed
\end{example}

Notice that for some covariance matrices it might not be possible to achieve 
\(\mathrm{Var}[\overline{x}^\ast] < \sigma_{\min}^2\) for any feasible social power vector \(\mathbf{y}\), see Example~\ref{example_nopossiblewisdom}. This occurs {for instance} when one column of the matrix \(\Sigma\) is a scalar multiple of \(\mathds{1}\); equivalently, \(\Sigma\) can be written as a rank-one update of a lower-rank matrix:
\(
\Sigma = B + \sigma_i^2 \mathds{1} \mathbf{e}_i^\top,
\)
where \(\mathbf{e}_i\) is the standard basis vector with \(1\) at the \(i\)-th position, and \(B\) satisfies \(B \mathbf{e}_i = \mathbf{0}\), i.e., \(\mathrm{rank}(B) < \mathrm{rank}(\Sigma)\). 
Moreover,  we have \(\sigma_j^2 > \sigma_i^2\) \(\forall j \neq i\), making \(\sigma_i^2\) the smallest possible variance among all agents.

\begin{corollary}
\label{corollary_rank_one}
Let $\Sigma = B + \sigma_i^2  \mathds{1} \mathbf{e}_i^\top$ where $B \mathbf{e}_i = \mathbf{0}$. Then:
\begin{itemize}
    \item 
    \(
    \mathbf{y}^\ast = \mathbf{e}_i.
    \)
    \item 
    \(
    \mathbf{y}^{\ast\top} \Sigma \mathbf{y}^{\ast} = \sigma_i^2.
    \)
    \item For all $j \neq i$, $\sigma_j^2 > \sigma_i^2$.
\end{itemize}

\end{corollary}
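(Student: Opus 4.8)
The plan is to reduce everything to one structural identity and then invoke the optimal–social–power characterization already derived in Theorem~\ref{theorem_ind_unipartite}. The key observation is that, since $B\mathbf{e}_i=\mathbf{0}$, the $i$-th column of $\Sigma=B+\sigma_i^2\mathds{1}\mathbf{e}_i^\top$ is $\Sigma\mathbf{e}_i=B\mathbf{e}_i+\sigma_i^2\mathds{1}=\sigma_i^2\mathds{1}$. From this single fact all three bullets follow with essentially no computation; in particular the $(i,i)$ entry gives $\mathbf{e}_i^\top\Sigma\mathbf{e}_i=\sigma_i^2$, which will be the claimed optimal variance once $\mathbf{y}^\ast=\mathbf{e}_i$ is established.

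For the first bullet I would prove optimality of $\mathbf{e}_i$ directly, so as to cover also the degenerate case $\Sigma\succeq0$ without relying on invertibility. Note $\mathbf{e}_i\in\Psi_1$ since $\mathds{1}^\top\mathbf{e}_i=1$. Write any competitor as $\mathbf{y}=\mathbf{e}_i+\mathbf{d}$ with $\mathds{1}^\top\mathbf{d}=0$; then, using $\Sigma\mathbf{e}_i=\sigma_i^2\mathds{1}$ and symmetry of $\Sigma$, the cross terms vanish ($\mathbf{d}^\top\Sigma\mathbf{e}_i=\sigma_i^2\mathds{1}^\top\mathbf{d}=0$), so $\mathbf{y}^\top\Sigma\mathbf{y}=\sigma_i^2+\mathbf{d}^\top\Sigma\mathbf{d}\ge\sigma_i^2$ because $\Sigma\succeq0$. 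Hence $\mathbf{e}_i$ attains the constrained minimum (uniquely when $\Sigma\succ0$), i.e. $\mathbf{y}^\ast=\mathbf{e}_i$, which also matches the formula $\mathbf{y}^\ast=\Sigma^{-1}\mathds{1}/(\mathds{1}^\top\Sigma^{-1}\mathds{1})$ since $\Sigma\mathbf{e}_i=\sigma_i^2\mathds{1}$ gives $\Sigma^{-1}\mathds{1}=\mathbf{e}_i/\sigma_i^2$ when $\Sigma$ is invertible. The second bullet then follows immediately: $\mathbf{y}^{\ast\top}\Sigma\mathbf{y}^\ast=\mathbf{e}_i^\top\Sigma\mathbf{e}_i=\sigma_i^2$.

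For the third bullet I would read off the off-diagonal entries of $\Sigma\mathbf{e}_i=\sigma_i^2\mathds{1}$: for every $j\ne i$ we get $\Sigma_{ji}=\sigma_i^2$, while by definition of the covariance matrix $\Sigma_{ji}=\rho_{ij}\sigma_i\sigma_j$. Hence $\rho_{ij}=\sigma_i/\sigma_j>0$, and the bound $\rho_{ij}<1$ forces $\sigma_i<\sigma_j$, i.e. $\sigma_i^2<\sigma_j^2$. (Equivalently, the principal $2\times2$ minor of $\Sigma$ on indices $\{i,j\}$ is $\bigl[\begin{smallmatrix}\sigma_i^2&\sigma_i^2\\\sigma_i^2&\sigma_j^2\end{smallmatrix}\bigr]$, whose definiteness yields $\sigma_i^2(\sigma_j^2-\sigma_i^2)>0$.) So $\sigma_i^2$ is strictly the smallest variance.

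There is no deep obstacle here; the only points requiring a little care are (i) phrasing the minimality argument via the decomposition $\mathbf{y}=\mathbf{e}_i+\mathbf{d}$ so that it does not presuppose $\Sigma\succ0$, and (ii) making explicit that the third item, which reads like a standing hypothesis in the text preceding the corollary, is in fact an automatic consequence of the assumed rank-one structure together with the correlation bound $|\rho_{ij}|<1$.
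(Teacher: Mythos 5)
Your proof is correct and follows essentially the same route as the paper's: both hinge on the single identity $\Sigma\mathbf{e}_i = B\mathbf{e}_i + \sigma_i^2\mathds{1} = \sigma_i^2\mathds{1}$, deduce $\mathbf{y}^\ast=\mathbf{e}_i$ and the optimal variance $\sigma_i^2$ from it, and obtain the third bullet from $\Sigma_{ji}=\rho_{ij}\sigma_i\sigma_j=\sigma_i^2$ together with $|\rho_{ij}|<1$. The only (minor) difference is that you establish optimality by the self-contained decomposition $\mathbf{y}=\mathbf{e}_i+\mathbf{d}$ with vanishing cross terms, which extends to singular $\Sigma\succeq 0$, whereas the paper simply substitutes $\Sigma^{-1}\mathds{1}=\sigma_i^{-2}\mathbf{e}_i$ into the closed-form minimizer from Theorem~\ref{theorem_ind_unipartite}; since the text restricts to $\Sigma\succ 0$ before this corollary, the extra generality is not needed but does no harm.
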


\begin{proof}
From the structure of $\Sigma$, we have
\[
\Sigma \mathbf{e}_i = (B + \sigma_i^2 \mathds{1} \mathbf{e}_i^\top)\mathbf{e}_i = B \mathbf{e}_i + \sigma_i^2 \mathds{1}(\mathbf{e}_i^\top \mathbf{e}_i) = 0 + \sigma_i^2 \mathds{1} = \sigma_i^2 \mathds{1},
\]
which implies \(\Sigma^{-1}\mathds{1} = \sigma_i^{-2}  \mathbf{e}_i\).
Normalizing under the constraint $\mathbf{y}^{\ast\top} \mathds{1}=1$:
\[
\mathbf{y}^\ast = \frac{\Sigma^{-1}\mathds{1}}{\mathds{1}^\top\Sigma^{-1}\mathds{1}}=\frac{\sigma_i^{-2} \mathbf{e}_i}{\mathds{1}^\top \sigma_i^{-2} \mathbf{e}_i} = \mathbf{e}_i.
\]
Thus, the minimum variance is
\[
\mathbf{y}^{\ast\top} \Sigma \mathbf{y}^\ast = \mathbf{e}_i^\top \Sigma \mathbf{e}_i = \mathbf{e}_i^\top (B + \sigma_i^2 \mathds{1} \mathbf{e}_i^\top)\mathbf{e}_i = \sigma_i^2.
\]
Finally, since in this structured case $\Sigma_{ij} = \sigma_i^2$ $\forall j$ and $\Sigma_{ij} = \rho_{ij} \sigma_i \sigma_j$ with $|\rho_{ij}| < 1$, the variance of agent $i$ is strictly smaller than that of any other agent, so $\sigma_j^2 > \sigma_i^2$ $\forall j \neq i$. \qed
\end{proof}

\begin{remark}
When $B \mathbf{e}_i = 0$,  the covariance structure effectively collapses to a single informative direction: all uncertainty in the group average is concentrated in agent \(i\). In this case, aggregation offers no benefit, i.e., the wisdom of the crowd disappears, and the only way to minimize variance is to rely entirely on agent \(i\).  Intuitively, any other agent \(j\) is either almost a copy of agent \(i\), i.e., \(\rho_{ij}\approx 1\), (adding no new information) or so noisy, i.e., \(\sigma_j^2>>\sigma_i^2\), that including it only increases variance.
\end{remark}

\begin{example}
    \label{example_nopossiblewisdom}
    Consider \(\Sigma=\begin{bmatrix}
        2 & 1\\1 & 1
    \end{bmatrix}\) with \(\sigma_{\min}^2=1\). Here, \(\Sigma=B+\mathds{1}\mathbf{e}_2^\top \) with \(B=\begin{bmatrix}
        2 & 0\\1 & 0
    \end{bmatrix}\). The social power vector that optimizes \(\mathrm{Var}[\overline{x}^\ast]=\sigma_{\min}^2=1\) is \(\mathbf{y}=\Sigma^{-1}\mathds{1}=\mathbf{e}_2\).\qed
\end{example}

\subsection{Bipartite Signed Opinion Dynamics models}
In this subsection, we assume that the interaction pattern \(W_b\) has the SSPF property. Similar to Section~\ref{section:wisdombipartitemodels}, the analysis can be done for the two cases corresponding to population-based average \(\overline{x}^\ast=\mathds{1}^\top \mathbf{x}^\ast/n\) and to bipartition-based average \(\overline{x}^\ast=\mathbf{v}^\top \mathbf{x}^\ast/n\).
Notice that for both we start from the initial population-based average (see Remark~\ref{rem:bip-dep} below)
\[
\mathrm{Var}[\overline{x}(0)] = \frac{1}{n^2}\left(\sum_{i=1}^n \sigma_i^2 + 2\sum_{ i < j } \rho_{ij}\sigma_i \sigma_j\right).
\]

\subsubsection{Population-Based Average}
For the population-based average, \(\overline{x}^\ast=\mathds{1}^\top \mathbf{x}^\ast/n\), consider a convex region \(\Gamma_7\) defined as the intersection of the hyperplane \(\Psi_2\) and the  hyperellipsoid 
$$ \Phi_4 =\left\{\frac{\sum_{i=1}^ny_i^2\sigma_i^2+2\sum_{i<j}y_iy_j\rho_{ij}\sigma_{i}\sigma_{j}}{\sum_{i=1}^n\sigma_i^2+2\sum_{i<j}\rho_{ij}\sigma_{i}\sigma_{j}}\leq  (\mathds{1}^\top \mathbf{v})^{-2} \right\}$$ $ \Gamma_7= \Psi_2 \cap \Phi_4$ and a convex region \(\Gamma_8=\Psi_3\cap\Phi_3\).

\begin{proposition}
When $ \Sigma $ is a pd covariance matrix, in the signed bipartite case and for the population-based average, all statements of Theorems~\ref{theorem_ind_bipartite_degroot} and~\ref{theorem_ind_bipartite_SFJ} still hold  provided the convex region \(\Gamma_3\) and \(\Gamma_4\) are replaced by \(\Gamma_7\) and \(\Gamma_8\).
\end{proposition}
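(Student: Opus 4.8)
The plan is to mirror, almost verbatim, the proofs of Theorems~\ref{theorem_ind_bipartite_degroot} and~\ref{theorem_ind_bipartite_SFJ}, exploiting the fact that the covariance structure enters only through the second moment $\mathrm{Var}[\overline{x}^\ast]$, while the mean computation is insensitive to correlations. First I would recall from Lemma~\ref{lemma:SPF-asympt} that in the bipartite case $\mathbf{x}^\ast = \mathbf{v}\,\mathbf{y}^\top \mathbf{x}(0)$ for the signed DeGroot and concatenated SFJ models (with $\mathbf{y}^\top \mathbf{v}=1$), and $\mathbf{x}^\ast = P_b\mathbf{x}(0)$, hence $\overline{x}^\ast = \mathbf{y}^\top \mathbf{x}(0)$ with $\mathbf{y}=P_b^\top \mathds{1}/n$ and $\mathbf{y}^\top \mathbf{v}=\mathds{1}^\top \mathbf{v}/n$, for the SFJ model. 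Taking expectations with $\mathbb{E}[\mathbf{x}(0)]=\zeta\mathds{1}$ reproduces the mean expressions $\mathbb{E}[\overline{x}^\ast]=\mathds{1}^\top \mathbf{v}\,\mathbf{y}^\top \mathds{1}\,\zeta/n$ and $\mathbb{E}[\overline{x}^\ast]=\mathbf{y}^\top \mathds{1}\,\zeta$ exactly as in the independent case, so the mean-accuracy parts of the two theorems transfer unchanged.

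For the variance, I would compute $\mathrm{Var}[\overline{x}^\ast]=(\mathds{1}^\top \mathbf{v}/n)^2\,\mathbf{y}^\top \Sigma\mathbf{y}$ for the DeGroot/concatenated case and $\mathrm{Var}[\overline{x}^\ast]=\mathbf{y}^\top \Sigma\mathbf{y}$ for the SFJ case, now with $\Sigma$ the full covariance matrix, and substitute $\mathrm{Var}[\overline{x}(0)]=\mathds{1}^\top \Sigma\mathds{1}/n^2=\frac{1}{n^2}\bigl(\sum_i\sigma_i^2+2\sum_{i<j}\rho_{ij}\sigma_i\sigma_j\bigr)$. The concentration inequality $\mathrm{Var}[\overline{x}^\ast]<\mathrm{Var}[\overline{x}(0)]$ then rearranges into exactly the hyperellipsoid $\Phi_4$ (for DeGroot/concatenated) or $\Phi_3$ (for SFJ), which now carries the cross-correlation terms $2\sum_{i<j}y_iy_j\rho_{ij}\sigma_i\sigma_j$ in the numerator and $2\sum_{i<j}\rho_{ij}\sigma_i\sigma_j$ in the denominator; intersecting with the appropriate affine constraint ($\mathbf{y}^\top \mathbf{v}=1$, i.e.\ $\Psi_2$, for DeGroot/concatenated, and $\mathbf{y}^\top \mathbf{v}=\mathds{1}^\top \mathbf{v}/n$, i.e.\ $\Psi_3$, for SFJ) yields $\Gamma_7=\Psi_2\cap\Phi_4$ and $\Gamma_8=\Psi_3\cap\Phi_3$. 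The boundary (neutral) and exterior (dispersing) cases follow by replacing the strict inequality with equality and with the reversed inequality, respectively.

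For the optimization claim, I would solve $\min_{\mathbf{y}}\,\mathbf{y}^\top \Sigma\mathbf{y}$ subject to the relevant linear constraint $\mathbf{v}^\top \mathbf{y}=\alpha$ through the Lagrangian $\mathcal{L}(\mathbf{y},\lambda)=\mathbf{y}^\top \Sigma\mathbf{y}+\lambda(\alpha-\mathbf{v}^\top \mathbf{y})$, which gives $2\Sigma\mathbf{y}-\lambda\mathbf{v}=0$, hence $\mathbf{y}^\ast=\tfrac{\lambda}{2}\Sigma^{-1}\mathbf{v}$, and, after imposing the constraint, $\mathbf{y}^\ast=\alpha\,\Sigma^{-1}\mathbf{v}/(\mathbf{v}^\top \Sigma^{-1}\mathbf{v})$ with optimal value $\alpha^2/(\mathbf{v}^\top \Sigma^{-1}\mathbf{v})$. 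Specializing $\alpha=1$ recovers $\mathbf{y}^\ast=\Sigma^{-1}\mathbf{v}/(\mathbf{v}^\top \Sigma^{-1}\mathbf{v})$ for the DeGroot and concatenated SFJ models, and $\alpha=\mathds{1}^\top \mathbf{v}/n$ recovers $\mathbf{y}^\ast=\frac{\mathds{1}^\top \mathbf{v}}{n}\frac{\Sigma^{-1}\mathbf{v}}{\mathbf{v}^\top \Sigma^{-1}\mathbf{v}}$ for the SFJ model; multiplying the DeGroot/concatenated optimum by the prefactor $(\mathds{1}^\top \mathbf{v}/n)^2$ gives the corresponding optimal equilibrium variance. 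This is where $\Sigma\succ 0$ is essential: it makes the quadratic form strictly convex, so the unique stationary point is the global minimizer, and it guarantees that $\Sigma^{-1}$ and the scalar $\mathbf{v}^\top \Sigma^{-1}\mathbf{v}>0$ are well defined; the rank-deficient situation in which this would break down has already been separated out via Lemma~\ref{lemma_zero_variance}.

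The only real subtlety — hence the point I would be most careful about — is bookkeeping of the prefactor $(\mathds{1}^\top \mathbf{v}/n)^2$: in the DeGroot/concatenated case it must be absorbed into the right-hand side of $\Phi_4$ (as the factor $(\mathds{1}^\top \mathbf{v})^{-2}$), whereas in the SFJ case there is no prefactor and the correlation-aware hyperellipsoid is simply $\Phi_3$; one must also note that the nondegeneracy condition $\mathds{1}^\top \mathbf{v}\neq 0$ is required in exactly the same places as in the independent case (cf.\ Remark~\ref{remark_bipartition}). Since every remaining manipulation is a verbatim repeat of the independent-opinion arguments with the diagonal $\Sigma$ replaced by a full positive-definite $\Sigma$, the detailed proof can legitimately be omitted.
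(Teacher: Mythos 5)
Your proposal is correct and follows essentially the same route the paper intends: the paper omits the proof precisely because, as you argue, the mean computations are unaffected by correlations and the variance inequality $\mathrm{Var}[\overline{x}^\ast]<\mathrm{Var}[\overline{x}(0)]$ rearranges directly into $\Phi_4$ (with the $(\mathds{1}^\top\mathbf{v})^{-2}$ prefactor absorbed) for the DeGroot/concatenated case and $\Phi_3$ for the SFJ case, intersected with $\Psi_2$ and $\Psi_3$ respectively, while the Lagrangian optimization goes through verbatim with the full positive-definite $\Sigma$. Your bookkeeping of the prefactor and of the constraint levels $\alpha=1$ versus $\alpha=\mathds{1}^\top\mathbf{v}/n$ matches the paper's definitions of $\Gamma_7$ and $\Gamma_8$ exactly.
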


\subsubsection{Bipartition-Based Average} For the bipartition of the groups in the average, \(\overline{x}^\ast=\mathbf{v}^\top \mathbf{x}^\ast/n\), we can  consider a convex region \(\Gamma_9\) defined as the intersection of the hyperplane \(\Psi_2\) and  the hyperellipsoid \(\Phi_3\): $ \Gamma_9= \Psi_2 \cap \Phi_3$. Here, the social power for the SFJ model is \(\mathbf{y}=P_b^\top\mathbf{v}/n\).

\begin{proposition}
When $ \Sigma $ is a pd covariance matrix, in the signed bipartite case and for the bipartition-based average, all statements of Theorem~\ref{theorem_ind_bipartite_group} still hold  provided the convex region \(\Gamma_5\) is replaced by \(\Gamma_9\).
\end{proposition}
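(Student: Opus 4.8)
The plan is to mirror the proof of Theorem~\ref{theorem_ind_bipartite_group} line by line, replacing the diagonal covariance by the general positive-definite $\Sigma$ and the naive initial variance by the covariance-inclusive expression $\mathrm{Var}[\overline{x}(0)]=n^{-2}\bigl(\sum_i\sigma_i^2+2\sum_{i<j}\rho_{ij}\sigma_i\sigma_j\bigr)$. First I would dispose of mean accuracy, which is independent of the structure of $\Sigma$: by Lemma~\ref{lemma:SPF-asympt} the equilibrium of the signed DeGroot and concatenated SFJ models is $\mathbf{x}^\ast=\mathbf{v}\,\mathbf{y}^\top\mathbf{x}(0)$ with $\mathbf{y}^\top\mathbf{v}=1$, so $\overline{x}^\ast=\mathbf{v}^\top\mathbf{x}^\ast/n=(\mathbf{v}^\top\mathbf{v}/n)\,\mathbf{y}^\top\mathbf{x}(0)=\mathbf{y}^\top\mathbf{x}(0)$ since $\mathbf{v}^\top\mathbf{v}=n$; for the SFJ model $\mathbf{x}^\ast=P_b\mathbf{x}(0)$ gives $\overline{x}^\ast=\mathbf{v}^\top P_b\mathbf{x}(0)/n=\mathbf{y}^\top\mathbf{x}(0)$ with $\mathbf{y}=P_b^\top\mathbf{v}/n$, and $\mathbf{y}^\top\mathbf{v}=\mathbf{v}^\top P_b\mathbf{v}/n=\mathbf{v}^\top\mathbf{v}/n=1$ using $P_b\mathbf{v}=\mathbf{v}$. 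Taking expectations with $\mathbb{E}[\mathbf{x}(0)]=\zeta\mathds{1}$ yields $\mathbb{E}[\overline{x}^\ast]=\mathbf{y}^\top\mathds{1}\,\zeta$, which equals $\zeta$ only if $\mathbf{y}^\top\mathds{1}=1$; this fails in general because $\mathbf{y}$ is pinned to $\Psi_2=\{\mathbf{y}:\mathbf{y}^\top\mathbf{v}=1\}$ while $\mathbf{v}\ne\mathds{1}$.

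Next I would handle the variance. Since $\overline{x}^\ast=\mathbf{y}^\top\mathbf{x}(0)$ in all three cases, $\mathrm{Var}[\overline{x}^\ast]=\mathbf{y}^\top\Sigma\mathbf{y}=\sum_i y_i^2\sigma_i^2+2\sum_{i<j}y_iy_j\rho_{ij}\sigma_i\sigma_j$. Concentration, $\mathbf{y}^\top\Sigma\mathbf{y}<\mathrm{Var}[\overline{x}(0)]$, is exactly the defining strict inequality of the hyperellipsoid $\Phi_3$, and imposing the feasibility constraint $\mathbf{y}\in\Psi_2$ on top of it gives the convex region $\Gamma_9=\Psi_2\cap\Phi_3$; strict inequality puts $\mathbf{y}$ in $\mathrm{int}(\Gamma_9)$ (concentration), equality on $\partial\Gamma_9$ (neutral), and $\mathbf{y}\notin\Gamma_9$ gives dispersion. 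The only point needing a word is that $\Phi_3$ is a genuine (nondegenerate) ellipsoid and $\Gamma_9$ is convex: this holds because $\Sigma\succ0$ makes $\mathbf{y}\mapsto\mathbf{y}^\top\Sigma\mathbf{y}$ strictly convex with ellipsoidal sublevel sets, $\Psi_2$ is affine, and $\mathrm{Var}[\overline{x}(0)]=\mathds{1}^\top\Sigma\mathds{1}/n^2>0$ since $\Sigma\succ0$.

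Finally I would solve $\min_{\mathbf{y}^\top\mathbf{v}=1}\mathbf{y}^\top\Sigma\mathbf{y}$. This is a strictly convex quadratic program with one linear equality constraint, so the Lagrangian $\mathcal{L}(\mathbf{y},\lambda)=\mathbf{y}^\top\Sigma\mathbf{y}+\lambda(1-\mathbf{y}^\top\mathbf{v})$ has a unique stationary point: $2\Sigma\mathbf{y}-\lambda\mathbf{v}=0$ gives $\mathbf{y}=\tfrac{\lambda}{2}\Sigma^{-1}\mathbf{v}$, and $\mathbf{y}^\top\mathbf{v}=1$ forces $\lambda=2/(\mathbf{v}^\top\Sigma^{-1}\mathbf{v})$, hence $\mathbf{y}^\ast=\Sigma^{-1}\mathbf{v}/(\mathbf{v}^\top\Sigma^{-1}\mathbf{v})$ with minimal value $\mathbf{y}^{\ast\top}\Sigma\mathbf{y}^\ast=1/(\mathbf{v}^\top\Sigma^{-1}\mathbf{v})$, as claimed. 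I do not expect a real obstacle: the argument is a transcription of the proof of Theorem~\ref{theorem_ind_bipartite_group} with $\Sigma$ no longer diagonal. The only care needed is to carry the covariance cross-terms through both $\mathrm{Var}[\overline{x}^\ast]$ and $\mathrm{Var}[\overline{x}(0)]$ so that the ratio in the definition of $\Phi_3$ is reproduced exactly, and to note that $\Sigma^{-1}$ exists precisely because we have restricted attention to $\Sigma\succ0$, the degenerate case $0\in\Lambda(\Sigma)$ having been dealt with separately in Lemma~\ref{lemma_zero_variance}.
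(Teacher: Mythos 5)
Your proof is correct and is essentially the argument the paper intends: the paper omits an explicit proof of this proposition precisely because it is the proof of Theorem~\ref{theorem_ind_bipartite_group} transcribed with a non-diagonal $\Sigma\succ0$, which is exactly what you carry out (including the correct use of $\mathrm{Var}[\overline{x}(0)]=\mathds{1}^\top\Sigma\mathds{1}/n^2$ per Remark~\ref{rem:bip-dep} and the Lagrangian derivation of $\mathbf{y}^\ast=\Sigma^{-1}\mathbf{v}/(\mathbf{v}^\top\Sigma^{-1}\mathbf{v})$). No gaps.
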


\begin{remark}
\label{rem:bip-dep}
When computing the initial group variance in the bipartite case, we do not use
\(
{\mathbf{v}^\top \Sigma \mathbf{v}}/{n^2}.
\)
Instead, the initial variance is computed as
\(
\mathrm{Var}[\overline{x}(0)] = {\mathds{1}^\top \Sigma \mathds{1}}/{n^2},
\)
because at the beginning of the process, all agents are treated as a single population with the same mean $\zeta$. The bipartite structure (the corresponding opposite opinion values) emerges only after the discussion dynamics evolve. Therefore, using $\mathbf{v}^\top \Sigma \mathbf{v}/n^2$ at \mbox{$t=0$} would incorrectly assume that the population is already split into two groups, which is not the case initially. Bipartition-based averages become relevant only at the equilibrium stage when the two clusters have formed.
\end{remark}

\begin{table}[t]
\centering
\caption{Summary of convex regions $ \Gamma_k$}
\label{tab:Gamma6-9}
\renewcommand{\arraystretch}{1.5} 
\begin{adjustbox}{width=\columnwidth}
\begin{tabular}{c c c c}
\hline
 \textbf{Case}& \textbf{Region}  & \textbf{$\Phi = \{\mathbf{y}^\top \Sigma \mathbf{y} \le c\}$} & \textbf{$\Psi = \{\mathbf{a}^\top \mathbf{y} = t\}$} \\
\hline
 unip. + ind. / dep.  & $\Gamma_2$ / \(\Gamma_6\) &$c = \frac{\mathds{1}^\top \Sigma \mathds{1}}{n^2}$ & $\mathbf{a} = \mathds{1},\; t = 1$ \\
  bip. (pop. av.) + ind. / dep.& $\Gamma_3$ / \(\Gamma_7\)& $c = \frac{\mathds{1}^\top \Sigma \mathds{1}}{(\mathds{1}^\top \mathbf{v})^2}$ & $\mathbf{a} = \mathbf{v},\; t = 1$  \\
  bip. (bip. av.) + ind. / dep.& $\Gamma_4$ / \(\Gamma_8\) & $c = \frac{\mathds{1}^\top \Sigma \mathds{1}}{n^2}$ & $\mathbf{a} = \mathbf{v},\; t = \frac{\mathds{1}^\top \mathbf{v}}{n}$  \\
  bip. (bip. av.) + ind. / dep.& $\Gamma_5$ / \(\Gamma_9\)& $c = \frac{\mathds{1}^\top \Sigma \mathds{1}}{n^2}$ & $\mathbf{a} = \mathbf{v},\; t = 1$ \\
\hline
\end{tabular}
\end{adjustbox}
\end{table}

\section{Geometric Interpretation of Concentration Regions}
\label{section:geometricinterpretation}

The concentration regions $\Gamma_k$ for \(k=2,\dots,9\) arise as sections of ellipsoids of the form $\{\mathbf{y}:\mathbf{y}^\top \Sigma \mathbf{y}\le c\}$ by affine hyperplanes that encode different constraints on the social power vector $\mathbf{y}$ (see summary in Table~\ref{tab:Gamma6-9}).

\subsection{Geometric Interpretation of the Optimal Social Power Vector}

While Theorems~\ref{theorem_ind_unipartite}--\ref{theorem_ind_bipartite_group} (and the equivalent dependent cases as well) provide the analytical expression for the optimal social power vector $\mathbf{y}^*$, it is equally important to interpret its geometric significance.  
The following proposition establishes that $\mathbf{y}^*$ is not only the variance-minimizing point but also the geometric center (i.e., the centroid) of the corresponding ellipsoidal section.

\begin{proposition}
\label{proposition_center}
Let \(\Sigma\succ 0\) be a covariance matrix. Whenever the ellipsoidal convex region $\Gamma_k = \{\mathbf{y} \in \mathbb{R}^n : \mathbf{a}^\top \mathbf{y} = t, \ \mathbf{y}^\top \Sigma \mathbf{y} \le c\}$ is nonempty for some $\mathbf{a} \neq {\bm 0}$, $t \in \mathbb{R}$, and $c > 0$, the optimal social power vector
\[
\mathbf{y}^* = \frac{t\Sigma^{-1}\mathbf{a}}{\mathbf{a}^\top \Sigma^{-1}\mathbf{a}}
\]
is the geometric center of $\Gamma_k$. 
\end{proposition}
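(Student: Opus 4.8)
The plan is to show that $\Gamma_k$, viewed inside the affine hyperplane $\Psi=\{\mathbf{y}:\mathbf{a}^\top\mathbf{y}=t\}$, is an $(n-1)$-dimensional solid ellipsoid centered at $\mathbf{y}^*$, and then invoke the elementary fact that a compact convex body possessing a center of symmetry has its centroid at that center. First I would verify feasibility and perform a change of variables: one checks directly that $\mathbf{a}^\top\mathbf{y}^*=t\,(\mathbf{a}^\top\Sigma^{-1}\mathbf{a})/(\mathbf{a}^\top\Sigma^{-1}\mathbf{a})=t$, so $\mathbf{y}^*\in\Psi$. Writing an arbitrary $\mathbf{y}\in\Psi$ as $\mathbf{y}=\mathbf{y}^*+\mathbf{u}$ with $\mathbf{u}$ in the linear subspace $\mathcal{H}_0=\{\mathbf{u}:\mathbf{a}^\top\mathbf{u}=0\}$, I would expand $\mathbf{y}^\top\Sigma\mathbf{y}=\mathbf{y}^{*\top}\Sigma\mathbf{y}^*+2\mathbf{y}^{*\top}\Sigma\mathbf{u}+\mathbf{u}^\top\Sigma\mathbf{u}$.

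The hinge of the argument is that the cross term vanishes: $\mathbf{y}^{*\top}\Sigma\mathbf{u}=\tfrac{t}{\mathbf{a}^\top\Sigma^{-1}\mathbf{a}}(\Sigma^{-1}\mathbf{a})^\top\Sigma\mathbf{u}=\tfrac{t}{\mathbf{a}^\top\Sigma^{-1}\mathbf{a}}\,\mathbf{a}^\top\mathbf{u}=0$ for every $\mathbf{u}\in\mathcal{H}_0$. This is precisely the $\Sigma$-orthogonality that also makes $\mathbf{y}^*$ the minimizer of $\mathbf{y}^\top\Sigma\mathbf{y}$ on $\Psi$, exactly as in the Lagrangian computation in the proof of Theorem~\ref{theorem_ind_unipartite}. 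Consequently, in the $\mathbf{u}$ coordinates, $\Gamma_k$ is described by $\{\mathbf{u}\in\mathcal{H}_0:\mathbf{u}^\top\Sigma\mathbf{u}\le c-\mathbf{y}^{*\top}\Sigma\mathbf{y}^*\}$, with $\mathbf{y}^{*\top}\Sigma\mathbf{y}^*=t^2/(\mathbf{a}^\top\Sigma^{-1}\mathbf{a})$; nonemptiness of $\Gamma_k$ forces $c-\mathbf{y}^{*\top}\Sigma\mathbf{y}^*\ge 0$, and since $\Sigma\succ0$ restricts to a positive-definite quadratic form on $\mathcal{H}_0$, this set is a genuine bounded $(n-1)$-dimensional ellipsoid (or, in the boundary case $c=\mathbf{y}^{*\top}\Sigma\mathbf{y}^*$, the single point $\mathbf{u}=\mathbf{0}$).

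Next I would read off the symmetry: the constraint $\mathbf{u}^\top\Sigma\mathbf{u}\le c-\mathbf{y}^{*\top}\Sigma\mathbf{y}^*$ is invariant under $\mathbf{u}\mapsto-\mathbf{u}$, i.e.\ $\Gamma_k$ is invariant under the affine involution $R(\mathbf{y})=2\mathbf{y}^*-\mathbf{y}$, which is an isometry of $\Psi$. To conclude, the centroid $\mathbf{c}$ of $\Gamma_k$ is the average of $\mathbf{y}$ against the normalized $(n-1)$-dimensional Lebesgue measure on $\Gamma_k$, and since that measure is preserved by $R$, we get $\mathbf{c}=\int_{\Gamma_k}R(\mathbf{y})\,d\mu=2\mathbf{y}^*-\mathbf{c}$, hence $\mathbf{c}=\mathbf{y}^*$; the degenerate single-point case is immediate. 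If ``geometric center'' is meant instead as the center of the ellipsoid (its unique point of central symmetry), the same symmetry identifies it with $\mathbf{y}^*$ as well.

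I do not expect a genuine obstacle here; the only points requiring care are (i) confirming that the cross term vanishes \emph{exactly} — which coincides with the first-order optimality condition already exploited in the earlier theorems — and (ii) being explicit about the degenerate boundary case $c=\mathbf{y}^{*\top}\Sigma\mathbf{y}^*$ and about which notion of ``center'' is intended, so that the statement is unambiguous.
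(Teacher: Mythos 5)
Your proposal is correct and follows essentially the same route as the paper: both hinge on the vanishing of the cross term $\mathbf{y}^{*\top}\Sigma\mathbf{u}=0$ for directions $\mathbf{u}$ in the hyperplane (the first-order optimality condition), which makes $\Gamma_k$ centrally symmetric about $\mathbf{y}^*$. The only difference is cosmetic: the paper states the symmetry as equal root pairs $\tau=\pm\sqrt{(c-\mathbf{y}^{*\top}\Sigma\mathbf{y}^*)/(\mathbf{d}^\top\Sigma\mathbf{d})}$ along each feasible line, while you make the final ``symmetry implies centroid'' step explicit via the measure-preserving involution $R(\mathbf{y})=2\mathbf{y}^*-\mathbf{y}$, which is a slightly more complete finish but not a different argument.
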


\begin{proof}
The quadratic form $\mathbf{y}^\top \Sigma \mathbf{y}$ defines ellipsoidal level sets under the quadratic norm $\|\mathbf{y}\|_\Sigma = \sqrt{\mathbf{y}^\top \Sigma \mathbf{y}}$. The intersection with the hyperplane $\mathbf{a}^\top \mathbf{y} = t$ is an $(n-1)$-dimensional ellipsoid.

The point $\mathbf{y}^*$ solves the strictly convex optimization problem
\(
\min_{\mathbf{y}} \ \mathbf{y}^\top \Sigma \mathbf{y} \quad \text{s.t.} \ \mathbf{a}^\top \mathbf{y} = t,
\)
yielding $\mathbf{y}^* = \dfrac{t \, \Sigma^{-1}\mathbf{a}}{\mathbf{a}^\top \Sigma^{-1}\mathbf{a}}$. For any feasible direction $\mathbf{d}$ satisfying $\mathbf{a}^\top \mathbf{d} = 0$, consider the line $\mathbf{y}(t) = \mathbf{y}^* + \tau \mathbf{d}$. Substituting into $\mathbf{y}^\top \Sigma \mathbf{y} = c$ gives
\[
(\mathbf{d}^\top \Sigma \mathbf{d}) \tau^2 + 2(\mathbf{y}^{*\top}\Sigma \mathbf{d}) \tau + (\mathbf{y}^{*\top}\Sigma \mathbf{y}^* - c) = 0.
\]
By first-order optimality, $\mathbf{y}^{*\top}\Sigma \mathbf{d} = 0$, so the roots are symmetric:
\[
\tau = \pm \sqrt{\frac{c - \mathbf{y}^{*\top}\Sigma \mathbf{y}^*}{\mathbf{d}^\top \Sigma \mathbf{d}}}.
\]
Thus, $\mathbf{y}^*$ is equidistant from the two intersection points along any feasible direction, proving it is the centroid of $\Gamma_k$.
\qed
\end{proof}

Proposition~\ref{proposition_center} implies that for any line within $\Gamma_k$ passing through $\mathbf{y}^*$ and intersecting its boundary at two points, the distances from $\mathbf{y}^*$ to these points are equal. The convex region \(\Gamma_k\) is nonempty \textit{iff} \(c>\mathbf{y}^{*\top}\Sigma \mathbf{y}^* \iff c>{t^2}/{\mathbf{a}^\top\Sigma^{-1}\mathbf{a}}\).

\subsection{Volume Ratios of Concentration Regions}
The next result quantifies the relative sizes of the concentration regions by computing the proportionality of their volumes. Specifically, it 
provides a general formula for the $(n{-}1)$-dimensional volume ratio between any two ellipsoidal sections $\Gamma_k$ and $\Gamma_\ell$, $k,l = 2, \ldots, 9$.
\begin{proposition}\label{pro:volume_ratios_general}
Let $\Sigma \succ 0$ and consider nonempty ellipsoidal convex regions
\(
\Gamma_k = \{\mathbf{y} \in \mathbb{R}^n : \mathbf{a}_k^\top \mathbf{y} = t_k,\; \mathbf{y}^\top \Sigma \mathbf{y} \le c_k\},
\)
where $\mathbf{a}_k \neq \mathbf{0}$, $t_k \in \mathbb{R}$, and $c_k > 0$. 
Then the $(n{-}1)$-dimensional volume of $\Gamma_k$ satisfies
\(
\mathrm{Vol}\big(\Gamma_k\big) \propto r_k^{{n-1}},
\)
where \(r_k=\sqrt{c_k - ({t_k^2}/{\mathbf{a}_k^\top \Sigma^{-1}\mathbf{a}_k})}\).
Consequently, for any two such regions $\Gamma_k$ and $\Gamma_\ell$, their volume ratio is
\[
\frac{\mathrm{Vol}(\Gamma_k)}{\mathrm{Vol}(\Gamma_\ell)} =
\left(
\frac{c_k - t_k^2 / (\mathbf{a}_k^\top \Sigma^{-1}\mathbf{a}_k)}{c_\ell - t_\ell^2 / (\mathbf{a}_\ell^\top \Sigma^{-1}\mathbf{a}_\ell)}
\right)^{\frac{n-1}{2}}.
\]
\end{proposition}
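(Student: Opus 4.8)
The plan is to turn every $\Gamma_k$ into a round ball by an affine change of variables, read off the radius of the resulting spherical section, and apply the elementary volume formula for Euclidean balls; the only delicate bookkeeping is the Jacobian of this change of variables on the constraint hyperplane, and that is where the main obstacle sits.

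First I would substitute $\mathbf{z}=\Sigma^{1/2}\mathbf{y}$, so that $\mathbf{y}^\top\Sigma\mathbf{y}=\lVert\mathbf{z}\rVert^2$. The sublevel set $\{\mathbf{y}^\top\Sigma\mathbf{y}\le c_k\}$ becomes the centered ball of radius $\sqrt{c_k}$, and $\mathbf{a}_k^\top\mathbf{y}=t_k$ becomes $\mathbf{b}_k^\top\mathbf{z}=t_k$ with $\mathbf{b}_k=\Sigma^{-1/2}\mathbf{a}_k$, a hyperplane at Euclidean distance $\lvert t_k\rvert/\lVert\mathbf{b}_k\rVert=\lvert t_k\rvert/\sqrt{\mathbf{a}_k^\top\Sigma^{-1}\mathbf{a}_k}$ from the origin. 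The intersection of a centered ball of radius $\sqrt{c_k}$ with a hyperplane at distance $d$ from the centre is an $(n{-}1)$-ball of radius $\sqrt{c_k-d^2}$, which here is $r_k=\sqrt{c_k-t_k^2/(\mathbf{a}_k^\top\Sigma^{-1}\mathbf{a}_k)}$ (nonempty exactly when $c_k>t_k^2/(\mathbf{a}_k^\top\Sigma^{-1}\mathbf{a}_k)$, the criterion already recorded after Proposition~\ref{proposition_center}), with $(n{-}1)$-volume $\omega_{n-1}r_k^{\,n-1}$, where $\omega_{n-1}$ is the volume of the unit $(n{-}1)$-ball. Equivalently, and more transparently, one can translate $\Gamma_k$ by its centre $\mathbf{y}^*_k$ from Proposition~\ref{proposition_center}, write $\mathbf{y}=\mathbf{y}^*_k+\mathbf{d}$ with $\mathbf{a}_k^\top\mathbf{d}=0$, and use the Pythagorean splitting $\mathbf{y}^\top\Sigma\mathbf{y}=t_k^2/(\mathbf{a}_k^\top\Sigma^{-1}\mathbf{a}_k)+\mathbf{d}^\top\Sigma\mathbf{d}$ (first-order optimality $\Sigma\mathbf{y}^*_k\parallel\mathbf{a}_k$ kills the cross term): this exhibits $\Gamma_k$ as a translate of $r_k$ times the fixed $(n{-}1)$-ellipsoid $\mathcal{E}_k=\{\mathbf{d}\in\mathbf{a}_k^\perp:\mathbf{d}^\top\Sigma\mathbf{d}\le1\}$, which depends on neither $t_k$ nor $c_k$. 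Either route gives
\[
\mathrm{Vol}(\Gamma_k)=\mathrm{Vol}(\mathcal{E}_k)\,r_k^{\,n-1}=\frac{\omega_{n-1}}{\sqrt{\det(U_k^\top\Sigma U_k)}}\,r_k^{\,n-1},
\]
where $U_k$ is any matrix whose columns form an orthonormal basis of $\mathbf{a}_k^\perp$; in particular $\mathrm{Vol}(\Gamma_k)\propto r_k^{\,n-1}$.

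It remains to show that the prefactor $\mathrm{Vol}(\mathcal{E}_k)$ is common to the regions being compared, so that it cancels in the ratio. Completing $U_k$ with the unit normal $\mathbf{a}_k/\lVert\mathbf{a}_k\rVert$ to an orthogonal matrix and applying the Schur-complement determinant identity to $\Sigma$ and to $\Sigma^{-1}$ yields $\det(U_k^\top\Sigma U_k)=\det(\Sigma)\,(\mathbf{a}_k^\top\Sigma^{-1}\mathbf{a}_k)/\lVert\mathbf{a}_k\rVert^2$; this is plainly the same for any two regions sharing the same normal $\mathbf{a}_k$ (the bipartite regions $\Gamma_3,\Gamma_4,\Gamma_5,\Gamma_7,\Gamma_8,\Gamma_9$ all have normal $\mathbf{v}$), and when comparing the unipartite normal $\mathds{1}$ with $\mathbf{v}$ one uses $\lVert\mathds{1}\rVert=\lVert\mathbf{v}\rVert=\sqrt{n}$ together with $\mathbf{v}=\Xi\mathds{1}$ ($\Xi=\mathrm{diag}(\mathbf{v})$ a $\pm1$ diagonal), which gives $\mathbf{v}^\top\Sigma^{-1}\mathbf{v}=\mathds{1}^\top\Xi\Sigma^{-1}\Xi\mathds{1}=\mathds{1}^\top\Sigma^{-1}\mathds{1}$ in the independent case. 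Dividing the displayed expression for $\Gamma_k$ by that for $\Gamma_\ell$ then collapses to $(r_k/r_\ell)^{n-1}$, and substituting $r_k^2=c_k-t_k^2/(\mathbf{a}_k^\top\Sigma^{-1}\mathbf{a}_k)$ produces the stated formula. The main obstacle is exactly this last verification: establishing, via the block-determinant identity and the special structure of the admissible normals $\mathds{1}$ and $\mathbf{v}$, that the section-ellipsoid base volume $\omega_{n-1}/\sqrt{\det(U_k^\top\Sigma U_k)}$ is the same across the relevant regions, so that the volume ratio is governed solely by the radii $r_k$ and $r_\ell$.
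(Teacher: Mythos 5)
Your argument is correct and rests on the same core device as the paper's proof---the whitening map \(\mathbf{z}=\Sigma^{1/2}\mathbf{y}\) (equivalently, the centroid decomposition \(\mathbf{y}=\mathbf{y}^*_k+\mathbf{d}\) with \(\mathbf{a}_k^\top\mathbf{d}=0\)), the distance \(d_k=|t_k|/\sqrt{\mathbf{a}_k^\top\Sigma^{-1}\mathbf{a}_k}\) from the origin to the transformed hyperplane, and the section radius \(r_k=\sqrt{c_k-d_k^2}\). Where you genuinely add something is the Jacobian bookkeeping: the paper measures the slice in the \(\mathbf{z}\)-coordinates, where it is a round \((n-1)\)-ball, and passes directly to the ratio, tacitly assuming the pull-back factor to the original \(\mathbf{y}\)-coordinates is common to \(\Gamma_k\) and \(\Gamma_\ell\). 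Your explicit prefactor \(1/\sqrt{\det(U_k^\top\Sigma U_k)}\), together with the Schur-complement identity \(\det(U_k^\top\Sigma U_k)=\det(\Sigma)\,(\mathbf{a}_k^\top\Sigma^{-1}\mathbf{a}_k)/\|\mathbf{a}_k\|^2\), shows that this factor \emph{does} depend on the normal, so the clean ratio \((r_k/r_\ell)^{n-1}\) holds verbatim only when \(\mathbf{a}_k^\top\Sigma^{-1}\mathbf{a}_k/\|\mathbf{a}_k\|^2=\mathbf{a}_\ell^\top\Sigma^{-1}\mathbf{a}_\ell/\|\mathbf{a}_\ell\|^2\); otherwise it acquires the correction \(\sqrt{(\mathbf{a}_\ell^\top\Sigma^{-1}\mathbf{a}_\ell\,\|\mathbf{a}_k\|^2)/(\mathbf{a}_k^\top\Sigma^{-1}\mathbf{a}_k\,\|\mathbf{a}_\ell\|^2)}\). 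You then check that the condition is met for the comparisons the paper actually makes (identical normals, or \(\mathds{1}\) versus \(\mathbf{v}\) with diagonal \(\Sigma\), using \(\Xi\Sigma^{-1}\Xi=\Sigma^{-1}\)). This both closes a step the paper's proof skips and makes visible that in the dependent case, where \(\mathds{1}^\top\Sigma^{-1}\mathds{1}\neq\mathbf{v}^\top\Sigma^{-1}\mathbf{v}\) in general, the cross-normal volume ratio is not purely a function of \(r_k\) and \(r_\ell\)---a caveat worth recording alongside the proposition.
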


\begin{proof}
Consider the ellipsoid $\mathcal{E}(c)=\{\mathbf{y}:\mathbf{y}^\top \Sigma \mathbf{y}\le c\}$ and a hyperplane $\mathcal{H}(\mathbf{a},t)=\{\mathbf{y}:\mathbf{a}^\top \mathbf{y}=t\}$. Under the transformation $\mathbf{z}=\Sigma^{1/2}\mathbf{y}$, the ellipsoid becomes the Euclidean ball $\|\mathbf{z}\|^2\le c$, while the hyperplane becomes \mbox{$(\Sigma^{-1/2}\mathbf{a})^\top \mathbf{z}=t$.} The distance from the origin to this hyperplane equals
\(
d = {t}/{\sqrt{\mathbf{a}^\top \Sigma^{-1}\mathbf{a}}}.
\)
The intersection is an $(n{-}1)$-dimensional ellipsoid obtained by slicing an $n$-dimensional ball of radius $\sqrt{c}$ at distance $d$ from its center. The radius of this intersection is therefore
\(
r = \sqrt{c - d^2} = \sqrt{c - ({t^2}/{\mathbf{a}^\top \Sigma^{-1}\mathbf{a}})}.
\)
Since the intersection is $(n{-}1)$-dimensional, its volume is proportional to $r^{n-1}$. 
Taking the ratio for two such regions $\Gamma_k$ and $\Gamma_\ell$ yields the stated result.
\qed
\end{proof}

For a specific concentration region \(\mathrm{int}(\Gamma_k)\), the term \mbox{ \(   r^2= c-({t^2}/{\mathbf{a}^\top \Sigma^{-1}\mathbf{a}})\)} depends on the covariance matrix \(\Sigma\). Increasing the volume of $\Gamma_k$ enlarges this radius, which in turn increases the distance between the optimal social power vector $\mathbf{y}^\ast$ and the uniform point $\mathds{1}/n$. This relationship is evident from the derivatives of $r^2$ with respect to $\sigma_i$ and $\rho_{ij}$, which quantify how changes in $\Sigma$ affect the size of the convex region $\Gamma_k$:
\begin{align*}
    \frac{\partial r^2}{\partial \rho_{ij}}&=2\sigma_i\sigma_j((1/n^2)-Y_{ij}),\\
    \frac{\partial r^2}{\partial \sigma_{i}}&=2\sigma_i\left((1/n^2)-Y_{ii} \right)+\sum_{j\neq i}\rho_{ij}\sigma_j\left((1/n^2)-Y_{ij}\right),
\end{align*}

where \(Y=\mathbf{y}^\ast\mathbf{y}^{\ast\top}\). The effect of changing a correlation term \(\rho_{ij}\) on the volume of the concentration region depends only on the structure of \(\mathbf{y}^\ast\) (which depends on \(\rho_{ij}\) itself). In contrast, changing a variance term \(\sigma_i\) influences the volume through both \(\sigma_i\) itself and its interactions with other parameters, namely \(\rho_{ij}\) and \(\sigma_j\).

\begin{figure*}[t]
     \centering
     \begin{subfigure}[b]{0.32\linewidth}
        \centering
        \includegraphics[width=1\textwidth]{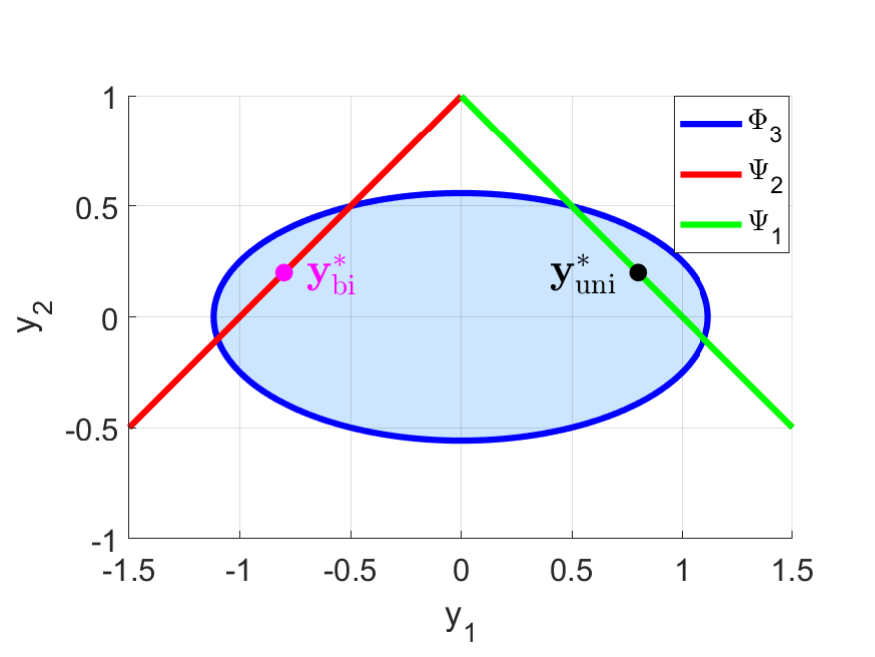}
     
      \caption{}
        \label{fig:ellipse1}
     \end{subfigure}
     \hfill
     \begin{subfigure}[b]{0.32\linewidth}
        \centering
        \includegraphics[width=1\textwidth]{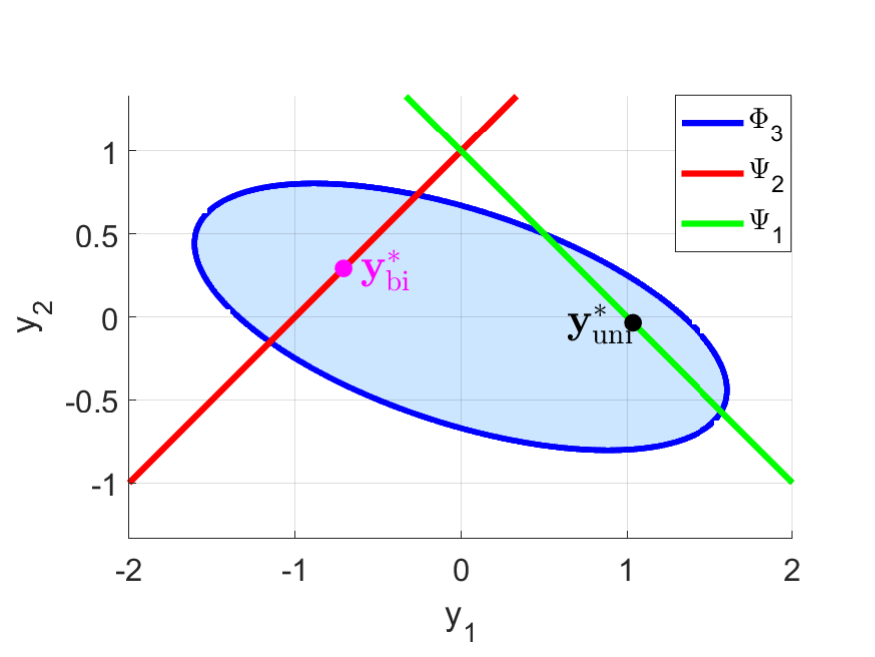}
      \caption{}
        \label{fig:ellipse2}
     \end{subfigure}8
     \hfill
     \begin{subfigure}[b]{0.32\linewidth}
        \centering
        \includegraphics[width=1\textwidth]{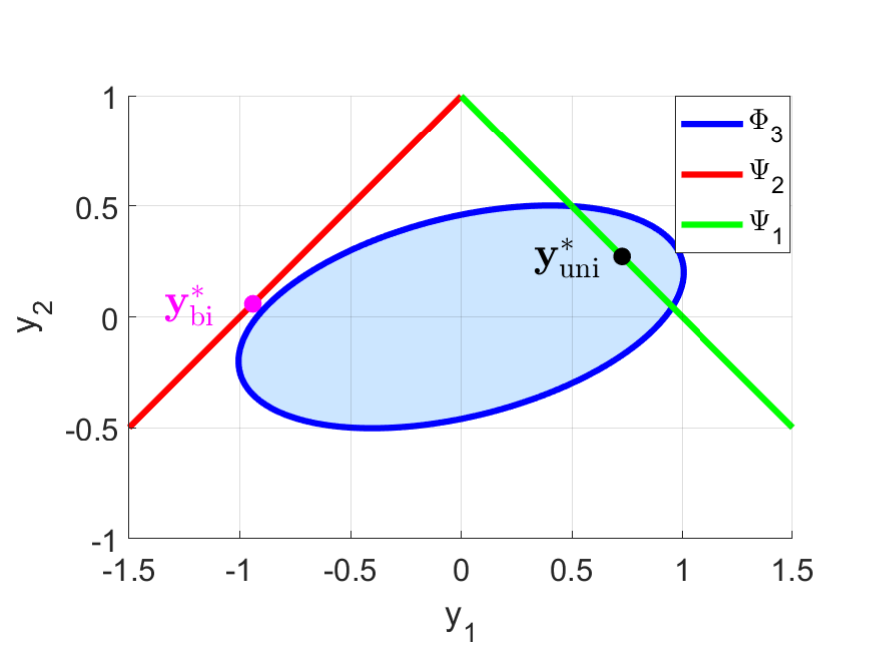}
    
       \caption{}
        \label{fig:ellipse3}
     \end{subfigure}
        \caption{
        The green line represent the hyperplane $ \Psi_1= \{ \mathds{1}^\top \mathbf{y} = 1 \}$ in $n=2$. The red line represents instead the hyperplane \(\Psi_2= \{ \mathbf{v}^\top \mathbf{y} = 1 \}\) with \(\mathbf{v}=[-1,\,1]^\top\). The convex region \(\Gamma_6\) is the intersection of the green line and the ellipse \mbox{\(\Phi_3=\{n^2\mathbf{y}^\top\Sigma\mathbf{y}\leq \mathds{1}^\top\Sigma\mathds{1}\}\),} the convex region \(\Gamma_9\) is the  intersection of the red line and the ellipse $ \Phi_3$ with variances \(\sigma_1^2=1\), \(\sigma_2^2=4\) and covariances (a) \(\rho_{12}=0\)  (b) \(\rho_{12}=0.55\) (c)  \(\rho_{12}=-0.4\). The associated optimal social powers (denoted $ \mathbf{y}^\ast_{\rm uni} $ and $ \mathbf{y}^\ast_{\rm bi} $ for the unipartite and bipartite (bipartition-based average) case), are also shown.}
        \label{fig:three_graphs}
\end{figure*}

\subsection{Properties of the concentration regions: independent case}
In the independent initial opinion case (i.e, when $ \Sigma $ is diagonal), the inequality \(c\geq t^2/(\mathbf{a}^\top\Sigma^{-1}\mathbf{a})\) holds for all concentration regions. If equality occurs, i.e., $c = {t^2}/{\mathbf{a}^\top \Sigma^{-1}\mathbf{a}}$, the region \(\Gamma_k\) degenerates to a single point, resulting in zero $(n{-}1)$-dimensional volume and making ratios in Proposition~\ref{pro:volume_ratios_general} undefined. For instance, this happens for regions \(\Gamma_2\) and \(\Gamma_5\) when \(\Sigma=\kappa I\) for some $\kappa > 0$.

The convex regions $\Gamma_3$ and $\Gamma_4$ are always nonempty, i.e., \(c_k > t_k^2 / (\mathbf{a}_k^\top \Sigma^{-1}\mathbf{a}_k)\) for all possible diagonal \(\Sigma\) where \(\mathbf{a}_3=\mathbf{a}_4=\mathbf{v}\), \(c_3=(\mathds{1}^\top\mathbf{v})^{-2}\mathds{1}^\top\Sigma\mathds{1}\), \(c_4=\mathds{1}^\top\Sigma\mathds{1}/n^2\), \(t_3=1\), and \(t_4=\mathds{1}^\top\mathbf{v}/n\), see Table~\ref{tab:Gamma6-9}. 
 Moreover, $\Gamma_2$ and $\Gamma_5$ have equal volumes whenever they are nonempty, so \mbox{$\mathrm{Vol}(\Gamma_5)/\mathrm{Vol}(\Gamma_2) = 1$} because \(c_2=c_5=\mathds{1}^\top\Sigma\mathds{1}/n^2\), \(t_2=t_5=1\), and \(\mathds{1}^\top\Sigma^{-1}\mathds{1}=\mathbf{v}^\top\Sigma^{-1}\mathbf{v}\), see again Table~\ref{tab:Gamma6-9}. In fact, the following properties hold under independence of opinion:
\begin{itemize}
    \item For any $\mathbf{y} \in \mathrm{int}(\Gamma_2)$, the vector $\mathbf{y} \odot \mathbf{v} \in \mathrm{int}(\Gamma_5)$, and the variance concentration is identical in the unipartite and bipartite cases (bipartition-based average).
    \item For any $\mathbf{y} \in \mathrm{int}(\Gamma_2)$, the vectors $\mathbf{y} \odot \mathbf{v} \in \mathrm{int}(\Gamma_3)$ or $t_4\mathbf{y} \odot \mathbf{v} \in \mathrm{int}(\Gamma_4)$, and the variance concentration in the bipartite cases (both population-based and bipartition-based average) is strictly greater than in the unipartite case.
\end{itemize}

\subsection{Properties of the concentration regions: dependent case}

Unlike the hyperellipsoids $\Phi_1$ and $\Phi_2$ in the independent case, the hyperellipsoids $\Phi_3$ and $\Phi_4$ are generally not axis-aligned when correlations are present in the initial opinions. This asymmetry arises from the cross terms $y_i y_j$ in their defining equations whenever $\rho_{ij} \neq 0$. 
This leads to fundamentally different geometric properties of the concentration regions in the dependent case compared to the independent case.

In the dependent case, the inequality \(c \geq {t^2}/{\mathbf{a}^\top \Sigma^{-1} \mathbf{a}}\) is satisfied only for the region \(\Gamma_6\), and holds with equality when \(\Sigma \mathds{1} = \kappa \mathds{1}\) for some \(\kappa > 0\). This implies that the concentration region \(\mathrm{int}(\Gamma_6)\) is nonempty if and only if \(\Sigma \mathds{1} \neq \kappa \mathds{1}\).

In contrast, for regions \(\Gamma_7\), \(\Gamma_8\), and \(\Gamma_9\), the inequality \(c < {t^2}/{\mathbf{a}^\top \Sigma^{-1} \mathbf{a}}\) may hold for some \(\Sigma\). In such cases, the corresponding concentration region is empty as shown in the following example. 

\begin{example}
\label{example_emptyregion}
  Consider two individuals with dependent initial opinions, where \(\sigma_1^2=1\), \(\sigma_2^2=4\), and the  correlation between them is \(\rho_{12}\). 
  For values of \(\rho_{12}\) in the range \mbox{ \(-0.95<\rho_{12}<-0.3\)}, the convex region \(\Gamma_9\) becomes empty, as shown in Figure~\ref{fig:ellipse3} for \(\rho_{12}=-0.4\). For example, when \(\rho_{12}=-0.4\), the initial variance of the average opinion is \(\mathrm{Var}[\overline{x}(0)]=0.85\). The optimal social power vector in the bipartite case (based on bipartition averaging) is \(\mathbf{y}^\ast=\begin{bmatrix}      0.9412  & -0.0588  \end{bmatrix}\), which leads to a final variance of \(\mathrm{Var}[\overline{x}^\ast]=0.9882\), hence no wisdom of crowds occurs.
  \qed
\end{example}

Furthermore, while in the independent case $\mathbf{y} \in \Gamma_2$ implies $\mathbf{y} \odot \mathbf{v} \in \Gamma_5$ with identical variance concentration, in the dependent case $\mathbf{y} \in \Gamma_6$ does not imply $\mathbf{y} \odot \mathbf{v} \in \Gamma_9$ since \(\mathds{1}^\top\Sigma^{-1}\mathds{1}\ne\mathbf{v}^\top\Sigma^{-1}\mathbf{v}\) in general. Even when $\mathbf{y} \odot \mathbf{v} \in \Gamma_9$, the variance concentration differs between the unipartite and bipartite cases, as illustrated in Example~\ref{example_bipartiteherd}. A similar reasoning holds for the concentration regions $\Gamma_7$ and $\Gamma_8$.

\begin{example}[Example~\ref{example_emptyregion} continued]
\label{example_bipartiteherd}
  When \(\rho_{12}=0.55\),  the convex regions \(\Gamma_6\) and \(\Gamma_9\) are shown in Figure~\ref{fig:ellipse2}. For the unipartite structure, i.e., \(\mathbf{v}=\mathds{1}\), the social power vector that optimizes the final variance is given by \mbox{\(\mathbf{y}=\begin{bmatrix}
        1.1250 &   -0.1250
    \end{bmatrix}^\top \),} resulting in a  variance of \mbox{\(\mathrm{Var}[\overline{x}^\ast]=0.9875\).} In the bipartite structure, i.e., \mbox{\(\mathbf{v}=\begin{bmatrix}
        1&-1
    \end{bmatrix}^\top \),} the social power vector (for bipartition-based average) that optimizes the variance of the weighted average is given by \(\mathbf{y}=\begin{bmatrix}
        0.5962 &   -0.4038
    \end{bmatrix}^\top \), leading to an optimized variance of \(\mathrm{Var}[\overline{x}^\ast]=0.1519\). In contrast, in the unipartite case, choosing \(\mathbf{y}=\begin{bmatrix}
        0.5962 &   0.4038
    \end{bmatrix}^\top \) results in a variance of \(\mathrm{Var}[\overline{x}^\ast]=1.2112\). \qed
\end{example}

\section{Wisdom of Crowds in Continuous Time Dynamics}
\label{section:continuous_time}

In this section,  we deal with the continuous-time (CT) versions of the signed DeGroot, SFJ, and concatenated SFJ models. Here, the signed Laplacian matrix \(L_s\) is obtained via the adjacency matrix \(A\gtreqless0\), where \(L_s=D-A\).

The analysis of wisdom of crowds for these CT models follows the same structure as in the DT case. In particular, the concentration regions and conditions for mean accuracy remain analogous to the DT. The main difference lies in the definition of the social power vector, which in the CT setting is determined by the dominant left eigenvector of the (negated) signed Laplacian matrix (or its associated matrices such as $P_{Ls}$, see below), rather than by the interaction matrix $W_s$ used in the DT case.

\subsection{Unipartite Networks}
\noindent {\bf 1): Signed DeGroot Model.} The CT signed DeGroot model is defined as:
\begin{equation}
\label{eq:CTDeGroot}
    \dot{\mathbf{x}}(t) = -L_s \mathbf{x}(t)
\end{equation}
where $L_s$ is the signed Laplacian matrix of the interaction graph. For repelling dynamics, $ L_s $ can always be obtained by translation of a matrix $ W_s$ that satisfies the SPF property, i.e., $ L_s = \phi(I- W_s)$ for some $ \phi>0$, then $ -L_s$ has properties analogous to those of $ W_s$, only referred to the spectral abscissa, rather than to the spectral radius, see \cite{razaq2025signed} for the details.

\begin{assumption}
\label{assumption_SignedDeGroot_cont}
The matrix \(I-\phi^{-1}L_s\) satisfies the SPF property for some \(\phi>0\).

\end{assumption}

\noindent {\bf 2): SFJ Model.}
The CT SFJ model incorporates stubbornness and is defined as:
\begin{equation}
\label{eq:CTSFJ}
    \dot{\mathbf{\mathbf{x}}}(t) = -((I - \Theta)L_s + \Theta) \mathbf{x}(t) + \Theta \mathbf{x}(0)
\end{equation}
where $\Theta = \mathrm{diag}(\theta_1, \dots, \theta_n)$ is the diagonal matrix of stubbornness coefficients, with $\theta_i \in [0, 1)$.

\begin{assumption}
\label{assumption_signedfjmodel_cont}
 
The matrix \(I-\phi^{-1}L_s\) satisfies the SPF property for some \(\phi>0\)
and \(\theta_i\) s.t. \(0\leq\theta_i<1\), \(\forall i\). Furthermore,  $-((I - \Theta)L_s + \Theta)$ is Hurwitz.
\end{assumption}

\noindent {\bf 3): Concatenated SFJ Model.}
In the sequence of discussions indexed by $s = 1, 2, \dots$, each discussion obeys to the CT SFJ dynamics:
\(   \dot{\mathbf{x}}(s,t) = -((I - \Theta)L_s + \Theta) \mathbf{x}(s,t) + \Theta \mathbf{x}(s,0)
\)

with the concatenation rule: \(\mathbf{x}({s+1,0}) = \lim_{t \to \infty} \mathbf{x}(s,t)\). This leads to a DT evolution in the discussion index:
\begin{equation}
\label{eq:CTconcatSFJ}
    \mathbf{x}(s+1,0) = P_{Ls} \mathbf{x}(s,0)
\end{equation}
where \( P_{Ls}=((I - \Theta)L_s + \Theta)^{-1}\Theta.\)

\begin{assumption}
    \label{assumption_signedconcatenatedfjmodel_cont}
    
    The matrix \(I-\phi^{-1}L_s\) satisfies the SPF property for some \(\phi>0\) and \(\theta_i\) s.t. \(0\leq\theta_i<1\), \(\forall i\). Furthermore,  $-((I - \Theta)L_s + \Theta)$ is Hurwitz and the matrix \(P_{Ls}\) satisfies the SPF property.
\end{assumption}
\subsubsection{Asymptotic Behavior}
All three models have been thoroughly studied in the literature, see \cite{Fontan2022,razaq2025signed}.  Here, the analysis goes beyond the case where $-L_s$ is Eventually Exponentially Positive (EEP), i.e.,  $\mathbf{z}_{Ls} > 0$, to the more general setting where $\mathbf{z}_{Ls}$ may contain negative components.

\begin{lemma}[\cite{Fontan2022,razaq2025signed}]
\label{lemma_ct_signed_models}
The following results hold for the convergence of CT signed opinion dynamics models under their respective assumptions:
\begin{enumerate}
    \item \textbf{Signed DeGroot:} If Assumption~\ref{assumption_SignedDeGroot_cont} is satisfied, then the model \eqref{eq:CTDeGroot} converges to consensus:
 \mbox{   \(
    \mathbf{x}^\ast = \mathbf{z}_{Ls}^\top \mathbf{x}(0)  \mathds{1}
    \)}
    where $\mathbf{z}_{Ls}$, with $\mathbf{z}_{Ls}^\top \mathds{1} =1$, is the dominant left eigenvector of $L_s$ satisfying $\mathbf{z}_{Ls}^\top L_s = 0$. The social power vector is $\mathbf{y} = \mathbf{z}_{Ls}$.

    \item \textbf{SFJ:} If Assumption~\ref{assumption_signedfjmodel_cont} is satisfied, then the model \eqref{eq:CTSFJ}    converges to a unique equilibrium:
    \(\mathbf{x}^\ast = P_{Ls} \mathbf{x}(0)    \)
    with $P_{Ls} \mathds{1} = \mathds{1}$ and social power vector $\mathbf{y} =  P_{Ls}^\top\mathds{1} / n$.

    \item \textbf{Concatenated SFJ:} If Assumption~\ref{assumption_signedconcatenatedfjmodel_cont} is satisfied, then the concatenated model \eqref{eq:CTconcatSFJ}    converges to consensus:
    \(
    \mathbf{x}^\ast = \lim\limits_{s \to \infty} P_{Ls}^s \mathbf{x}(0) = \mathbf{p}_{Ls}^\top \mathbf{x}(0)  \mathds{1},
    \)
    where $\mathbf{p}_{Ls}$ is the dominant left eigenvector of $P_{Ls}$, normalized such that $\mathbf{p}_{Ls}^\top \mathds{1} = 1$. The social power can be expressed as 
    \(
    \mathbf{y}= \mathbf{p}_{Ls} = (I - \Theta)^{-1} \Theta \mathbf{z}_{Ls}
    \)
    where $\mathbf{z}_{Ls}$ is the dominant left eigenvector of $L_s$.
\end{enumerate}
In all three cases, the social power vector $\mathbf{y}$ satisfies $\mathbf{y}^\top \mathds{1} = 1$, even though some components $y_i$ may be negative.
\end{lemma}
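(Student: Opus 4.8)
The plan is to reduce each continuous-time claim to the corresponding discrete-time result of Lemma~\ref{lemma_mergedsigneddynamisc} by exploiting the translation $L_s=\phi(I-W_s)$ with $W_s:=I-\phi^{-1}L_s$. First I would set up the spectral dictionary: $\lambda\in\Lambda(W_s)$ if and only if $\phi(\lambda-1)\in\Lambda(-L_s)$, with eigenvectors preserved. Because the Laplacian identity $L_s\mathds{1}=\mathbf{0}$ forces $W_s\mathds{1}=\mathds{1}$, the vector $\mathds{1}$ is a right eigenvector of $W_s$ for the eigenvalue $1$; pairing this with the SPF property of $W_s$ shows (as in \cite{Fontan2022,razaq2025signed}) that $1$ is precisely the simple, real, strictly dominant eigenvalue of $W_s$, so that $0$ is a simple eigenvalue of $-L_s$ with strictly larger real part than all others, and the associated left eigenvector is the vector $\mathbf{z}_{Ls}$ with $\mathbf{z}_{Ls}^\top L_s=0$ and $\mathbf{z}_{Ls}^\top\mathds{1}=1$ (which coincides with the left eigenvector $\mathbf{z}_s$ of $W_s$). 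Establishing that the SPF eigenvalue is exactly $1$, and not a larger value, is the one genuinely delicate point; everything else is bookkeeping.

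For the signed DeGroot model~\eqref{eq:CTDeGroot}, the dictionary gives $e^{-L_s t}\to \mathds{1}\mathbf{z}_{Ls}^\top$ as $t\to\infty$, since this is the spectral projector onto $\ker(L_s)$ and all remaining modes decay exponentially. Hence $\mathbf{x}^\ast=\mathbf{z}_{Ls}^\top\mathbf{x}(0)\,\mathds{1}$, $\overline{x}^\ast=\mathds{1}^\top\mathbf{x}^\ast/n=\mathbf{z}_{Ls}^\top\mathbf{x}(0)$, so $\mathbf{y}=\mathbf{z}_{Ls}$ with $\mathbf{y}^\top\mathds{1}=1$. For the SFJ model~\eqref{eq:CTSFJ}, put $M=(I-\Theta)L_s+\Theta$; since $-M$ is Hurwitz by Assumption~\ref{assumption_signedfjmodel_cont}, $\mathbf{x}(t)\to M^{-1}\Theta\,\mathbf{x}(0)=:P_{Ls}\mathbf{x}(0)$. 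From $L_s\mathds{1}=\mathbf{0}$ we get $M\mathds{1}=\Theta\mathds{1}$, hence $P_{Ls}\mathds{1}=\mathds{1}$, and therefore $\mathbf{y}=P_{Ls}^\top\mathds{1}/n$ with $\mathbf{y}^\top\mathds{1}=\mathds{1}^\top P_{Ls}\mathds{1}/n=1$.

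For the concatenated SFJ model, each discussion converges in $t$ exactly as in the SFJ case, so the discussion-scale dynamics is $\mathbf{x}(s+1,0)=P_{Ls}\mathbf{x}(s,0)$: a discrete signed-DeGroot iteration driven by $P_{Ls}$, which satisfies the SPF property and $P_{Ls}\mathds{1}=\mathds{1}$ by Assumption~\ref{assumption_signedconcatenatedfjmodel_cont}. The argument used for~\eqref{eq:CTDeGroot} then applies verbatim with $W_s$ replaced by $P_{Ls}$, giving $P_{Ls}^s\to\mathds{1}\mathbf{p}_{Ls}^\top$, $\mathbf{x}^\ast=\mathbf{p}_{Ls}^\top\mathbf{x}(0)\,\mathds{1}$, and $\mathbf{y}=\mathbf{p}_{Ls}$ with $\mathbf{y}^\top\mathds{1}=1$. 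To obtain $\mathbf{p}_{Ls}=(I-\Theta)^{-1}\Theta\mathbf{z}_{Ls}$ I would set $\mathbf{p}:=(I-\Theta)^{-1}\Theta\mathbf{z}_{Ls}$ and verify $P_{Ls}^\top\mathbf{p}=\mathbf{p}$ directly: since $P_{Ls}^\top=\Theta M^{-\top}$ and $M^\top=L_s^\top(I-\Theta)+\Theta$, one computes $M^\top\Theta^{-1}\mathbf{p}=M^\top(I-\Theta)^{-1}\mathbf{z}_{Ls}=L_s^\top\mathbf{z}_{Ls}+\Theta(I-\Theta)^{-1}\mathbf{z}_{Ls}=\mathbf{p}$, using $L_s^\top\mathbf{z}_{Ls}=\mathbf{0}$ and that diagonal matrices commute; rearranging yields $\Theta M^{-\top}\mathbf{p}=\mathbf{p}$, i.e., $P_{Ls}^\top\mathbf{p}=\mathbf{p}$, and $\mathbf{p}$ is then normalized so that $\mathbf{p}^\top\mathds{1}=1$. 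The main obstacle throughout remains the spectral step of the first paragraph — guaranteeing simplicity, reality and strict dominance of the relevant eigenvalue so that all the limits above exist — which I would dispatch by invoking the SPF hypotheses together with the Laplacian and (signed) stochasticity identities as in \cite{Fontan2022,razaq2025signed}.
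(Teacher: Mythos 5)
Your proposal is correct and follows exactly the route the paper intends: the paper states this lemma without proof, citing \cite{Fontan2022,razaq2025signed} and relying on the translation $L_s=\phi(I-W_s)$ set up just before Assumption~\ref{assumption_SignedDeGroot_cont}, which is precisely the spectral-dictionary reduction to Lemma~\ref{lemma_mergedsigneddynamisc} that you carry out, and your verification of $P_{Ls}^\top\mathbf{p}=\mathbf{p}$ checks out. The only remark is that the step you flag as delicate --- that the dominant eigenvalue of $W_s=I-\phi^{-1}L_s$ is exactly $1$ --- requires no work here, since the SPF property as defined in the paper already includes $\rho(W_s)=1$ and $W_s\mathds{1}=\mathds{1}$, so strict dominance forces the dominant eigenvalue to be $1$ immediately.
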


\subsection{Bipartite Networks}
In the bipartite case, the signed Laplacian $L_b$ has a simple eigenvalue at $0$ with a right eigenvector $\mathbf{v}\in\{\pm1\}^n$ (mixed signature), i.e., $L_b \mathbf{v}=\mathbf{0}$.
The Laplacian matrix here is made using the gauge transformation matrix $\Xi={\rm diag} \left(\mathbf{v}\right)$ such that  $L_s=\Xi L_b\Xi$.
Consider the associated models
\begin{enumerate}
    \item  Signed DeGroot
    \begin{equation}
\label{eq:CTDeGroot_bip}
    \dot{\mathbf{x}}(t) = -L_b\,\mathbf{x}(t).
\end{equation}
\item SFJ
\begin{equation}
\label{eq:CTSFJ_bip}
    \dot{\mathbf{x}}(t) = -\big((I-\Theta)L_b+\Theta\big)\,\mathbf{x}(t) + \Theta\,\mathbf{x}(0),
\end{equation}
\item Concatenated SFJ
\begin{equation*}
    \dot{\mathbf{x}}(s,t) = -\big((I-\Theta)L_b+\Theta\big)\,\mathbf{x}(s,t) + \Theta\,\mathbf{x}(s,0),
\end{equation*}
with concatenation $\mathbf{x}(s+1,0)=\lim\limits_{t\to\infty}\mathbf{x}(s,t)$, which yields the DT evolution
\begin{equation}
\label{eq:CTconcatSFJ_bip}
    \mathbf{x}(s+1,0)=P_{Lb}\,\mathbf{x}(s,0).
\end{equation}
where \( P_{Lb}=\big((I-\Theta)L_b+\Theta\big)^{-1}\Theta.\)

\end{enumerate}
    Consider the following associated assumptions.
    \begin{assumption}
\label{assumption_SignedDeGroot_bip_cont}
The matrix \(I-\phi^{-1}L_b\) satisfies the SSPF property for some \(\phi>0\).

\end{assumption}

 \begin{assumption}
 \label{assumption_SFJ_bip_cont}
Assumption~\ref{assumption_SignedDeGroot_bip_cont} holds. In addition, \(\theta_i\) is s.t. \(0\leq\theta_i<1\), \(\forall i\) and the matrix $-((I-\Theta)L_b+\Theta)$ is Hurwitz. 
 \end{assumption}

\begin{assumption}
\label{assumption_concatSFJ_bip_cont}
Assumption~\ref{assumption_SFJ_bip_cont} holds. In addition, the matrix \(P_{Lb} \) satisfies the SSPF property.
\end{assumption}

\subsubsection{Asymptotic Behavior}
The next lemma follows from the gauge equivalence $L_s=\Xi L_b \Xi$ and the unipartite results in Lemma~\ref{lemma_ct_signed_models} (apply $\mathbf{r}(t)=\Xi\,\mathbf{x}(t)$), see \cite{Fontan2022,razaq2025signed}.

\begin{lemma}[\cite{Fontan2022,razaq2025signed}]
\label{lemma_ct_bip}
The following results hold for the CT bipartite models:
\begin{enumerate}
    \item \textbf{Signed DeGroot:} Under Assumption~\ref{assumption_SignedDeGroot_bip_cont}, the system \eqref{eq:CTDeGroot_bip} converges to a bipartite consensus
    \(
       \mathbf{x}^\ast = \mathbf{z}_{Lb}^\top \mathbf{x}(0)\mathbf{v},
    \)
    where $\mathbf{z}_{Lb}$ is the dominant left eigenvector of $L_b$ normalized such that $\mathbf{z}_{Lb}^\top \mathbf{v}=1$. The social power vector is $\mathbf{y}=\mathbf{z}_{Lb}$.

    \item \textbf{SFJ:}  Under Assumption~\ref{assumption_SFJ_bip_cont}, the system \eqref{eq:CTSFJ_bip} converges to
    \(
       \mathbf{x}^\ast = P_{Lb}\mathbf{x}(0),
    \)
    where $P_{Lb}\mathbf{v}=\mathbf{v}$. The social power vector is $\mathbf{y}= P_{Lb}^\top\mathds{1}/n$, satisfying $\mathbf{y}^\top\mathbf{v}=\mathds{1}^\top \mathbf{v}/n$.

    \item \textbf{Concatenated SFJ:}  Under Assumption~\ref{assumption_concatSFJ_bip_cont}, the system \eqref{eq:CTconcatSFJ_bip} converges to
\mbox{\(       \mathbf{x}^\ast=\lim\limits_{s\to\infty}P_{Lb}^s\mathbf{x}(0)=\mathbf{p}_{Lb}^\top \mathbf{x}(0)\mathbf{v},    \)}
    where $\mathbf{p}_{Lb}$ is the dominant left eigenvector of $P_{Lb}$ normalized by $\mathbf{p}_{Lb}^\top \mathbf{v}=1$. Moreover,
   \mbox{ \(
       \mathbf{p}_{Lb}=(I-\Theta)^{-1}\Theta\,\mathbf{z}_{Lb}.
    \)}
\end{enumerate}
In all three cases, some components of $\mathbf{y}$ may be negative.
\end{lemma}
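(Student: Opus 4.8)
The plan is to obtain the lemma directly from the unipartite Lemma~\ref{lemma_ct_signed_models} through the gauge change of coordinates $\mathbf{r}(t)=\Xi\,\mathbf{x}(t)$, with $\Xi={\rm diag}(\mathbf{v})$ and $\mathbf{v}\in\{\pm1\}^n$, so that $\Xi=\Xi^\top$, $\Xi^2=I$, and $\Xi$ commutes with every diagonal matrix, in particular with $\Theta$. First I would record the algebraic consequences of $L_s=\Xi L_b\Xi$: conjugation by $\Xi$ is a similarity transformation, so it preserves the spectrum, whence $-((I-\Theta)L_b+\Theta)$ is Hurwitz iff $-((I-\Theta)L_s+\Theta)$ is, and $P_{Lb}=\big((I-\Theta)L_b+\Theta\big)^{-1}\Theta=\Xi\big((I-\Theta)L_s+\Theta\big)^{-1}\Theta\,\Xi=\Xi P_{Ls}\Xi$. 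Moreover, by the SSPF/SPF gauge equivalence recalled in Section~\ref{prelimarymaterial}, a matrix $M$ satisfies the SSPF property (with right eigenvector $\mathbf{v}$, $|v_i|=1$) iff $\Xi M\Xi$ satisfies the stochastic PF property; applied to $M=I-\phi^{-1}L_b$ and to $M=P_{Lb}$ this shows that Assumptions~\ref{assumption_SignedDeGroot_bip_cont}--\ref{assumption_concatSFJ_bip_cont} are, under the gauge map, precisely Assumptions~\ref{assumption_SignedDeGroot_cont}--\ref{assumption_signedconcatenatedfjmodel_cont}, so Lemma~\ref{lemma_ct_signed_models} is available for the transformed systems.

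Next I would push each model through the substitution. For the signed DeGroot model \eqref{eq:CTDeGroot_bip}, $\dot{\mathbf{r}}=\Xi\dot{\mathbf{x}}=-\Xi L_b\Xi\,\mathbf{r}=-L_s\mathbf{r}$, so Lemma~\ref{lemma_ct_signed_models}(1) gives $\mathbf{r}^\ast=\mathbf{z}_{Ls}^\top\mathbf{r}(0)\,\mathds{1}$; left-multiplying by $\Xi$ and using $\Xi\mathbf{r}(0)=\mathbf{x}(0)$ yields $\mathbf{x}^\ast=(\Xi\mathbf{z}_{Ls})^\top\mathbf{x}(0)\,\mathbf{v}$, so $\mathbf{z}_{Lb}=\Xi\mathbf{z}_{Ls}$, and one checks $\mathbf{z}_{Lb}^\top L_b=\mathbf{z}_{Ls}^\top\Xi L_b=\mathbf{z}_{Ls}^\top L_s\Xi=\mathbf{0}$ and $\mathbf{z}_{Lb}^\top\mathbf{v}=\mathbf{z}_{Ls}^\top\mathds{1}=1$. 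Using $\Xi\Theta=\Theta\Xi$, the same substitution turns \eqref{eq:CTSFJ_bip} into $\dot{\mathbf{r}}=-((I-\Theta)L_s+\Theta)\mathbf{r}+\Theta\mathbf{r}(0)$, so $\mathbf{r}^\ast=P_{Ls}\mathbf{r}(0)$ and $\mathbf{x}^\ast=\Xi P_{Ls}\Xi\,\mathbf{x}(0)=P_{Lb}\mathbf{x}(0)$ with $P_{Lb}\mathbf{v}=\Xi P_{Ls}\mathds{1}=\mathbf{v}$; the social power $\mathbf{y}=P_{Lb}^\top\mathds{1}/n$ then follows from $\overline{x}^\ast=\mathds{1}^\top\mathbf{x}^\ast/n=(P_{Lb}^\top\mathds{1})^\top\mathbf{x}(0)/n$, and $\mathbf{y}^\top\mathbf{v}=\mathds{1}^\top P_{Lb}\mathbf{v}/n=\mathds{1}^\top\mathbf{v}/n$. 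For the concatenated SFJ model, $\mathbf{x}(s+1,0)=P_{Lb}\mathbf{x}(s,0)$ with $P_{Lb}^s=\Xi P_{Ls}^s\Xi\to\Xi(\mathds{1}\mathbf{p}_{Ls}^\top)\Xi=\mathbf{v}\,(\Xi\mathbf{p}_{Ls})^\top$ by Lemma~\ref{lemma_ct_signed_models}(3), hence $\mathbf{x}^\ast=\mathbf{p}_{Lb}^\top\mathbf{x}(0)\,\mathbf{v}$ with $\mathbf{p}_{Lb}=\Xi\mathbf{p}_{Ls}$; finally $\mathbf{p}_{Lb}^\top\mathbf{v}=\mathbf{p}_{Ls}^\top\mathds{1}=1$ and $\mathbf{p}_{Lb}=\Xi(I-\Theta)^{-1}\Theta\mathbf{z}_{Ls}=(I-\Theta)^{-1}\Theta\,\Xi\mathbf{z}_{Ls}=(I-\Theta)^{-1}\Theta\,\mathbf{z}_{Lb}$, again by $\Xi\Theta=\Theta\Xi$ and the unipartite identity. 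The possible negativity of the components of $\mathbf{y}$ is inherited from the unipartite case and, further, from the sign flips carried by $\Xi$.

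I do not anticipate a genuine obstacle here: the whole argument is conjugation by an involution that commutes with $\Theta$, and every claim reduces to Lemma~\ref{lemma_ct_signed_models}. What requires care rather than cleverness is (i) checking the SSPF/SPF transfer with the exact normalizations ($|v_i|=1$, $\rho(\cdot)=1$) demanded by Assumptions~\ref{assumption_SignedDeGroot_bip_cont}--\ref{assumption_concatSFJ_bip_cont}, and (ii) keeping track of which averaging convention fixes the normalization of $\mathbf{y}$ in each model --- the left eigenvector with $\mathbf{y}^\top\mathbf{v}=1$ for the signed DeGroot and concatenated SFJ models versus $\mathbf{y}=P_{Lb}^\top\mathds{1}/n$ (so only $\mathbf{y}^\top\mathbf{v}=\mathds{1}^\top\mathbf{v}/n$) for the SFJ model, the same asymmetry already present in the discrete-time Lemma~\ref{lemma:SPF-asympt} and carried over verbatim.
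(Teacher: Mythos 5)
Your proposal is correct and follows exactly the route the paper indicates: the paper gives no separate proof of Lemma~\ref{lemma_ct_bip}, stating only that it follows from the gauge equivalence $L_s=\Xi L_b\Xi$ and Lemma~\ref{lemma_ct_signed_models} via $\mathbf{r}(t)=\Xi\,\mathbf{x}(t)$, which is precisely the argument you carry out (with the conjugation identities $P_{Lb}=\Xi P_{Ls}\Xi$, the eigenvector correspondences $\mathbf{z}_{Lb}=\Xi\mathbf{z}_{Ls}$, $\mathbf{p}_{Lb}=\Xi\mathbf{p}_{Ls}$, and the SSPF/SPF transfer of the assumptions all checked correctly). Your two points of care --- the normalization transfer and the distinct averaging convention for the SFJ social power --- match the statement as given.
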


\subsection{Analysis of Wisdom of Crowd}
In both the unipartite and the bipartite case, the analysis of the wisdom of crowds in the CT setting follows the same pattern as in the DT case, as summarized in the following proposition.
\begin{proposition}
For the CT models \eqref{eq:CTDeGroot},~\eqref{eq:CTSFJ} and \eqref{eq:CTconcatSFJ}, the statements of Theorem~\ref{theorem_ind_unipartite} for unipartite case and Theorems~\ref{theorem_ind_bipartite_degroot},~\ref{theorem_ind_bipartite_SFJ} and~\ref{theorem_ind_bipartite_group} for the bipartite case still hold, provided that the social powers of the models are replaced by those given in Lemma~\ref{lemma_ct_signed_models} and~\ref{lemma_ct_bip}.
\end{proposition}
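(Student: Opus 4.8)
The plan is to observe that the proofs of Theorems~\ref{theorem_ind_unipartite}--\ref{theorem_ind_bipartite_group} are model-agnostic: they use only three structural facts, namely (i) the equilibrium average is a linear functional of the initial opinions, \(\overline{x}^\ast=\mathbf{y}^\top\mathbf{x}(0)\); (ii) the social power vector \(\mathbf{y}\) obeys a single affine normalization (\(\mathbf{y}^\top\mathds{1}=1\) in the unipartite case, and \(\mathbf{y}^\top\mathbf{v}=1\) or \(\mathbf{y}^\top\mathbf{v}=\mathds{1}^\top\mathbf{v}/n\) in the two bipartite cases); and (iii) the initial opinions have mean \(\zeta\mathds{1}\) and covariance \(\Sigma\). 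No entry of \(W_s\), \(W_b\), \(P_s\), or \(P_b\) is ever inspected beyond the identities \(P_s\mathds{1}=\mathds{1}\), \(P_b\mathbf{v}=\mathbf{v}\). Hence it suffices to check that the CT models supply the same three ingredients, with \(\mathbf{y}\) now read off from Lemmas~\ref{lemma_ct_signed_models} and~\ref{lemma_ct_bip}.

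First I would read off from Lemma~\ref{lemma_ct_signed_models} that in the unipartite CT case the three models converge to \(\mathbf{x}^\ast=\mathbf{z}_{Ls}^\top\mathbf{x}(0)\mathds{1}\), \(\mathbf{x}^\ast=P_{Ls}\mathbf{x}(0)\) with \(P_{Ls}\mathds{1}=\mathds{1}\), and \(\mathbf{x}^\ast=\mathbf{p}_{Ls}^\top\mathbf{x}(0)\mathds{1}\), so that in every case \(\overline{x}^\ast=\mathds{1}^\top\mathbf{x}^\ast/n=\mathbf{y}^\top\mathbf{x}(0)\) with \(\mathbf{y}^\top\mathds{1}=1\) (the closing line of the lemma). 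This is verbatim the hypothesis used in the proof of Theorem~\ref{theorem_ind_unipartite}, so mean accuracy \(\mathbb{E}[\overline{x}^\ast]=\mathbf{y}^\top\mathds{1}\,\zeta=\zeta\), the concentration region \(\Gamma_2\) (resp.\ \(\Gamma_6\) for dependent opinions), and the optimal vector \(\mathbf{y}^\ast=\Sigma^{-1}\mathds{1}/(\mathds{1}^\top\Sigma^{-1}\mathds{1})\) transfer unchanged. Likewise, Lemma~\ref{lemma_ct_bip} gives \(\mathbf{x}^\ast=\mathbf{z}_{Lb}^\top\mathbf{x}(0)\mathbf{v}\) and \(\mathbf{x}^\ast=P_{Lb}\mathbf{x}(0)\) with \(P_{Lb}\mathbf{v}=\mathbf{v}\), together with the normalizations \(\mathbf{y}^\top\mathbf{v}=1\) (DeGroot, concatenated SFJ) and \(\mathbf{y}^\top\mathbf{v}=\mathds{1}^\top\mathbf{v}/n\) (SFJ); substituting \(\overline{x}^\ast=\mathds{1}^\top\mathbf{x}^\ast/n\) reproduces the hypotheses of Theorems~\ref{theorem_ind_bipartite_degroot} and~\ref{theorem_ind_bipartite_SFJ} (population-based average) with regions \(\Gamma_3,\Gamma_4\) (resp.\ \(\Gamma_7,\Gamma_8\)), while substituting \(\overline{x}^\ast=\mathbf{v}^\top\mathbf{x}^\ast/n\) reproduces those of Theorem~\ref{theorem_ind_bipartite_group} with region \(\Gamma_5\) (resp.\ \(\Gamma_9\)); the optimal social powers come out identical in form. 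The constrained-optimization step is the same Lagrangian computation with a single linear constraint \(\mathbf{a}^\top\mathbf{y}=t\) used in those proofs and recalled in Proposition~\ref{proposition_center}.

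The one genuinely CT-specific point, where I would spend a sentence of care, is confirming that Lemmas~\ref{lemma_ct_signed_models} and~\ref{lemma_ct_bip} do deliver these limits. This rests on the translation identity \(L_s=\phi(I-W_s)\) noted before Assumption~\ref{assumption_SignedDeGroot_cont}: it converts the spectral-radius hypotheses of the DT analysis into spectral-abscissa (Hurwitz/EEP) hypotheses while leaving the relevant eigenspaces intact, since \(L_s^\top\mathbf{z}_{Ls}=\mathbf{0}\) is equivalent to \(W_s^\top\mathbf{z}_{Ls}=\mathbf{z}_{Ls}\), so \(\mathbf{z}_{Ls}\) is a scalar multiple of \(\mathbf{z}_s\), and analogously in the bipartite case (while \(P_{Ls}\), \(P_{Lb}\) inherit \(P_{Ls}\mathds{1}=\mathds{1}\), \(P_{Lb}\mathbf{v}=\mathbf{v}\) directly from their definitions). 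Since those lemmas are already stated and cited, no further work is needed, and I would present the proof as a single short paragraph: invoke Lemmas~\ref{lemma_ct_signed_models}--\ref{lemma_ct_bip} to obtain \(\overline{x}^\ast=\mathbf{y}^\top\mathbf{x}(0)\) with the stated normalization, then quote the proofs of Theorems~\ref{theorem_ind_unipartite}--\ref{theorem_ind_bipartite_group} (and their dependent-opinion analogues) verbatim. There is no real obstacle here; the content of the proposition is precisely this reduction to the already-established DT arguments.
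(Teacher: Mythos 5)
Your proposal is correct and matches the paper's approach: the paper gives no separate proof of this proposition, relying precisely on the reduction you describe, namely that the discrete-time arguments use only $\overline{x}^\ast=\mathbf{y}^\top\mathbf{x}(0)$, the single affine normalization of $\mathbf{y}$, and the statistics of $\mathbf{x}(0)$, all of which Lemmas~\ref{lemma_ct_signed_models} and~\ref{lemma_ct_bip} supply for the continuous-time models. Your extra check that $L_s=\phi(I-W_s)$ preserves the relevant eigenspaces (so $P_{Ls}\mathds{1}=\mathds{1}$ and $P_{Lb}\mathbf{v}=\mathbf{v}$ follow from $L_s\mathds{1}=\mathbf{0}$ and $L_b\mathbf{v}=\mathbf{0}$) is consistent with the paper's discussion preceding Assumption~\ref{assumption_SignedDeGroot_cont}.
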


\section{Conclusion}
\label{section:conclusion}
In this paper, we analyze the conditions under which the wisdom of crowds gets improved or undermined by a linear opinion dynamics model with antagonistic interactions. We show that the region of improvement is always larger than in the unsigned case because improvement can occur also for negative social powers. 
Moreover, we show that in a bipartite setting the crowd typically converges to a false truth, around which the variance tends to concentrate faster than in the unipartite case. The resulting signed social power provides also a natural framework to investigate the more general case of correlated initial opinions.

\bibliographystyle{IEEEtran}
\bibliography{References}

\end{document}